\spnewtheorem{fact}{Fact}{\bfseries}{\itshape}
\crefname{fact}{fact}{facts}
\Crefname{fact}{Fact}{Facts}
\newcommand{\circuitscale}{0.85}
\let\lncs@origmaketitle\maketitle
\renewcommand{\maketitle}{%
  \begingroup
    \let\addcontentsline\@gobblethree 
    \lncs@origmaketitle
  \endgroup
}
\begin{document}
\title{Scalable, quantum-accessible, and adaptive pseudorandom quantum state and pseudorandom function-like quantum state generators}
\titlerunning{Scalable, quantum-accessible, and adaptive PRS \& PRFS}

%
\author{Rishabh~Batra\inst{1}\orcidID{0000-0002-5829-2304} \and
Zhili~Chen\inst{1}\orcidID{0009-0002-2155-5817} \and
Rahul~Jain\inst{1,2,3}\orcidID{0000-0002-3649-6576} \and
YaoNan~Zhang\inst{1}\orcidID{0009-0009-3760-9729}}

\authorrunning{R. Batra et al.}

\institute{Centre for Quantum Technologies, Singapore
\\
\email{\{rishabhbatra,chen.zhili,zhang.yaonan\}@u.nus.edu}\\
\and
Department of Computer Science, National University of Singapore\\
\email{rahul@comp.nus.edu.sg}\\
\and
MajuLab, UMI 3654, Singapore
}
\maketitle              

\begin{abstract}

We show new constructions for pseudorandom quantum states (PRS) and pseudorandom function-like quantum state (PRFS) generators satisfying
\textbf{scalability}, which means the security parameter can be much larger than the number of qubits,
\textbf{quantum accessibility}, which means the adversary can provide quantum input,
and \textbf{adaptivity}, which means the adversary can query it adaptively.

We present an isometric procedure to prepare quantum states that can be arbitrarily random (i.e., the trace distance from the Haar-random state can be arbitrarily small for the true random case, or the distinguishing advantage can be arbitrarily small for the pseudorandom case).
This naturally gives the first construction for scalable, quantum-accessible, and adaptive PRFS assuming quantum-secure one-way functions.
Compared to prior PRFS works, we use a stronger definition of quantum accessibility in which the adversary can be ancilla-assisted, i.e., the input state may not be pure and could be entangled with other quantum registers.
Thus, our result also gives the first (fully) quantum-accessible PRFS.

Our PRFS construction implies various primitives, including long-input PRFS, short-input PRFS, short-output PRFS, non-adaptive PRFS, and classically-accessible adaptive PRFS \cite{AQY21,AGQY22}. This new construction may be helpful in simplifying the microcrypt zoo.

\keywords{Quantum cryptography \and Pseudorandom quantum state.}
\end{abstract}

\newpage

\begingroup
\let\clearpage\relax
\setcounter{tocdepth}{3}
\hypersetup{hidelinks}
\renewcommand{\contentsname}{}
\tableofcontents
\endgroup


\vspace{3em}

\section{Introduction}
Randomness has played an indispensable role in various fields of computer science including cryptography, algorithms, and complexity theory. 
For practical applications, however, using true randomness may be very expensive and difficult to produce. Storing the complete description of a truly random object may be very inefficient. For example, to merely specify a binary truly random function on $n$-bit inputs, we require an exponentially large description. 

An object is called pseudorandom if we can generate it efficiently, but an efficient distinguisher cannot distinguish it from a truly random object. Using pseudorandomness enables us to design efficient protocols that require only short random seeds. In classical cryptography, pseudorandomness has been fundamental for the construction of efficient protocols that are secure against attacks by efficient adversaries. 

Let $U_n$ represent the uniform distribution on $n$ bits. A pseudorandom generator (PRG) is an efficiently computable function, $\PRG: \{0,1\}^n \rightarrow \{0,1\}^{l(n)}$ such that  $l(n)>n$ and $\PRG(U_n)$ is computationally indistinguishable from $U_{l(n)}$. A one-way function (OWF) is a function that is efficiently computable and hard to invert. 
The existence of one-way functions is a widely used computational assumption in classical cryptography. A pseudorandom function family (PRF) is a collection of efficiently computable functions that are computationally indistinguishable from a truly random function. It is known that in the classical world, OWF, PRG, and PRF are equivalent to each other~\cite{Goldreich_2001}.

\subsection*{Motivations}
In the quantum world, however, the story is not as straightforward. Ji, Liu, and Song~\cite{JLS18} first introduced the notion of pseudorandom quantum states (PRS), which are efficiently generatable pure quantum states that are computationally indistinguishable from truly random (Haar-random) quantum states — even when an adversary has access to multiple copies of the state. PRS can be thought of as a generalization of PRGs with quantum outputs. Similarly, pseudorandom function-like quantum state (PRFS) generators, first introduced by Ananth, Qian, and Yuen \cite{AQY21}, are the quantum analogue of PRF.

Morimae and Yamakawa \cite{Morimae_2022} showed that quantum commitments can be based on PRS (or PRFS).  It is known that PRS (and PRFS) can exist even if BQP = QMA (relative to a quantum oracle)~\cite{Kre} or if P = NP (relative to a classical oracle)~\cite{KSTQ}, which does not allow for the existence of one-way functions (relative to these oracles). Hence, these are potentially weaker objects than OWF, from which we can construct quantum commitments, oblivious transfer, and multi-party computation~\cite{BCQ,OT_GLSV,OT_BCKM,AQY21}.
Therefore, there is a natural interest in novel and efficient constructions of PRS and PRFS.    


The following are some desirable properties of the PRS and PRFS constructions:
\begin{itemize}
    \item \textbf{Scalability:} A PRS (or PRFS) generator is \textit{scalable} if the security parameter $\lambda$ and the output size $n$ (and the input size $m$ for PRFS) can be chosen independently.
    In other words, the distinguishing advantage can be arbitrarily small as long as we choose a sufficiently large $\lambda$, regardless of the state size.
    {Scalability} is not automatic in the quantum setting.
    Classically, an $n$-bit (pseudo)random string allows the use of any $k$-bit substring as a $k$-bit (pseudo)random string with $k < n$.
    However, in the quantum case, even if we already have an $n$-qubit Haar-random state, the first $k$ qubits (while tracing out the rest) are typically very different from a $k$-qubit Haar-random state. 
    Therefore, knowing how to create a longer pseudorandom object does not automatically provide a shorter one.
    Bouaziz-Ermann and Muguruza~\cite{BEM24} show that quantum pseudorandomness cannot be shrunk in a black-box way. 
    \item \textbf{Adaptivity:} For PRFS, it is desirable to have \textit{adaptive} security, i.e., the adversary can query it adaptively. The first PRFS construction~\cite{AQY21} guaranteed only selective (non-adaptive) security. 
    \item \textbf{Quantum accessible/superposition query:} A PRFS is quantum accessible if the adversary can provide quantum (superposition) input (potentially entangled with ancilla registers) into it.
    Ideally, a PRFS query should keep any global pure state (including in ancilla registers) a pure state; that is, there is no loss of information on performing the query. Such a PRFS generator would be an \textit{isometry}.
\end{itemize}



\begin{remark}[Isometry vs quantum-accessibility]
We would like to clarify that the quantum accessible property is the same as the isometric property for PRFS.
That is, ideal map W should be the isometry as mentioned in our \Cref{def:prfs}.
$$\sum \alpha_x 
    \ket{x}  \mapsto \sum \alpha_x 
    \ket{x} 
    \ket{\psi_x}.
$$
If it is not, then some input pure superposition query state would become a mixed state after the action of the ideal map. This would lose (to an extent) the (distinguishing) advantage that the adversary wanted to derive using the pure superposition query.

As such, it is not needed that the efficient construction V should be an isometry; it just needs to be indistinguishable from W by a QPT adversary. However, V being an isometry may have additional advantages of its own. For example, in our construction of PRS, V is an isometry as opposed to the construction by \cite{BS20}. This difference crucially allows us to extend it to a PRFS construction in a natural fashion by using separate randomness for each input $x$.


\end{remark}

\subsection*{Our contribution}
In this work, we present an isometric and easy-to-implement procedure to prepare quantum states that can be arbitrarily random (i.e., the trace distance from the Haar-random state can be arbitrarily small for the true random case, or the distinguishing advantage can be arbitrarily small for the pseudorandom case).
This procedure provides \textbf{a new method for scalable and isometric PRS}.

\begin{theorem}[Informal]
    Assuming a quantum-secure PRF, there exists a scalable and isometric PRS.
\end{theorem}

Once we have a scalable and isometric PRS construction, we achieve a PRFS construction naturally by using independent randomness for different inputs. This yields \textbf{the first construction for scalable, quantum-accessible, and adaptive PRFS}.
\begin{theorem}[Informal]
    Assuming a quantum-secure PRF, there exists a scalable, quantum accessible, and adaptive PRFS.
\end{theorem}

\begin{figure}[ht]
\resizebox{\textwidth}{!}{
 \begin{tikzpicture}[
    transform shape,
    node distance = 1.5em,
    circ/.style = {ellipse, draw, align=center, minimum size=1cm},
    arr/.style  = {-{Latex[length=2mm]}, thick}
]

\node[circ] (owf) {OWF};

\node[circ, below=of owf] (sprfs) {\textbf{Scalable, Quantum-Accessible, and Adaptive PRFS}};

\node[circ, below left=2em and 3em of sprfs] (a) {(Selective) Short Input PRFS};
\node[circ, below right=0 and 5em of a] (b) {Long Input PRFS};
\node[circ, above right=0 and 8em of b] (c) {Short Output PRFS};
\node[align=center, right=6em of c] (dots) {...};

\draw[arr] (owf) -- (sprfs);

\draw[arr] (sprfs) -- (a);
\draw[arr] (sprfs) -- (b);
\draw[arr] (sprfs) -- (c);
\draw[arr] (sprfs) -- (dots);

\node[circ, below=2em of a] (si-imply) {PRS, SB-QCOM, ...};
\node[circ, below=2em of b] (li-imply) {CCA1-qPKE with quantum ciphers, \\ ...};
\node[circ, below=2em of c] (so-imply) {PD-PRF, Pseudo-encryption ...};

\draw[arr] (a) -- (si-imply);
\draw[arr] (b) -- (li-imply);
\draw[arr] (c) -- (so-imply);

\end{tikzpicture}
}
\caption{Our PRFS construction implies various other primitives.}
\label{fig:imply-others}
\end{figure}

Combining with the fact that quantum-secure OWF implies quantum-secure PRF~\cite{Zha21}, our work shows that quantum-secure OWF implies scalable, quantum-accessible, and adaptive PRFS. 
Our PRFS construction implies various primitives, including long-input PRFS, short-input PRFS, short-output PRFS, non-adaptive PRFS, and classical-accessible adaptive PRFS \cite{AQY21,AGQY22}.
This gives a unified construction of PRFS, and may be helpful in simplifying the microcrypt zoo
\footnote{\url{https://sattath.github.io/microcrypt-zoo/}.}.

Since short-input PRFS implies SB-QCOM and quantum pseudo-encryption \cite{AGQY22}, long-input PRFS implies CCA1-qPKE with quantum ciphers \cite{BGH+23}, short-output PRFS implies PD-PRF \cite{ALY23}, and so on, our new construction can directly imply all these primitives. If a black-box separation between OWF and scalable PRFS can be shown in the future, this would suggest a potential cryptography world without OWFs, which encompasses a richer set of primitives than one starting from PRS.

\subsection*{Previous works and our novelty}

\begin{enumerate}
    \item {\bf Novelty over Ji, et al.~\cite{JLS18} and Brakerski and Shmueli~\cite{BS19} (PRS)}: PRS was first introduced by \cite{JLS18}, and they provide the first construction by applying a random phase to each computational basis element:
    \[
        \ket{\phi_k} = \frac 1 {\sqrt{N}} \sum_{z=0}^{N-1} \left(e^{2\pi i / N}\right)^{f_k(z)} \ket{z}.
    \]
    Later, \cite{BS19} proved that the binary phase state is also pseudorandom:
    \[
        \ket{\phi_k} = \frac 1 {\sqrt{N}} \sum_{z=0}^{N-1} (-1)^{f_k(z)} \ket{z}.
    \]
    These elegant constructions guarantee that $t$-copies of such a state are close to $t$-copies of the Haar-random state for polynomially bounded $t$ when the number of qubits is sufficiently large. However, since these constructions are actually very different from the true Haar-random state, they are not scalable: There exists a size-dependent distance (distinguishing advantage) that cannot be reduced. Thus, the size $n$ is required to be in $\Omega(\log \lambda)$ for security parameter $\lambda$.
    Our construction, on the other hand, is scalable and has no such restriction.

    \item {\bf Novelty over Brakerski and Shmueli~\cite{BS20} (scalable PRS)}: The first scalable PRS generator construction is given by \cite{BS20}.
    They prepare a state whose coefficients follow a normalized Gaussian random vector by \emph{quantum rejection sampling}. This approach prevents them from obtaining an isometric PRS. Their construction does not give a pure state output and has some entangled junk that needs to be traced out, and hence, it is not clear how to extend such a PRS to the construction of a quantum-accessible PRFS.
    
    Our construction is also based on this well-known fact that the distribution of each coefficient in a Haar-random state follows a complex Gaussian distribution (up to normalization), but we use a very different approach.
    Our key step is similar to the Grover-Rudolph algorithm \cite{GR02}: We run a Beta sampling algorithm in superposition and apply controlled rotations qubit by qubit. Then classical results relating Beta distribution, Gamma distribution, Chi-squared distribution, and Gaussian distribution give us the desired state.
    This gives us an isometric construction, and naturally leads to our PRFS construction.

    \item {\bf Novelty over Ananth et al. \cite{AQY21,AGQY22} (PRFS)}: The initial PRFS construction introduced by \cite{AQY21} guaranteed only selective security. Subsequently, Ananth, Gulati, Qian, and Yuen~\cite{AGQY22} gave new constructions for classically-accessible and quantum-accessible adaptive PRFS (secure against pure input), which are also not scalable.
    We present the first PRFS construction which is simultaneously quantum-accessible, adaptive, and scalable.
    
    \item {\bf Novelty over Lu et al. \cite{Scramblers} (PRSS)}: A new primitive called quantum pseudorandom scramblers (PRSS) was introduced by \cite{Scramblers}, which also provides a scalable and isometric PRS generation procedure.
    In their Appendix, they show that PRSS can imply classically-accessible selective PRFS (without invoking OWF). They also claim adaptive and quantum-accessible security for logarithmic-size inputs, using the fact that superposition queries provide no additional advantages when the output state is known for every input string.

    Our focus differs: They show that their new primitive implies (limited) PRFS in a black-box way. We construct PRFS that satisfies all the desired properties using both PRS and PRF \footnote{Note that, there is no known black-box way to construct PRFS with a PRS oracle and a PRF oracle. We at least need (quantum) access to the key of the PRS oracle.
    }, which itself is a new primitive and implies all prior limited PRFS primitives.
    This combination, using independent randomness for different inputs in the PRS construction, is natural, but the security (against quantum queries) is not automatic. We explicitly allow the adversary to use quantum ancilla registers (see more details below). This requires us to bound the {\em diamond norm distance}, which we relate to trace norm via a dimension-dependent loss \footnote{
        This loss is exponential in the input size, which is a fixed polynomial, but independent of the size of ancilla registers held by the adversary, which is an unbounded polynomial. That's why we can bound the final distinguishing advantage.
    }, making crucial use of the scalability of our PRS construction.

    Subsequent to our work appearing on ArXiv, in a personal communication from the authors of~\cite{Scramblers}, it was brought to our notice that, similar to the construction in this work, assuming in addition a quantum-secure PRF, one can sample a key for each $x$ by the PRF and obtain a scalable PRFS for arbitrarily long inputs.
    This construction is secure against adversaries using classical ancillary bits via straightforward triangle inequalities. However, it is currently unclear whether the construction remains secure when the ancillary qubits may be entangled with the input register.
    \footnote{
        Chuhan Lu, Minglong Qin, Fang Song, Penghui Yao, Mingnan Zhao.
        Personal communication.
    }.
    
    \item {\bf Novelty over all of the above:} For a quantum-accessible pseudorandom object, we naturally expect the adversary to be ancilla-assisted, i.e., the input may not be pure, but entangled with some other register held by the adversary. The recent constructions of pseudorandom unitaries (PRUs) by Ma and Huang \cite{PRU} indeed allow the adversary to use ancilla registers. No other work allows this in PRFS to the best of our knowledge.
    We give \textbf{the first (fully) quantum-accessible PRFS}.
    
    Since scalable PRU is still open, to the best of our knowledge, our result is the first quantum pseudorandom object that is scalable and secure against ancilla-assisted quantum adversaries.

    \item {\bf Difference with PRU:} PRU is only known to imply a PRFS with $m \leq n$. In the classical world, given a PRF ensemble $\{f_k: \{0, 1\}^{l(\lambda)} \to \{0, 1\}^{r(\lambda)} \}$, one can straightforwardly derive another PRF ensemble with different input/output lengths $\{f_k: \{0, 1\}^{l'(\lambda)} \to \{0, 1\}^{r'(\lambda)} \}$. Consequently, we typically do not distinguish between PRF and PRF with specific input/output lengths.
    However, this is not true for PRFS.

    Non-adaptive PRU was given by Metger et al.~\cite{nonaptive_PRU}, and adaptive PRU by \cite{PRU}. However, constructing a scalable PRU remains an open question. Since gluing small PRUs has recently become a popular technique for constructing low-depth PRU~\cite{log_dep_PRU,const_dep_PRU}, scalability appears to be a crucial property to improve the result \footnote{Currently known constructions either require the size of each small PRU to be $\omega(\log n)$, or use log-size small PRUs but achieve only inverse-polynomial security.}.

    Furthermore, even given a scalable PRU, it is unclear how to derive a scalable PRFS. This is similar to the classical case: A PRP over a small domain can be distinguished from PRF.

\item {\bf Difference with Grover-Rudolph \cite{GR02}:} To prepare the target state, we develop a new technique inspired by the Grover-Rudolph algorithm, yet distinct in its approach.
In the Grover-Rudolph algorithm, if a quantum state with given coefficients is to be created, qubit-by-qubit rotations are done by computing the ratio between the sum of the coefficients. In our case, we are creating a random quantum state with normalized Gaussian vector coefficients by performing qubit-by-qubit rotations using Beta sampling algorithms, feeding prefixes in superposition as randomness seeds.

This provides a general framework for efficiently sampling a random quantum state, as long as the distribution of the ratio between partial coefficient sums is independent and efficiently samplable. This may prove useful for other applications in quantum cryptography and algorithms.

\end{enumerate}

In \Cref{tab:prs-comparison} and \Cref{tab:prfs-comparison}, we compare our work with prior PRS and PRFS constructions.

\begin{table}[!ht]
\centering
\footnotesize
\caption{Comparison with prior PRS constructions.}
\label{tab:prs-comparison}
\begin{tabular}{||c c c c||}
 \hline
 Work by & PRS Construction & Scalable & Isometry \\ [0.5ex] 
 \hline\hline
 \cite{JLS18} & Random phase & No & Yes \\ 
 \hline
 \cite{BS19} & Binary phase & No & Yes \\
 \hline
 \cite{BS20} & Quantum rejection sampling & Yes & No \\
 \hline
 \cite{Scramblers} & Kac's walk & Yes & Yes \\
 \hline
 This work & Qubit-by-qubit rotation & Yes & Yes \\ 
 \hline
\end{tabular}
\end{table}

\begin{table}[!ht]
\centering
\footnotesize
\caption{Comparison with prior PRFS constructions.}
\label{tab:prfs-comparison}
\begin{tabular}{||c c c c c||} 
 \hline
 Work by & Scalable & Adaptive & Quantum-accessible & Ancilla-assisted adversary\\ [0.5ex] 
 \hline\hline
 \cite{AQY21} & No & No & No & Not applicable \\ 
 \hline
 \cite{AGQY22} & No & Yes & Yes & Unknown \\
 \hline
 \cite{Scramblers}(Lemma 16) & Yes & No & No & Not applicable \\
 \hline
 \cite{Scramblers}(Lemma 17) & No & Yes & Yes & Unknown \\
 \hline
 This work & Yes & Yes & Yes & Yes \\ 
 \hline
\end{tabular}
\end{table}
 
\subsection*{Future work}

Using ideas similar to our construction of PRS and PRFS, it may be possible to construct more scalable quantum pseudorandom primitives, such as pseudorandom isometries and pseudorandom unitaries, which is a major open question.

As shown by Ananth et al.~\cite{AGQY22}, using the technique of ``verifiable tomography'', they are able to obtain a bit-commitment protocol with classical communication from a scalable PRS. For this application, the scalability of the PRS is important. It is conceivable that, using similar arguments, the technique of verifiable tomography may help in getting classical communication analogues of some other primitives, e.g., MAC with quantum tags, qPKE with quantum ciphers, etc., from our scalable PRFS. These are currently obtained from long-input PRFS.

Another interesting topic is the relationship between OWF and scalable PRFS (see \Cref{fig:imply-others}).
As mentioned before, if a black-box separation can be shown, this would suggest a potential cryptography world without OWF, which encompasses a richer set of primitives than one starting from PRS.
On the other hand, if we can show that OWF and scalable PRFS are equivalent, this would significantly reshape our understanding of quantum cryptographic primitives.

\section{Technical overview}

An asymptotically random state (ARS) generator is a statistical notion of PRS that generates states that are statistically indistinguishable from Haar-random states but requires (quantum) access to an exponential amount of randomness. Then, replacing the exponentially large random string with a quantum-secure pseudorandom function (PRF), a PRS construction naturally follows from an ARS construction.

We now provide an outline of how we achieve a scalable ARS using a $\C \to \C^{2^n}$ isometric procedure, given quantum access to a random function.
The output state is exactly the Haar-random state in the idealized setting with infinite precision and a perfect sampling algorithm, and has negligible error with finite precision.
We use a similar technique to construct scalable and adaptive asymptotically random function-like quantum state (ARFS) generators and PRFS generators.

\subsection{Random amplitudes quantum state}

From \Cref{fact:haar}
we know that for a Haar-random state, its coefficients follow the distribution of the normalized standard Gaussian random vector (\Cref{def:gaussian-vector}). We want to construct a state
\[
    \ket{\phi} = \frac 1 {\sqrt{\sum_{z \in [N]} A_z^2 + B_z^2}} \sum_{z \in [N]} (A_z + i B_z) \ket{z},
\]
where $N=2^n$ the size of Hilbert space, and $A_z, B_z \sim \cN(0, 1)$ i.i.d.
To do this, we first consider the amplitudes $\sqrt{C_z} = \sqrt{A_z^2 + B_z^2}$ and create the following state (we call it the random amplitudes quantum state):
\[
    \ket{\xi} = \frac 1 {\sqrt{\sum_{z \in [N]} C_z}} \sum_{z \in [N]} \sqrt{C_z} \ket{z}.
\]
Each $C_z$ follows the $\chi^2_2$ distribution if $A_z, B_z \sim \cN(0, 1)$.
Once we can prepare such a state, we can obtain the Haar-random state by applying a random phase $e^{2\pi i U_z}$, where $U_z \sim \cU(0, 1)$, to each $\ket{z}$ (\Cref{lma:phase-gaussian}):
\[
    \ket{\phi} = \frac 1 {\sqrt{\sum_{z \in [N]} C_{z}}} \sum_{z \in [N]} \sqrt{C_z} e^{2\pi i U_z} \ket{z}. 
\]
Merging all the $\ket{z}$ starting with $0$ and all the $\ket{z}$ starting with $1$, respectively, we can represent $\ket{\xi}$ as
\[
    \ket{\xi} = \frac 1 {\sqrt{W_0 + W_1}} ( \sqrt{W_0} \ket{0}\ket{\xi_0} + \sqrt{W_1} \ket{1}\ket{\xi_1}),
\]
where $W_0 = \sum_{z: z[0] = 0} C_z$ and $W_1 = \sum_{z: z[0] = 1} C_z$.
If each $C_i \sim \chi^2_2$,
then $W_0, W_1 \sim \chi^2_{2^n}$.
This shows a very straightforward way to make the ``weight'' of the first qubit correct: Sample $w_0, w_1$ from $\chi^2_{2^n}$, and rotate the first qubit:
\[
    \ket{0} \to \frac 1 {\sqrt{w_0 + w_1}} (\sqrt{w_0} \ket{0} + \sqrt{w_1} \ket{1}). 
\]

We can continue to make the ``weight'' of the first two qubits correct.
Let $W_{ab} = \sum_{z: z[0, 1] = ab} C_z$ for $a, b \in \{0, 1\}$.
Similarly, we represent $\ket{\xi_0}$ as
\[
    \ket{\xi_0} = \frac 1 {\sqrt{W_{00} + W_{01}}} ( \sqrt{W_{00}} \ket{0}\ket{\xi_{00}} + \sqrt{W_{01}} \ket{1}\ket{\xi_{01}}),
\]
and represent $\ket{\xi_1}$ as
\[
    \ket{\xi_1} = \frac 1 {\sqrt{W_{10} + W_{11}}} ( \sqrt{W_{10}} \ket{0}\ket{\xi_{10}} + \sqrt{W_{11}} \ket{1}\ket{\xi_{11}}).
\]
Since each $C_z \sim \chi^2_2$, we have $W_{a b} \sim \chi^2_{2^{n-1}}$. We can sample $w_{00}, w_{01}, w_{10}, w_{11}$ from $\chi^2_{2^{n-1}}$ under the condition that $w_{00} + w_{01} = w_0$ and $w_{10} + w_{11} = w_1$, and rotate the second qubit controlled on the first qubit:
\begin{align*}
    &\frac 1 {\sqrt{w_0 + w_1}} (\sqrt{w_0} \ket{0} + \sqrt{w_1} \ket{1}) \ket{0} \\
    &\to
    \frac 1 {\sqrt{w_{00} + w_{01} + w_{10} + w_{11}}} (\sqrt{w_{00}} \ket{00} + \sqrt{w_{01}} \ket{01} + \sqrt{w_{10}} \ket{10} + \sqrt{w_{11}} \ket{11}).
\end{align*}

\Cref{fact:gamma-beta} tells us if $W_0, W_1 \sim \chi^2_{2^n}$ independently, then $W_0 / (W_0 + W_1) \sim \dBeta(2^{n-1}, 2^{n-1})$ and is independent of $W_0 + W_1$. This makes the above task easier. The overall process (for making the ``weight'' of the first two qubits correct) is as follows:
\begin{enumerate}
    \item Initialize $\ket{0}^{\otimes n}$.
    \item Sample $b \gets \dBeta(2^{n-1}, 2^{n-1})$ and rotate the first qubit:
    \[
        \ket{0} \to \sqrt{b} \ket{0} + \sqrt{1-b} \ket{1} .
    \]
    \item Sample $b_0, b_1 \gets \dBeta(2^{n-2}, 2^{n-2})$ and rotate the second qubit controlled on the first qubit:
    \[
        \ket{00} \to \ket{0}(\sqrt{b_0}\ket{0} + \sqrt{1-b_0}\ket{1}),
    \]
    \[
        \ket{10} \to \ket{1}(\sqrt{b_1}\ket{0} + \sqrt{1-b_1}\ket{1}).
    \]
\end{enumerate}

Doing this bit by bit, we can obtain the random amplitudes (see \Cref{lma:beta-ramp} for details).
If we directly apply the above procedure, we need to sample $2^t$ Beta random variables and do $2^t$ controlled rotations for the $t$-th qubit, but we can do this in superposition, which makes the procedure efficient.
This is inspired by the Grover-Rudolph algorithm \cite{GR02}.
While the Grover-Rudolph algorithm uses each prefix to compute the conditional probability mass, we use the prefix as a random seed to sample an independent Beta random variable.

Assume that we have a deterministic classical Beta sampling algorithm $\alg_B(k, z)$ which samples $b \gets \dBeta(2^k, 2^k)$ with random seed $z$ (with finite precision). We use its reversible quantum version $U_B^{(k)}: \ket{z}\ket{0} \mapsto \ket{z}\ket{b_z}$ to compute an independent Beta random variable $b_z$ (with finite precision) for each prefix $z$ and store in an ancilla qubit, and then apply a rotation controlled on this ancilla qubit.
These are described in \Cref{alg:ramp}.
We need two things for such $\alg_B(k, z)$: A random function to convert the random seed $z$ to randomness $r_z$, and a Beta sampling algorithm which uses randomness $r_z$.

In the Appendix, we show a classical sampling algorithm we used for the random amplitudes procedure. While this algorithm is well-known and widely used in practice (for example, in the NumPy library), 
we provide a rigorous proof of its correctness and show that the error is negligible when truncating to finite precision.

\begin{figure}[!ht]
\resizebox{\textwidth}{!}{    
\begin{tikzpicture}[
    grow'=right,level distance=110pt, sibling distance=12pt,
    every node/.style={font=\footnotesize}, every tree node/.style={font=\normalsize},
    edge from parent/.style={draw, ->}
]
\Tree [.$\ket{000}$
    \edge node[auto=left]{$\sqrt{b_{0, \emptyset}}$}; [.$\ket{000}$
        \edge node[auto=left]{$\sqrt{b_{1, 0}}$}; [.$\ket{000}$
            \edge node[auto=left]{$\sqrt{b_{2, 00}}$}; $\ket{000}$
            \edge node[auto=right]{$\sqrt{1 - b_{2, 00}}$}; $\ket{001}$ ]
        \edge node[auto=right]{$\sqrt{1 - b_{1, 0}}$}; [.$\ket{010}$
            \edge node[auto=left]{$\sqrt{b_{2, 01}}$}; $\ket{010}$
            \edge node[auto=right]{$\sqrt{1-b_{2, 01}}$};$\ket{011}$ ] ]
    \edge node[auto=right]{$\sqrt{1-b_{0, \emptyset}}$}; [.$\ket{100}$
        \edge node[auto=left]{$\sqrt{b_{1, 1}}$}; [.$\ket{100}$
            \edge node[auto=left]{$\sqrt{b_{2, 10}}$}; $\ket{100}$
            \edge node[auto=right]{$\sqrt{1-b_{2, 10}}$}; $\ket{101}$ ]
        \edge node[auto=right]{$\sqrt{1-b_{1, 1}}$};[.$\ket{110}$
            \edge node[auto=left]{$\sqrt{b_{2, 11}}$}; $\ket{110}$
            \edge node[auto=right]{$\sqrt{1-b_{2, 11}}$}; $\ket{111}$ ] ]
]
\end{tikzpicture}
}
\caption{Random amplitudes procedure for 3 qubits.}
\label{fig:ramp-3-qubits}
\end{figure}

\subsection{(Pseudo)random quantum state}

Once we have the above random amplitudes state, denoted as $\ket{\xi} = \sum_{z=0}^{2^n-1} \alpha_z \ket{z}$,
we can get an ARS by applying a random phase to each $\ket{z}$.
We compute a uniformly random $u_z \in [2^\lambda]$ for each $z$ on an ancilla register, then apply the phase shift gate $P(2 \pi / 2^k)$ on $k$-th qubit of $u_z$ for all $k$, and finally uncompute $u_z$.
\begin{align*}
    \ket{\xi} \ket{0} =
    \sum_{z=0}^{2^n-1} \alpha_z \ket{z} \ket{0}
    &\to
    \sum_{z=0}^{2^n-1} \alpha_z \ket{z} \ket{u_z} \\
    &\to
    \sum_{z=0}^{2^n-1} \alpha_z \ket{z} \left(e^{2 \pi i u_z / 2^\lambda} \ket{u_z}\right)\\
    &\to 
    \sum_{z=0}^{2^n-1} \alpha_z e^{2 \pi i u_z / 2^\lambda} \ket{z} \ket{0}.
\end{align*}
To compute these $u_z$'s, we need quantum access to a random function.

In the idealized setting ($\lambda \to \infty$), the coefficient vector $(\alpha_z e^{2 \pi i u_z / 2^\lambda})_z$ is a normalized complex standard Gaussian random vector that corresponds to a Haar-random state.
However, it requires considerable effort (deferred to the Appendix) to show that, with finite precision, the trace distance is negligible.
Thus, for any adversary who queries this oracle polynomially many times, the distinguishing advantage is negligible (\Cref{thm:ars}). Finally, by replacing the random function with a quantum-secure PRF, we construct a PRS (\Cref{thm:prs}).

\subsection{(Pseudo)random function-like quantum state generator}

The next step is to modify the above ARS algorithm to accept an additional quantum state input $\ket{\phi} = \sum_x \alpha_x \ket{x}$ (More generally, the input could be $\rho = \Tr_E [\sum_x \alpha_x \ket{x}_X \ket{\theta_x}_E]$).
Now, the random function $f$ takes not only the original input but also $x$ as input. In other words, for classical input $\ket{x}$, $f(\cdot)$ is replaced by $f(x \| \cdot)$. Since the additional input state may be entangled with other registers, we need to bound the \textbf{diamond norm distance}. We do this by bounding the trace norm of its Choi representation, paying a factor of $2^m$, where $m$ is the input length. This is possible because our PRS construction is \textbf{scalable}. Let the output state of such (modified) ARS procedure be $\ket{\psi_x}$ on input $\ket{x}$, then for superposition input $\ket{\phi}$ the entire output state becomes $\sum_x \alpha_x \ket{x} \ket{\psi_x}$.

Since we know that $f(x \| \cdot)$ for each $x$ is also an independent random function, each $\ket{\psi_x}$ is an independent Haar-random state, in the idealized setting.
This isometry $\sum_x \alpha_x \ket{x} \mapsto \sum_x \alpha_x \ket{x} \ket{\psi_x}$ is a truly random function-like quantum state generator.
For finite precision and an imperfect sampling algorithm, the error in diamond norm distance is negligible, which bounds the maximal distinguishing probability of two quantum channels. Thus, for any adversary who queries this oracle polynomially many times, the distinguishing advantage is negligible (\Cref{thm:arfs}). Finally, by replacing the random function with a quantum-secure PRF, we construct a PRFS (\Cref{thm:prfs}).

\subsection*{Organization}
In \Cref{sec:Preliminaries}, we introduce the preliminaries and notations required for this work.
In \Cref{sec:ARS}, we show an isometric algorithm whose outputs can be arbitrarily close to Haar-random states, given (quantum) access to a truly random function.
In \Cref{sec:PRS}, we replace the truly random function with a quantum-secure pseudorandom function in the above algorithm, and achieve a scalable pseudorandom quantum state generator.
In \Cref{sec:ARFS}, we modify the above algorithm to accept an additional input state and achieve a scalable and adaptive (pseudo)random function-like quantum state generator.
In the Appendix, we show a classical algorithm for sampling from the rounded Beta distribution (with negligible statistical distance), which is used in the above algorithm, and provide proofs of some lemmas and theorems.

\section{Preliminaries}\label{sec:Preliminaries}
In this section, we introduce various notations, functions, and definitions we will require for our proofs.

\subsection{Notations}

Given $n \in \N^+$, we denote $\{0,1,\ldots,n-1\}$ by $[n]$. For $x \in \{0, 1\}^n$, we use $x[t]$ to denote the $t$-th bit of $x$, and use $x[1,\cdots, t]$ to denote the first $t$ bits of $x$. Specifically, if $t=0$, then $x[1, \cdots, t]$ is the empty string.

For a real number $x$, we use $\floor{x}$ for rounding down; i.e., the largest integer that does not exceed $x$. We use $\round{x}$ for rounding to the nearest integer (with rounding half up).
For a complex number $z$, we denote the real part and the imaginary part of $z$ by $\Re(z)$ and $\Im(z)$, respectively.

For a finite set $\mathcal{X}$, the notation $x \leftarrow \mathcal{X}$ signifies that the value $x$ is sampled uniformly randomly from $\mathcal{X}$. Given a distribution $D$ over $\mathcal{X}$, we use the notation $X \sim D$ to indicate that $X$ is a random variable distributed according to $D$.
In this paper, we always use uppercase letters for random variables and lowercase letters for their realizations/observed values, e.g., $X \sim D$ and $x \gets D$.
We use $\cU(a,b)$ for the continuous uniform distribution on $(a, b)$, and we use $X \sim \cU_n$ to denote that $X$ is a uniformly random variable over $\{0,1\}^n$.
We usually use "$:=$" for assignment, but for clarity, we use "$\gets$" when assigning the output of a probabilistic algorithm, e.g., $k \gets \KeyGen(1^\lambda)$, as this is equivalent to sampling from the distribution of outputs.

A function $\varepsilon(\lambda)$ is negligible if for all constant $c > 0$, $\varepsilon(\lambda) < \lambda^{-c}$ for large enough $\lambda$.
We denote the collection of polynomially bounded functions in $\lambda$ by $\poly(\lambda)$ and denote the collection of negligible functions in $\lambda$ by $\negl(\lambda)$. For $n(\lambda)$ a function of the security parameter $\lambda$, we may use the shorthand $n = n(\lambda)$ for brevity after it is defined.

Given sets $\cX$ and $\cY$, denote $\mathcal{Y}^{\mathcal{X}}$ as the set of all functions $\set{f : \mathcal{X} \rightarrow \mathcal{Y}}$.

Let $\cH$ be a Hilbert space. We denote $S(\mathcal{H})$ as the set of unit vectors/pure states in $\cH$. We use $L(\cH)$ to denote the set of linear operators from $\cH$ to $\cH$. We use $D(\cH)$ to denote the set of density operators acting on $\mathcal{H}$.
Given an operator $P \in L(\mathcal{H})$, its trace norm is denoted by $\norm{P}_1 \defeq \Tr\left(\sqrt{P^{\dag}P}\right)$. For a vector $\vec{v} \in \cH$, we denote its Euclidean norm as $\norm{\vec{v}}_2$.

A linear map $V: \C^n \to \C^m$ is an \textit{isometry} if $V^{\dag}V = \id_n$. An isometry $V$ is a \textit{unitary} when $n = m$.

\subsection{Probability distributions}

\begin{definition}[Statistical distance]
    Given two distributions $D_X$ and $D_Y$ over a finite set $S$, denote random variables $X \sim D_X$ and $Y \sim D_Y$. The statistical distance of $D_X$ and $D_Y$ is defined as
\[
\Delta(D_X,D_Y) \defeq \frac{1}{2}\sum_{s \in S}\abs{\Pr[X = s] - \Pr[Y = s]} = \max_{T \subseteq S} \abs{\Pr[X \in T] - \Pr[Y \in T]} .
\]
\end{definition}

\begin{definition}[Normal distribution]
    Given $\mu \in \R$ and $\sigma \in \R^+$, the normal (Gaussian) distribution $\mathcal{N}(\mu, \sigma^2)$ is a real-valued continuous probability distribution, with PDF
$$
    f(x) = \frac{1}{\sqrt{2\pi}\sigma} e^{-(x-\mu)^2/(2\sigma^2)},
    x \in \R.
$$
\end{definition}

The \textit{standard normal distribution} is the special case $\mathcal{N}(0, 1)$.

\begin{fact}[Box-Muller transform \cite{BM58}] \label{fact:box-muller}
    Suppose $X \sim \cU(0,1)$ and $Y \sim \cU(0,1)$ independently. Let
$$
Z_0 = \sqrt{-2 \ln X } \cos(2 \pi Y),
$$
and
$$
Z_1 = \sqrt{-2 \ln X } \sin(2 \pi Y).
$$
Then $Z_0 \sim \cN(0,1)$ and $Z_1 \sim \cN(0,1)$ independently.
\end{fact}

\begin{fact}[Mill's Inequality {\cite[p.~123]{CB02}}]\label{fact:mills}
    Let $X \sim \cN(0,1)$, then
    \[
        \Pr[\abs{X} > t]
            \leq {\sqrt{\frac 2 \pi}} \frac{e^{- \frac {t^2} {2}}}{t}
            < \frac{e^{- \frac {t^2} {2}}}{t}.
    \]
\end{fact}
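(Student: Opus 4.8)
The plan is to reduce the two-sided bound to a one-sided Gaussian tail integral and then estimate that integral by a one-line substitution trick. By symmetry of the density of $\cN(0,1)$ we have
\[
    \Pr[\abs{X} > t] = 2 \Pr[X > t] = \frac{2}{\sqrt{2\pi}} \int_t^{\infty} e^{-x^2/2}\, \dd x ,
\]
which is the quantity we must control for $t > 0$ (the statement is only meaningful in this range, since otherwise the right-hand side of the claimed bound is negative or undefined). The key observation is that on the domain of integration $x \geq t$, so $x/t \geq 1$, and hence multiplying the integrand by $x/t$ only increases it:
\[
    \int_t^{\infty} e^{-x^2/2}\, \dd x \leq \frac{1}{t} \int_t^{\infty} x\, e^{-x^2/2}\, \dd x = \frac{1}{t} \left[ -e^{-x^2/2} \right]_t^{\infty} = \frac{e^{-t^2/2}}{t} ,
\]
where I used that $x e^{-x^2/2}$ has the elementary antiderivative $-e^{-x^2/2}$, together with $e^{-x^2/2} \to 0$ as $x \to \infty$. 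Substituting back gives $\Pr[\abs{X} > t] \leq \frac{2}{\sqrt{2\pi}} \cdot \frac{e^{-t^2/2}}{t} = \sqrt{\tfrac{2}{\pi}}\, \frac{e^{-t^2/2}}{t}$, which is the first inequality; the second, strict inequality is immediate from $\sqrt{2/\pi} = \sqrt{0.6366\ldots} < 1$.

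There is essentially no obstacle here: the argument is elementary calculus. The only points worth a moment of care are that $t > 0$ must be assumed throughout — positivity of $t$ is used both to justify $x/t \geq 1$ on the tail and to divide by $t$ — and that the factor $2$ from symmetry is carried correctly so that $2/\sqrt{2\pi}$ simplifies to $\sqrt{2/\pi}$. If one preferred to avoid the substitution, an equivalent route is integration by parts on $\int_t^{\infty} e^{-x^2/2}\, \dd x$ after writing $e^{-x^2/2} = \tfrac{1}{x}\cdot x e^{-x^2/2}$, which yields the same upper bound (and, as a bonus, a matching lower bound), but the substitution above is the shortest path to exactly the stated inequality.
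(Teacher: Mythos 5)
Your proof is correct: the symmetry reduction to a one-sided tail, the bound $e^{-x^2/2} \leq \frac{x}{t} e^{-x^2/2}$ on $x \geq t > 0$, and the exact integration of $x e^{-x^2/2}$ give precisely $\sqrt{2/\pi}\, e^{-t^2/2}/t$, and the strict inequality follows from $\sqrt{2/\pi} < 1$. The paper itself does not prove this fact but cites it (Casella--Berger); your argument is the standard textbook derivation of Mill's inequality, and your remark that $t > 0$ must be assumed is consistent with how the fact is used in the paper (there $t = 2m$ with $m \in \N^+$).
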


\begin{definition}[Gamma distribution]
    Given $\alpha,\theta \in \R^+$, the \textit{Gamma distribution}, denoted by $\dGamma(\alpha, \theta)$, is a continuous probability distribution with PDF
\[
    f(x) = \frac{1}{\Gamma(\alpha) \theta^\alpha} x^{\alpha - 1} e^{-x/\theta},
    x \in [0, \infty),
\]
where $\Gamma(\alpha)$ is the Gamma function, that is,
\[
    \Gamma(\alpha) = \int_0^\infty t^{\alpha - 1} e^{-t} \, dt.
\]
\end{definition}

\begin{fact}[Stirling's formula {\cite[Theorem 5.7.4]{DS12}}]\label{fact:stir}
    \[
        \Gamma(\alpha) = \sqrt{\frac {2 \pi} \alpha} \paren{\frac \alpha e}^\alpha \paren{1 + O\paren{\frac 1 \alpha}} .
    \]
\end{fact}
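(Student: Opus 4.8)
Since this statement is quoted verbatim from \cite{DS12}, the expected ``proof'' is simply the citation; I nonetheless sketch a self-contained derivation via Laplace's method applied to the integral representation $\Gamma(\alpha) = \int_0^\infty t^{\alpha-1} e^{-t}\,dt$. The first step is to rescale: substituting $t = \alpha u$ gives
\[
    \Gamma(\alpha) = \alpha^\alpha \int_0^\infty e^{-\alpha \phi(u)}\,\frac{du}{u}, \qquad \phi(u) := u - \ln u .
\]
The phase $\phi$ has a unique minimum at $u = 1$, with $\phi(1) = 1$, $\phi'(1) = 0$, and $\phi''(1) = 1$, so the integrand concentrates in a window of width $\Theta(\alpha^{-1/2})$ around $u = 1$. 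Pulling out $e^{-\alpha\phi(1)} = e^{-\alpha}$ and substituting $u = 1 + v/\sqrt{\alpha}$ converts the central part of the integral into $\alpha^{-1/2}\int e^{-v^2/2}(1 + o(1))\,dv = \sqrt{2\pi/\alpha}\,(1+o(1))$, already yielding the leading term $\Gamma(\alpha) = \sqrt{2\pi/\alpha}\,(\alpha/e)^\alpha(1 + o(1))$.

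To sharpen $o(1)$ to $O(1/\alpha)$ I would split the integral at $|u-1| = \delta$ for a small fixed $\delta$. On the tail $|u-1| > \delta$ one has $\inf \phi > \phi(1) = 1$ (using that $\phi$ grows linearly at $+\infty$ and like $-\ln u$ as $u \to 0^+$, so the infimum is attained and strictly exceeds $1$), whence that contribution is $e^{-\Omega(\alpha)}$ relative to $e^{-\alpha}$. On the central window one Taylor-expands $\phi(u) = 1 + \tfrac12(u-1)^2 - \tfrac13(u-1)^3 + \cdots$ and $u^{-1} = 1 - (u-1) + (u-1)^2 - \cdots$, substitutes $v = (u-1)\sqrt\alpha$, and expands the exponential of the sub-leading part of $\phi$. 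All the $O(\alpha^{-1/2})$ corrections are odd polynomials in $v$ and integrate to zero against the Gaussian $e^{-v^2/2}$; the first surviving correction is even, of size $O(1/\alpha)$ (its exact value is $\tfrac{1}{12\alpha}$, though only the bound is needed). Collecting the central and tail estimates gives the claimed asymptotic.

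The step demanding the most care is the last one: rigorously bounding the Taylor remainders of $\phi$ and of $u \mapsto u^{-1}$ uniformly over the central window and justifying the interchange of the asymptotic expansion with integration, so as to certify that the error is genuinely $O(1/\alpha)$ rather than merely $O(\alpha^{-1/2})$ — this is precisely where the vanishing of the odd moments of the Gaussian is essential. Alternatively, the same bound follows from the Euler–Maclaurin expansion of $\log\Gamma$, or — as the authors do here — one simply invokes \cite{DS12} directly, since only the stated estimate is needed downstream (e.g.\ to control ratios of Gamma values arising in the Beta-distribution parameters).
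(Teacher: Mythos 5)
The paper gives no proof of this statement at all—it is imported verbatim as a fact from \cite{DS12}—so your primary answer (simply invoke the citation) coincides exactly with what the paper does, and your Laplace's-method sketch, including the observation that the odd-moment cancellations are what upgrade the error from $O(\alpha^{-1/2})$ to $O(1/\alpha)$, is a correct standard derivation of the stated estimate. No gaps to flag.
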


\begin{definition}[Beta distribution]
    Given $\alpha,\beta \in \R^+$, the Beta distribution, denoted by $\dBeta(\alpha, \beta)$, is a continuous probability distribution with PDF
\[
    f(x) = \frac{1}{\mathrm{B}(\alpha, \beta)} x^{\alpha - 1} (1 - x)^{\beta - 1},
    x \in [0, 1],
\]
where $\mathrm{B}(\alpha, \beta) = \frac{\Gamma(\alpha) \Gamma(\beta)}{\Gamma(\alpha + \beta)}$ is the Beta function.
\end{definition}

\begin{fact}[{\cite[Theorem 5.8.4]{DS12}}]\label{fact:gamma-beta}
For two random variables $X, Y$,
    \begin{align*}
    &X \sim \dGamma(\alpha, \theta), Y \sim \dGamma(\beta, \theta) \text{ independently} \\
    &\Leftrightarrow
    X/(X+Y) \sim \dBeta(\alpha, \beta), X+Y \sim \dGamma(\alpha+\beta, \theta) \text{ independently}.
    \end{align*}
\end{fact}

\begin{definition}[Chi-squared distribution]\label{def:chi-sq}
    Given $k \in \N$, the \textit{Chi-squared distribution} $\chi^2_k$ is the distribution of the sum of the squares of $k$ independent standard normal random variables.
\end{definition}

\begin{fact}[{\cite[Section 8.2]{DS12}}]\label{fact:chi-sq-gamma}
    The Chi-squared distribution is a special case of the Gamma distribution. Specifically, $\chi^2_k = \dGamma(\alpha=k/2, \theta=2)$.
\end{fact}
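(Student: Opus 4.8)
The plan is to reduce the claim to two ingredients: (i) the distribution of the square of a single standard normal, and (ii) the additivity of Gamma distributions under convolution, which is already available via \Cref{fact:gamma-beta}.

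First I would show that if $Z \sim \cN(0,1)$ then $Z^2 \sim \dGamma(1/2, 2)$. The cleanest route is moment generating functions: for $t < 1/2$,
\[
    \ex[e^{t Z^2}] = \int_{-\infty}^{\infty} \frac{1}{\sqrt{2\pi}}\, e^{t z^2 - z^2/2}\, dz = \frac{1}{\sqrt{1 - 2t}},
\]
by substituting $u = \sqrt{1-2t}\,z$ in the Gaussian integral, while the MGF of $\dGamma(\alpha, \theta)$ is $(1 - \theta t)^{-\alpha}$ for $t < 1/\theta$; taking $\alpha = 1/2$, $\theta = 2$ gives exactly $(1-2t)^{-1/2}$. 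Since both MGFs are finite on a neighborhood of $0$, they determine their distributions uniquely, so $Z^2 \sim \dGamma(1/2, 2)$. (Alternatively one can do the change of variables $y = z^2$ directly on the density, accounting for the fact that $z \mapsto z^2$ is two-to-one and using $\Gamma(1/2) = \sqrt{\pi}$, to recover the $\dGamma(1/2,2)$ density $\frac{1}{\sqrt{2\pi}}\, y^{-1/2} e^{-y/2}$ on $[0,\infty)$.)

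Next, by \Cref{def:chi-sq} the distribution $\chi^2_k$ is the law of $\sum_{i=1}^{k} Z_i^2$ with $Z_i \sim \cN(0,1)$ i.i.d., hence by the previous step it is the law of a sum of $k$ independent $\dGamma(1/2, 2)$ random variables. Applying the forward direction of \Cref{fact:gamma-beta} repeatedly — which gives that independent $\dGamma(\alpha,\theta)$ and $\dGamma(\beta,\theta)$ sum to $\dGamma(\alpha+\beta,\theta)$ — or equivalently multiplying the $k$ copies of the MGF $(1-2t)^{-1/2}$ to get $(1-2t)^{-k/2}$, I conclude that the sum is distributed as $\dGamma(k/2, 2)$, which is the claim.

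The only mild subtlety — the "main obstacle," though it is routine — is the first step: being careful with the two-to-one change of variables (or, in the MGF route, checking that the Gaussian integral converges precisely for $t < 1/2$ and invoking uniqueness of MGFs that are finite near the origin). Everything after that follows immediately from \Cref{fact:gamma-beta}.
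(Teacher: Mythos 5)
Your proof is correct. The paper does not prove this statement at all — it is stated as a fact imported from the textbook \cite{DS12} — so there is no in-paper argument to compare against; your two-step derivation (first showing $Z^2 \sim \dGamma(1/2,2)$ for $Z \sim \cN(0,1)$ via the MGF computation $\ex[e^{tZ^2}] = (1-2t)^{-1/2}$ for $t<1/2$, matching the Gamma MGF $(1-\theta t)^{-\alpha}$ with $\alpha = 1/2$, $\theta = 2$, then summing $k$ independent copies using the forward direction of \Cref{fact:gamma-beta}, or equivalently multiplying MGFs to get $(1-2t)^{-k/2}$) is exactly the standard textbook route and is sound, including the care you take with MGF uniqueness near the origin and with the two-to-one change of variables in the density-based alternative.
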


\begin{fact}[Transformation theorem, one-dimensional case {\cite[Theorem 3.8.4]{DS12}}]\label{fact:trans}
    Denote the PDF of the random variable $X$ as $f_X(x)$.
    For a monotonic and differentiable function $h$, the PDF of $Y = h(X)$, denoted as $f_Y(y)$, is:
    \[
        f_Y(y) = \frac {f_X(x)} {\abs{h'(x)}},
    \]
    where $x = h^{-1}(y)$, $h'(x)$ is the derivative of $h(x)$.
\end{fact}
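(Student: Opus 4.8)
The plan is to reduce the statement to the one-dimensional chain rule via the cumulative distribution function (CDF). Since $h$ is monotonic and differentiable on the support of $X$, it is either strictly increasing or strictly decreasing there; I would handle the increasing case in full and observe that the decreasing case is symmetric, with the absolute value in the denominator absorbing the sign difference.

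First, suppose $h$ is increasing. Then I would write the CDF of $Y = h(X)$ as
\[
    F_Y(y) = \Pr[h(X) \le y] = \Pr[X \le h^{-1}(y)] = F_X(h^{-1}(y)),
\]
where the middle equality uses that $h$ (hence $h^{-1}$) is order-preserving. Differentiating both sides in $y$: by the fundamental theorem of calculus $F_X' = f_X$, and by the chain rule together with the inverse function theorem $\diff{}{y} h^{-1}(y) = 1/h'(h^{-1}(y))$, so $f_Y(y) = f_X(h^{-1}(y))/h'(h^{-1}(y))$. Writing $x = h^{-1}(y)$ and noting $h'(x) > 0$ here, this is $f_Y(y) = f_X(x)/\abs{h'(x)}$. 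For the decreasing case I would instead use $F_Y(y) = \Pr[X \ge h^{-1}(y)] = 1 - F_X(h^{-1}(y))$ (the boundary point contributes nothing since $X$ is continuous); differentiating, the $-1$ combines with the negative derivative of the order-reversing map $h^{-1}$ to give $f_Y(y) = -f_X(h^{-1}(y))/h'(h^{-1}(y))$, and since $h'(x) < 0$ this again equals $f_X(x)/\abs{h'(x)}$, unifying both cases.

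I do not anticipate a real obstacle here: the only points needing care are the standing regularity hypotheses — differentiability of $h$ with nonvanishing derivative (so $h^{-1}$ is differentiable, via the inverse function theorem) and continuity of $X$ (so that differentiating the CDF is legitimate and endpoints are negligible). Since the statement is quoted verbatim from \cite[Theorem 3.8.4]{DS12}, I would invoke these hypotheses rather than belabor them, and simply present the two-case CDF computation above.
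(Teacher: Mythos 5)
Your CDF-differentiation argument is correct and is the standard proof of this transformation theorem; the paper itself states it as a Fact quoted from \cite[Theorem 3.8.4]{DS12} without proof, and the cited textbook proves it in essentially the same way (monotone case split, $F_Y(y)=F_X(h^{-1}(y))$ or $1-F_X(h^{-1}(y))$, then chain rule and inverse function theorem). Nothing further is needed beyond the regularity caveats you already note.
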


\begin{fact}[Rejection sampling \cite{vN51}]\label{fact:rej-samp}
Rejection sampling is a well-known method to sample from a target distribution $X$ with probability density $f(x)$, given the ability to sample from another distribution $Y$ with probability density $g(x)$. Let $M$ be a constant such that $f(x) \leq M g(x)$ for all $x$, then we can sample $x$ from $Y$ and accept it with probability $\frac {f(x)} {M g(x)}$. The distribution of the accepted samples is exactly $X$, and the success probability is $\frac{1}{M}$.
\end{fact}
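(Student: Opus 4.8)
The plan is to verify the two assertions of the fact directly — namely that the overall acceptance probability is $1/M$ and that, conditioned on acceptance, the returned sample has density $f$ — by computing with the joint distribution of the proposal draw $x \gets Y$ and the independent uniform coin $u \gets \cU(0,1)$ used to decide acceptance. I would first state the procedure precisely: draw $x$ from $Y$ (density $g$), draw $u \sim \cU(0,1)$ independently, and accept if and only if $u \le f(x)/(M g(x))$. The hypothesis $f(x) \le M g(x)$ together with $f, g \ge 0$ ensures this threshold lies in $[0,1]$, so the acceptance event is well defined; on the null set where $g(x) = 0$ the bound forces $f(x) = 0$ as well, so that region contributes nothing and can be handled by the convention $0/0 := 0$.

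For the success probability I would condition on $x$: given $x$, the conditional probability of acceptance is exactly $f(x)/(M g(x))$ since $u$ is uniform and independent of $x$. Integrating against the density $g$ of $Y$,
\[
    \Pr[\text{accept}] = \int g(x)\,\frac{f(x)}{M g(x)}\,\dd x = \frac{1}{M}\int f(x)\,\dd x = \frac{1}{M},
\]
where the last step uses that $f$ is a probability density.

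For the law of the accepted sample I would compute, by the same conditioning, the joint probability of accepting and landing in an arbitrary measurable set $T$:
\[
    \Pr[\text{accept and } x \in T] = \int_T g(x)\,\frac{f(x)}{M g(x)}\,\dd x = \frac{1}{M}\int_T f(x)\,\dd x .
\]
Dividing by $\Pr[\text{accept}] = 1/M$ yields $\Pr[x \in T \mid \text{accept}] = \int_T f(x)\,\dd x$, which is precisely the mass that the target distribution $X$ assigns to $T$; since $T$ is arbitrary, the accepted sample is distributed according to $X$.

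The argument is essentially a one-line Fubini/Tonelli computation on a nonnegative integrand, so there is no real obstacle; the only place any care is needed is the measure-theoretic bookkeeping around points with $g(x) = 0$, which is dispatched by the observation that $f$ vanishes there too, and the check that the acceptance threshold is a valid probability, which is immediate from the defining inequality $f \le M g$.
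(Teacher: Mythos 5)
Your proposal is correct: the conditioning-on-$x$ computation gives the acceptance probability $\frac{1}{M}$ and the conditional law with density $f$, exactly as in the standard textbook derivation. The paper states this as a cited fact (to von Neumann) without proof, so there is nothing to compare against beyond noting that your argument, including the careful handling of the $g(x)=0$ set, is the canonical justification.
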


\subsection{Quantum cryptography}

\paragraph{Single-qubit gates.} In this paper, we use the phase shift gate
\[
    P(\theta) = \begin{bmatrix}
        1 & 0 \\
        0 & e^{i\theta}
    \end{bmatrix},
\]
and the $y$-axis rotation gate
\[
    R_Y(\theta) = \begin{bmatrix}
        \cos(\frac \theta 2) & -\sin(\frac \theta 2) \\
        \sin(\frac \theta 2) & \cos(\frac \theta 2)
    \end{bmatrix}.
\]

\begin{definition}[Trace distance]
    The trace distance of two quantum states $\rho_0, \rho_1 \in D(\mathcal{H})$ is defined as

$$ \TD (\rho_0, \rho_1) \defeq \frac 1 2 \norm{\rho_0 - \rho_1}_1 = \frac 1 2 \Tr\paren{\sqrt{(\rho_0 - \rho_1)^\dagger (\rho_0 - \rho_1)}} .$$
\end{definition}

\begin{fact}[Data processing \cite{Wil17}]\label{fact:dpi}
    Let $\rho_0,\rho_1 \in D(\mathcal{H})$ and $\Phi$ be a CPTP map. Then,
    \[
        \TD (\Phi(\rho_0), \Phi(\rho_1)) \leq \TD (\rho_0, \rho_1).
    \]
\end{fact}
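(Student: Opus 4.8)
The plan is to reduce the claim to the well-known operational (variational) characterization of the trace distance, combined with the duality between the Schr\"odinger and Heisenberg pictures. First I would establish that for $\rho_0,\rho_1\in D(\cH)$,
\[
    \TD(\rho_0,\rho_1) = \max_{0 \le P \le I} \Tr[P(\rho_0-\rho_1)],
\]
where the maximum ranges over all operators $P$ with $0 \le P \le I$ in the L\"owner order, i.e.\ over all POVM effects. This follows from the spectral (Jordan--Hahn) decomposition of the Hermitian traceless operator $X := \rho_0-\rho_1$: write $X = X_+ - X_-$ with $X_+,X_-\ge 0$ supported on orthogonal subspaces, so that $\norm{X}_1 = \Tr[X_+] + \Tr[X_-]$ and, since $\Tr[X]=0$, $\Tr[X_+]=\Tr[X_-] = \tfrac12\norm{X}_1 = \TD(\rho_0,\rho_1)$. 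Taking $P$ to be the projector onto the support of $X_+$ attains the value $\Tr[PX]=\Tr[X_+]$, while for any effect $Q$ one has $\Tr[QX]=\Tr[QX_+]-\Tr[QX_-]\le\Tr[QX_+]\le\Tr[X_+]$, which proves the characterization.

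Next I would invoke duality: a CPTP map $\Phi$ admits an adjoint $\Phi^\dagger$ (the Heisenberg-picture map) defined by $\Tr[Q\,\Phi(\sigma)] = \Tr[\Phi^\dagger(Q)\,\sigma]$ for all $\sigma$. The map $\Phi^\dagger$ is completely positive (hence positive) because $\Phi$ is, and it is unital, $\Phi^\dagger(I)=I$, because $\Phi$ is trace-preserving. Consequently $0\le Q\le I$ implies $0\le\Phi^\dagger(Q)\le I$; that is, $\Phi^\dagger$ maps effects to effects.

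Combining the two ingredients: by linearity $\Phi(\rho_0)-\Phi(\rho_1)=\Phi(\rho_0-\rho_1)$, so for every effect $Q$ on the output space, using that $\Phi^\dagger(Q)$ is again an effect,
\[
    \Tr[Q\,(\Phi(\rho_0)-\Phi(\rho_1))] = \Tr[\Phi^\dagger(Q)\,(\rho_0-\rho_1)] \le \max_{0\le P\le I}\Tr[P\,(\rho_0-\rho_1)] = \TD(\rho_0,\rho_1).
\]
Taking the maximum over $Q$ on the left and applying the characterization once more yields $\TD(\Phi(\rho_0),\Phi(\rho_1)) \le \TD(\rho_0,\rho_1)$, as claimed.

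The only genuinely substantive step is the operational characterization of the trace distance; the rest is bookkeeping with the dual map, and I expect that characterization (or, equivalently, the partial-trace case below) to be the main obstacle, though it is entirely standard. An alternative route avoiding duality is the Stinespring decomposition $\Phi(\cdot)=\Tr_E[U(\cdot\otimes\ketbra{0})U^\dagger]$, checking monotonicity separately for appending an ancilla (equality, since $\norm{X\otimes\sigma}_1=\norm{X}_1$), for unitary conjugation (equality, by unitary invariance of $\norm{\cdot}_1$), and for the partial trace $\Tr_E$ (again handled cleanly via the variational characterization, since any effect on the reduced system lifts to $P\otimes I_E$ on the joint system). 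This alternative is not actually shorter, so I would present the duality argument above.
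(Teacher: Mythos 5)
Your proof is correct. The paper does not prove this statement at all — it is imported as a Fact with a citation to Wilde's book — and your argument via the variational characterization $\TD(\rho_0,\rho_1)=\max_{0\le P\le I}\Tr[P(\rho_0-\rho_1)]$ together with the unital, positive adjoint map $\Phi^\dagger$ is exactly the standard textbook proof found in that reference, so there is nothing to reconcile.
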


\begin{fact}[{\cite[Theorem 9.3.1]{Wil17}}]\label{fact:pure-td}
    Let $\ket{\psi}, \ket{\varphi}$ be pure states. Then
    \[
        \TD (\ketbra{\psi}, \ketbra{\varphi}) = \sqrt{1 - \abs{\braket{\psi|\varphi}}^2}.
    \]
\end{fact}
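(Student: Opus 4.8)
The statement to prove is that $\TD(\ketbra{\psi}, \ketbra{\varphi}) = \sqrt{1 - \abs{\braket{\psi|\varphi}}^2}$ for pure states $\ket{\psi}, \ket{\varphi}$. The plan is to work entirely inside the at most two-dimensional subspace $\mathcal{V} \defeq \mathrm{span}\{\ket{\psi}, \ket{\varphi}\}$ and to study the Hermitian operator $\Delta \defeq \ketbra{\psi} - \ketbra{\varphi}$, whose trace norm is exactly twice the quantity we want. If $\ket{\psi}$ and $\ket{\varphi}$ coincide up to a global phase then $\Delta = 0$ and $\abs{\braket{\psi|\varphi}} = 1$, so both sides of the identity vanish; I may therefore assume $\ket{\psi}, \ket{\varphi}$ are linearly independent, so that $\dim \mathcal{V} = 2$, $\Delta$ is supported on $\mathcal{V}$, and $\Delta \neq 0$.

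First I would record the spectral shape of $\Delta$: it is Hermitian, it is traceless since $\Tr(\Delta) = \braket{\psi|\psi} - \braket{\varphi|\varphi} = 0$, and its support is the two-dimensional space $\mathcal{V}$, so it has exactly two real eigenvalues; since these sum to $0$ and $\Delta \neq 0$, they must be $\lambda$ and $-\lambda$ for some $\lambda > 0$. Hence $\norm{\Delta}_1 = 2\lambda$ and $\TD(\ketbra{\psi}, \ketbra{\varphi}) = \tfrac12\norm{\Delta}_1 = \lambda$. It then remains to identify $\lambda$, and for that I would compute the second spectral moment: on one hand $\Tr(\Delta^2) = \lambda^2 + (-\lambda)^2 = 2\lambda^2$, while on the other hand, expanding $\Delta^2 = \ketbra{\psi} - \ket{\psi}\braket{\psi|\varphi}\bra{\varphi} - \ket{\varphi}\braket{\varphi|\psi}\bra{\psi} + \ketbra{\varphi}$ and taking the trace gives $\Tr(\Delta^2) = 1 - \abs{\braket{\psi|\varphi}}^2 - \abs{\braket{\psi|\varphi}}^2 + 1 = 2\paren{1 - \abs{\braket{\psi|\varphi}}^2}$. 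Equating the two expressions yields $\lambda^2 = 1 - \abs{\braket{\psi|\varphi}}^2$, and since $\lambda \geq 0$ this gives $\lambda = \sqrt{1 - \abs{\braket{\psi|\varphi}}^2}$, which is the claimed identity.

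The only step that needs a moment's care is the claim that the nonzero spectrum of $\Delta$ is exactly $\{\lambda, -\lambda\}$, i.e. that a traceless Hermitian operator of rank at most $2$ cannot have a single nonzero eigenvalue; this is the small obstacle, though it is immediate once one notes $\Delta$ acts on a two-dimensional space. If one prefers a fully explicit route that sidesteps this, pick an orthonormal basis $\{\ket{e_0}, \ket{e_1}\}$ of $\mathcal{V}$ with $\ket{\psi} = \ket{e_0}$ and $\ket{\varphi} = \cos\theta\,\ket{e_0} + e^{i\phi}\sin\theta\,\ket{e_1}$ for some $\theta \in (0, \pi/2]$, write $\Delta$ as a $2 \times 2$ matrix in this basis, and read off its eigenvalues $\pm\sin\theta$ directly from its characteristic polynomial (trace $0$, determinant $-\sin^2\theta$); then $\TD = \sin\theta = \sqrt{1 - \cos^2\theta} = \sqrt{1 - \abs{\braket{\psi|\varphi}}^2}$. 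Either way no lengthy computation is involved; the whole content is the reduction to a traceless rank-$2$ Hermitian operator.
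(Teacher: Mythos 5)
Your proof is correct. The paper does not prove this statement at all --- it is quoted as a fact from Wilde's textbook (Theorem 9.3.1), so there is no in-paper argument to compare against; your derivation (traceless rank-$\leq 2$ Hermitian difference with eigenvalues $\pm\lambda$, then $\Tr(\Delta^2)=2\lambda^2=2\bigl(1-\abs{\braket{\psi|\varphi}}^2\bigr)$, or equivalently the explicit $2\times 2$ computation) is a complete and standard self-contained proof, essentially the same as the textbook one. The only cosmetic point is that in the explicit route you should note that writing $\ket{\varphi}=\cos\theta\,\ket{e_0}+e^{i\phi}\sin\theta\,\ket{e_1}$ with $\cos\theta\geq 0$ uses the freedom to adjust the global phase of $\ket{\varphi}$, which is harmless since only $\ketbra{\varphi}$ and $\abs{\braket{\psi|\varphi}}$ enter the claim.
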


\begin{fact}[{\cite[Section 9.1.4]{Wil17}}]
    Let $\rho_0, \rho_1 \in D(\mathcal{H})$ be two quantum states. For a state drawn uniformly from the set $\{\rho_0, \rho_1\}$, the optimal distinguishing probability is
    $$ \frac 1 2 + \frac 1 2 \TD(\rho_0, \rho_1). $$
\end{fact}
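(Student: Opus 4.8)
The plan is to establish this Holevo--Helstrom-type bound by reducing an arbitrary distinguishing strategy to a two-outcome measurement and then solving the resulting linear optimization over effect operators. First I would fix the form of an optimal distinguisher. It receives one copy of a state $\rho \in \{\rho_0,\rho_1\}$, each occurring with probability $1/2$, and must output a guess $b \in \{0,1\}$. Any such strategy --- even one that performs a fine-grained measurement $\{E_\omega\}_{\omega}$ and then applies a possibly randomized decision rule $\omega \mapsto b$ --- is equivalent, by grouping the effects according to the output guess and averaging over the internal randomness, to a two-outcome POVM $\{M_0, M_1\}$ with $M_0, M_1 \succeq 0$ and $M_0 + M_1 = \id$, where outcome $b$ is interpreted as the guess $\rho_b$. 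Its success probability is $p_{\mathrm{succ}} = \tfrac12\Tr(M_0\rho_0) + \tfrac12\Tr(M_1\rho_1)$.

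Next I would simplify and optimize. Substituting $M_1 = \id - M_0$ and $\Tr\rho_1 = 1$ gives $p_{\mathrm{succ}} = \tfrac12 + \tfrac12\Tr(M_0\Delta)$ with $\Delta \defeq \rho_0 - \rho_1$, which is Hermitian and traceless, so the task becomes computing $\max\{\Tr(M_0\Delta) : 0 \preceq M_0 \preceq \id\}$. Using the Jordan--Hahn decomposition $\Delta = \Delta_+ - \Delta_-$ with $\Delta_\pm \succeq 0$ on orthogonal supports, and letting $\Pi_+$ project onto $\operatorname{supp}\Delta_+$, the choice $M_0 = \Pi_+$ achieves $\Tr(\Pi_+\Delta) = \Tr\Delta_+$, while for any $0 \preceq M_0 \preceq \id$ we have $\Tr(M_0\Delta) = \Tr(M_0\Delta_+) - \Tr(M_0\Delta_-) \le \Tr(M_0\Delta_+) \le \Tr\Delta_+$; hence the maximum is $\Tr\Delta_+$. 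Since $\Delta$ is traceless, $\Tr\Delta_+ = \Tr\Delta_-$, so $\norm{\Delta}_1 = \Tr\Delta_+ + \Tr\Delta_- = 2\Tr\Delta_+$, giving $\Tr\Delta_+ = \tfrac12\norm{\rho_0-\rho_1}_1 = \TD(\rho_0,\rho_1)$. Therefore $p_{\mathrm{succ}}$ is maximized at $\tfrac12 + \tfrac12\TD(\rho_0,\rho_1)$, attained by the projective measurement $\{\Pi_+,\id-\Pi_+\}$.

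The only nontrivial point --- the ``main obstacle,'' such as it is --- is the reduction in the first step: a priori a distinguisher could use a measurement with many outcomes together with a clever post-processing rule, so one must argue that the supremum of the success probability over all physically allowed strategies is already attained by a two-outcome POVM (and, by the optimization, even by a projective one). Once that is in place, the remainder is the short spectral computation above, which uses only the definition of the trace norm and the fact that $\rho_0 - \rho_1$ is traceless.
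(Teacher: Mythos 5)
Your argument is correct: the reduction of an arbitrary strategy to a two-outcome POVM by grouping effects and averaging over the decision rule is valid, and the optimization of $\Tr(M_0(\rho_0-\rho_1))$ over $0 \preceq M_0 \preceq \id$ via the Jordan--Hahn decomposition, together with tracelessness of $\rho_0-\rho_1$, yields exactly $\frac12 + \frac12\TD(\rho_0,\rho_1)$. Note, however, that the paper does not prove this statement at all --- it is imported as a known fact from Wilde's textbook --- so there is no internal proof to compare against; your derivation is precisely the standard Holevo--Helstrom argument given in that reference, and it is complete as written.
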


\begin{definition}[Diamond-norm distance {\cite[Section 9.1.6]{Wil17}}]
    Let $\cE , \cF : L(\cH_A) \to L(\cH_B)$ be quantum channels. The diamond-norm distance is defined as
    \[
        \norm{\cE - \cF}_\diamond \defeq \sup_N \max_{\rho_{R_N A}} \norm{
            (\id_{R_N} \otimes \cE_{A \to B})(\rho_{R_N A}) - (\id_{R_N} \otimes \cF_{A \to B})(\rho_{R_N A})
        }_1,
    \]
    where $\rho_{R_N A} \in D(\cH_{R_N} \otimes \cH_A)$ and $R_N$ is the reference system with dimension $N$.

    The diamond-norm distance is the measure of distinguishability between channels $\cE$ and $\cF$.
\end{definition}

\begin{definition}[Choi representation {\cite[Section 2.2.2]{Wat18}}]
    Let $\cE: L(\cH_A) \to L(\cH_B)$ be a quantum channel, where $\cH_A = \C^N$. The Choi representation of $\cE$ is the operator
    \[
        J(\cE) \defeq \sum_{i,j \in [N]} \cE(\ketbraD{i}{j}) \otimes \ketbraD{i}{j}.
    \]
\end{definition}

\begin{fact}[{\cite[Section 3.4]{Wat18}}]\label{fact:diamond-choi}
    Let $\cE, \cF: L(\cH_A) \to L(\cH_B)$ be quantum channels. Then
    \[
        \norm{\cE - \cF}_\diamond \leq \norm{J(\cE) - J(\cF)}_1.
    \]
\end{fact}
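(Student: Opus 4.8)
The plan is to reduce the diamond norm to an optimization over pure inputs, rewrite such an input through the (unnormalized) maximally entangled vector so that the channel action produces exactly the difference of Choi operators up to a local conjugation, and then control that conjugation by submultiplicativity of the trace norm.

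Write $\Phi \defeq \cE - \cF$, which is Hermiticity-preserving. First I would recall the two standard reductions in the definition of $\norm{\Phi}_\diamond$: by convexity of $\norm{\cdot}_1$ the supremum is attained on a pure input $\ketbra{\psi}_{RA}$, and since any such $\ket{\psi}$ has Schmidt rank at most $N = \dim\cH_A$, it suffices to take $\dim\cH_R = N$. Fixing a unit vector $\ket{\psi}_{RA} = \sum_{i,a\in[N]}\psi_{ia}\ket{i}_R\ket{a}_A$, I would apply the ricochet identity: with $\ket{\Omega}_{RA} \defeq \sum_{i\in[N]}\ket{i}_R\ket{i}_A$ and $K$ the operator on $\cH_R$ with matrix entries $K_{ia} = \psi_{ia}$, one has $\ket{\psi}_{RA} = (K\otimes I_A)\ket{\Omega}_{RA}$ and $\Tr(K^\dagger K) = \braket{\psi|\psi} = 1$.

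Since $K$ acts on $R$ only and $\Phi$ on $A$,
\[
(\id_R \otimes \Phi_{A\to B})(\ketbra{\psi}_{RA}) = (K\otimes I_B)\,\big[(\id_R\otimes\Phi_{A\to B})(\ketbra{\Omega}_{RA})\big]\,(K^\dagger\otimes I_B),
\]
and $(\id_R\otimes\Phi_{A\to B})(\ketbra{\Omega}_{RA}) = \sum_{i,j}\ketbraD{i}{j}_R\otimes\Phi(\ketbraD{i}{j})_B$, which is exactly $J(\Phi) = J(\cE)-J(\cF)$ up to swapping the two tensor factors (the Choi definition in the paper lists the output register first); in particular it has trace norm $\norm{J(\cE)-J(\cF)}_1$. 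Then, using $\norm{XMY}_1 \le \norm{X}_\infty\norm{M}_1\norm{Y}_\infty$ together with $\norm{K\otimes I_B}_\infty = \norm{K}_\infty \le \norm{K}_2 = \sqrt{\Tr(K^\dagger K)} = 1$, I would conclude $\norm{(\id_R\otimes\Phi)(\ketbra{\psi}_{RA})}_1 \le \norm{J(\cE)-J(\cF)}_1$. Taking the maximum over $\ket{\psi}$ yields $\norm{\cE-\cF}_\diamond \le \norm{J(\cE)-J(\cF)}_1$.

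I do not expect a genuine obstacle here: the only point that needs care is the reduction to a pure input on an $N$-dimensional reference, which is precisely what makes $K$ a square operator so that the clean trace-norm bound applies; everything else is the transpose/ricochet trick and submultiplicativity of $\norm{\cdot}_1$ under one-sided multiplication.
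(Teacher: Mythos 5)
Your argument is correct, and it is the standard proof of this bound: the paper itself does not prove this fact but cites it from Watrous (where essentially the same reasoning appears, via the expression of the diamond norm in terms of the Choi operator conjugated by operators of unit Hilbert--Schmidt norm). All the steps check out: convexity reduces to pure inputs, the Schmidt-rank argument bounds the reference dimension, the ricochet identity $\ket{\psi}=(K\otimes I)\ket{\Omega}$ with $\Tr(K^\dagger K)=1$ is right, the swap of tensor factors relative to the paper's Choi convention is harmless for the trace norm, and H\"older plus $\norm{K}_\infty^2\le\norm{K}_2^2=1$ gives the claim. One minor remark: the reduction to a square $K$ is not actually needed, since the ricochet trick and the bound $\norm{K}_\infty\le\norm{K}_2$ work just as well for a rectangular $K:\cH_A\to\cH_R$, but including the reduction does no harm.
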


\begin{definition}[Haar measure]The Haar measure on $S(\cH)$ is the probability measure that is invariant under the action of every unitary operator,
and we denote it by $\mu(S(\cH))$. In this paper, we simplify $\mu(S(\C^{2^n}))$ to $\mu_n$, which is the Haar measure over the $n$-qubit pure states set.
\end{definition}

\begin{definition}[Haar-random states]
    A Haar-random state $\ket{\psi} \in S(\cH)$ is a quantum state chosen uniformly at random from $S(\cH)$, according to the Haar measure $\mu(S(\cH))$.
\end{definition}

\begin{definition}[Standard Gaussian random vector]\label{def:gaussian-vector}
    Let $A_i \sim \cN(0, \frac 1 2)$ and $B_i \sim \cN(0,\frac 1 2)$ independently for all $ {i \in [N]}$. We say that the complex random vector
    \[
        \vec{v} = (A_0+iB_0, A_1+iB_1, \ldots, A_{N-1}+iB_{N-1})\in \C^{N}
    \]
    is a standard (complex) Gaussian random vector.
\end{definition}

In this paper, when we say standard Gaussian random vector, we always mean the standard complex Gaussian random vector.
For simplicity, we slightly abuse terminology by also referring to the case $A_i, B_i \sim \cN(0, \sigma^2)$ for any fixed $\sigma \in \R_{>0}$ as a standard Gaussian random vector, since they become equivalent after normalization.

\begin{fact}[{\cite[Section 7.2.1]{Wat18}}]\label{fact:haar}
    Given a standard Gaussian random vector $\vec{v} \in \C^N$, the normalized random vector $\vec{v}/\norm{\vec{v}}_2$ is a Haar-random state.
\end{fact}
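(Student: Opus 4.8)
The plan is to exploit the rotational symmetry of the Gaussian measure together with the characterization of the Haar measure by unitary invariance. First I would write out the joint density of $\vec{v} = (A_0 + iB_0, \ldots, A_{N-1}+iB_{N-1})$ viewed as a point of $\R^{2N}$ with coordinates $(A_0, B_0, \ldots, A_{N-1}, B_{N-1})$: since the $A_i, B_i$ are i.i.d.\ centered Gaussians with common variance $\sigma^2$, this density is proportional to $\exp\paren{-\norm{\vec{v}}_2^2/(2\sigma^2)}$, which depends on $\vec{v}$ only through $\norm{\vec{v}}_2$. In particular $\vec{v} \neq 0$ almost surely (a single point has Lebesgue measure zero), so the normalization $\vec{v}/\norm{\vec{v}}_2$ is well defined almost surely and takes values in $S(\C^N)$.

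Next I would observe that any unitary $U$ on $\C^N$, regarded as a real-linear map on $\R^{2N}$, is orthogonal: it preserves $\Re\angbra{\cdot,\cdot}$ and hence the Euclidean norm and the Lebesgue volume element. Consequently the law of $\vec{v}$ is invariant under $U$, i.e.\ $U\vec{v}$ has the same distribution as $\vec{v}$, because the density is unchanged (it depends only on $\norm{\vec v}_2$) and the Jacobian factor is $1$. Since moreover $\norm{U\vec{v}}_2 = \norm{\vec{v}}_2$, normalization commutes with the action of $U$, so $U\paren{\vec{v}/\norm{\vec{v}}_2} = (U\vec{v})/\norm{U\vec{v}}_2$ has the same distribution as $\vec{v}/\norm{\vec{v}}_2$. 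Thus the law $\nu$ of $\vec{v}/\norm{\vec{v}}_2$ is a Borel probability measure on $S(\C^N)$ that is invariant under every unitary operator.

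Finally I would invoke the characterization of the Haar measure: the unitary group acts transitively on $S(\C^N)$, and a homogeneous space of a compact group carries a unique invariant probability measure, which is by definition $\mu_n$ for $N = 2^n$. Hence $\nu = \mu_n$, i.e.\ $\vec{v}/\norm{\vec{v}}_2$ is Haar-random. The only delicate point is this last uniqueness statement; if one prefers to avoid the general homogeneous-space machinery, an alternative is to define the Haar measure on $S(\C^N)$ as the pushforward of the Haar measure on $U(N)$ applied to a fixed reference state $\ket{0}$, and then check directly that applying a Haar-random unitary to $\vec{v}/\norm{\vec{v}}_2$ both leaves its law equal to $\nu$ (by the invariance just established) and produces $\mu_n$ (by construction), so the two coincide. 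I do not expect a genuine obstacle here: the entire content of the statement is the spherical symmetry of the i.i.d.\ Gaussian density, and everything else is bookkeeping.
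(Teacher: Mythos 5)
Your argument is correct: the paper itself does not prove this statement but simply cites it from Watrous's book, and your proof (spherical symmetry of the i.i.d.\ Gaussian density on $\R^{2N}$, normalization commuting with the unitary action, then uniqueness of the unitarily invariant probability measure on $S(\C^N)$) is exactly the standard argument underlying that citation. The one point you flag as delicate — uniqueness of the invariant measure — is indeed the only nontrivial ingredient, and either of the two routes you sketch (the compact homogeneous-space theorem, or comparing with the pushforward of Haar measure on $U(N)$ applied to a fixed state) closes it correctly.
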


\begin{definition}[Quantum oracle]
    A quantum oracle $\oracle$ is a quantum map. We denote a quantum circuit $\alg$ with a black-box access to $\oracle$ by $\alg^{\oracle}$ (each invocation of $\oracle$ is counted unit time). For oracles $\oracle_1$ and $\oracle_2$ and a number $j$ we denote $\alg^{\oracle_1, \oracle_2,j}$ as a circuit which does $j$ black-box access to $\oracle_1$ followed by several black-box access to $\oracle_2$.

    For a function $f: \mathcal{X} \mapsto \mathcal{Y}$, we denote its unitary oracle
    $$U_f \defeq \sum_{x \in \mathcal{X},y \in \mathcal{Y}}  \ket{x}\ket{y \oplus f(x)} \bra{x}\bra{y}.$$
\end{definition}

\begin{definition}[Quantum-secure one-way functions \cite{Zha21}]\label{def:zha-owf}
    A one-way function is a function $F: \cX \to \cY$, where $\cX$ and $\cY$ are implicitly functions of a security parameter $\lambda$. We require that $F$ can be evaluated by a classical algorithm in time polynomial in $\lambda$.

    A one-way function $F$ is secure if no efficient quantum adversary $\alg$ can invert $F$. Formally, for any efficient quantum adversary A, there exists a negligible function $\varepsilon = \varepsilon(\lambda)$ such that
    \[
            \Pr_{x \gets \cX}[F(\alg(F(x))) = F(x)] < \varepsilon.
    \]
\end{definition}

\begin{definition}[Quantum-secure pseudorandom functions (QS-PRFs) \cite{Zha21}]\label{def:zha-prf}
    A $\prf$ is a function $\prf : \cK \times \cX \to \cY$, where $\cK$ is the key-space, and $\cX$ and $\cY$ are the domain and range. $\cK$, $\cX$ and $\cY$ are implicitly functions of the security parameter $\lambda$.
    We require that $\prf$ can be evaluated by a classical algorithm in time polynomial in $\lambda$. We write $y = \prf_k(x)$.

    A pseudorandom function $\prf$ is quantum-secure if no efficient quantum adversary $\alg$ making quantum queries can distinguish between a truly random function and the function $\prf_k$ for a random k. That is, for every such $\alg$, there exists a negligible function $\varepsilon = \varepsilon (\lambda)$ such that
    \[
        \abs{
            \Pr_{k \gets \cK}[ \alg^{U_{\prf_k}}(1^\lambda) = 1]
            -
            \Pr_{f \gets \cY^\cX}[ \alg^{U_f}(1^\lambda) = 1]
        } < \varepsilon.
        \footnote{Although in Zhandry's original definition the adversary $\alg$ takes no input, we add this input $1^\lambda$ to make it clear that an efficient quantum adversary means the running time is polynomially bounded in the security parameter $\lambda$.}
    \]
\end{definition}

\begin{definition}[Scalable quantum-secure pseudorandom function generator]\label{def:scalable-prf}
    Let $\lambda$ be the security parameter. A pair of algorithms $(\KeyGen, \Eval)$ is called a scalable adaptive quantum-secure pseudorandom function generator if for any $n(\lambda), m(\lambda) \in \poly(\lambda)$, the following holds:
    \begin{enumerate}
        \item Key generation: $\KeyGen(1^n, 1^m, 1^{\lambda})$ is a PPT algorithm which outputs a key $k$.
        \item Evaluation: $\Eval(1^n, 1^\lambda, k,x \in \{0,1\}^m)$ is PT algorithm that outputs $f_k(x) \in \{0,1\}^n$.
        \item Adaptive pseudorandomness: For any (non-uniform) QPT distinguisher $\alg$
        \begin{align*}
            \abs{
                \Pr_{k\gets \KeyGen(1^n, 1^m, 1^\lambda)}\sparen{\alg^{U_{f_k}}(1^\lambda) = 1}
                -
                \Pr_{f \gets \cY^\cX}\sparen{\alg^{U_f}(1^\lambda) = 1}
            } = \negl(\lambda) ,
        \end{align*}
        where $\cX = \{0,1\}^m, \cY = \{0,1\}^n$.
    \end{enumerate}
\end{definition}

\begin{fact}[{\cite[Corollary 1.3]{Zha21}}]\label{fact:zha-origin-prf}
    Assuming quantum-secure one-way functions (\Cref{def:zha-owf}) exist, quantum-secure pseudorandom functions (\Cref{def:zha-prf}) exist.
\end{fact}

In \cite{Zha21}, Zhandry proves \Cref{fact:zha-origin-prf} by essentially showing that the GGM construction for PRFs due to Goldreich, Goldwasser, and Micali \cite{GGM86} is quantum-secure when instantiated with a quantum-secure length-doubling PRG. The GGM construction can be adapted to yield arbitrary quantum-secure PRF with polynomially bounded input and output lengths, assuming the existence of quantum-secure polynomial-stretch PRGs. As the observation made by Aaronson and Christiano \cite{Aar09,AC12}, such PRGs exist under the assumption that quantum-secure OWFs exist. Consequently, Zhandry's result implicitly implies the following fact.

\begin{fact}[{\cite{Zha21}}]\label{fact:zha-scalable-prf}
    Assuming quantum-secure one-way functions (\Cref{def:zha-owf}) exist, scalable quantum-secure pseudorandom function generators (\Cref{def:scalable-prf}) exist.
\end{fact}


\begin{definition}[Pseudorandom quantum state (PRS's) \cite{JLS18}]
\label{def:PRSJLS}
Let $\lambda$ be the security parameter. Let $\cH$ be a Hilbert space and $\cK$ the key space, both parameterized by $\lambda$. A keyed family of quantum states $\{\ket{\phi_k} \in S(\cH)\}_{k \in \cK}$ is pseudorandom, if the following two conditions hold:
    \begin{enumerate}
        \item Efficient generation: There is a polynomial-time quantum algorithm $G$ that generates state $\ket{\phi_k}$ on input $k$. That is, for all $k \in \cK$, $G(k) = \ket{\phi_k}$.
        \item Pseudorandomness: Any polynomially many copies of $\ket{\phi_k}$ with the same random $k \in \cK$ is computationally indistinguishable from the same number of copies of a Haar-random state. More precisely, for any efficient quantum algorithm $\alg$ and any $t \in \poly(\lambda)$,
        \[
            \abs{
                \Pr_{k \gets \cK}\sparen{\alg(\ket{\phi_k}^{\otimes t}) = 1}
                -
                \Pr_{\ket{\psi} \gets \mu}\sparen{\alg(\ket{\psi}^{\otimes t}) = 1}} = \negl(\lambda),
        \]
        where $\mu$ is the Haar measure on $S(\cH)$.
    \end{enumerate}
\end{definition}

\begin{definition}[Scalable pseudorandom quantum state (PRS) generator]   \label{def:PRSour}
Let $\lambda$ be the security parameter. A pair of algorithms $(\KeyGen, \CircuitGen)$ is a scalable pseudorandom quantum state generator if for any $n(\lambda)\in \poly(\lambda)$
\footnote{
    Technically, $n(\lambda)$ can be any function in our construction. The indistinguishability still holds even if $n(\lambda)$ is superpolynomial, since the distinguisher $\alg$ is only allowed to run in polynomial time in $\lambda$.
    However, we do not discuss this case since it is irrelevant for practical cryptography schemes.
}
the following holds:
    \begin{enumerate}
        \item Key generation: $\KeyGen(1^{n}, 1^\lambda)$ is a PPT algorithm  which outputs a key $k$.
        \item Circuit generation: $\CircuitGen(1^n, 1^\lambda, k)$ is a PT algorithm whose output is a quantum circuit for an isometry $V_k: \C \to \C^{2^{n}}$.
        \item Pseudorandomness: For any (non-uniform) QPT distinguisher $\alg$,
        \[
            \abs{
                \Pr_{k\gets \KeyGen(1^n, 1^\lambda)}
                \left[\alg^{V_k}(1^\lambda)=1\right]
                -
                \Pr_{\ket{\psi} \gets \mu_n}\left[\alg^{W_{\ket{\psi}}}(1^\lambda)=1\right]
            } = \negl(\lambda),
        \]
    \end{enumerate}
    where $W_{\ket{\psi}}: \C \to \C^{2^n}$ is the isometry $W_{\ket{\psi}} \defeq \ket{\psi}$.

\end{definition}

\begin{claim}
    Let $(\KeyGen, \CircuitGen)$ be a scalable pseudorandom quantum state generator as in \Cref{def:PRSour}, then for any $n(\lambda) \in \poly(\lambda)$, it gives a family of $n$-qubit pseudorandom quantum states as in \Cref{def:PRSJLS}.
\end{claim}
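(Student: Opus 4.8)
The plan is to derive the keyed state family directly from the generator and verify the two conditions of \Cref{def:PRSJLS} by a simulation argument. Fix $n = n(\lambda) \in \poly(\lambda)$, let $\cK$ be the output distribution of $\KeyGen(1^n, 1^\lambda)$, let $\cH = \C^{2^n}$, and define $\ket{\phi_k} \defeq V_k(1)$, the image of the isometry $V_k \colon \C \to \C^{2^n}$ output by $\CircuitGen(1^n, 1^\lambda, k)$ applied to the one-dimensional input; since $V_k^{\dagger} V_k = \id_1$, this is a unit vector, i.e.\ $\ket{\phi_k} \in S(\cH)$. For efficient generation, note that $\CircuitGen$ is a PT algorithm outputting a polynomial-size circuit for $V_k$; since the input register is trivial, running this circuit is just state preparation on $n$ ancilla qubits and yields $\ket{\phi_k}$ in time $\poly(\lambda)$, which together with the PPT algorithm $\KeyGen$ provides the generation algorithm $G$ with $G(k) = \ket{\phi_k}$.

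For pseudorandomness, suppose toward a contradiction that there is an efficient quantum algorithm $\alg$ and some $t \in \poly(\lambda)$ for which
\[
    \abs{\Pr_{k \gets \cK}\sparen{\alg(\ket{\phi_k}^{\otimes t}) = 1} - \Pr_{\ket{\psi} \gets \mu_n}\sparen{\alg(\ket{\psi}^{\otimes t}) = 1}}
\]
is non-negligible in $\lambda$. I would build an oracle distinguisher $\bB$ that, given black-box access to an isometry $\oracle \colon \C \to \C^{2^n}$, invokes $\oracle$ exactly $t$ times on $t$ fresh registers to prepare $(\oracle(1))^{\otimes t}$, then runs $\alg$ on this state and returns $\alg$'s bit. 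With $\oracle = V_k$ the prepared state is exactly $\ket{\phi_k}^{\otimes t}$, and with $\oracle = W_{\ket{\psi}} = \ket{\psi}$ it is exactly $\ket{\psi}^{\otimes t}$; the simulation involves no measurement or post-selection, so it is perfect. Since $t \in \poly(\lambda)$ and $\alg$ is polynomial-time, $\bB$ is a valid (non-uniform) QPT oracle distinguisher, and
\[
    \abs{\Pr_{k \gets \KeyGen(1^n, 1^\lambda)}\sparen{\bB^{V_k}(1^\lambda) = 1} - \Pr_{\ket{\psi} \gets \mu_n}\sparen{\bB^{W_{\ket{\psi}}}(1^\lambda) = 1}}
\]
equals the quantity above, contradicting the pseudorandomness clause of \Cref{def:PRSour}. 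Hence no such $\alg$ and $t$ exist, so $\{\ket{\phi_k}\}_k$ is pseudorandom in the sense of \Cref{def:PRSJLS}.

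I do not expect a genuine obstacle: the content of the claim is just that oracle access to a $\C \to \C^{2^n}$ state-preparation isometry subsumes receiving polynomially many i.i.d.\ copies of the output state, which is immediate because the isometry has trivial input. The one point worth a sentence is that the black-box model of \Cref{def:PRSour} is at least as permissive as making forward calls to the isometry (the reduction uses nothing more), so the argument is unaffected by whether that model additionally grants adjoint or controlled access — such extras would only make the generator hypothesis stronger, not weaken the implication.
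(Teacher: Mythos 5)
Your proposal is correct and takes essentially the same route as the paper: define the state family from the circuit output and reduce a $t$-copy distinguisher to an oracle distinguisher that simply calls the state-preparation isometry $t$ times. The only cosmetic difference is that the paper indexes the family by the randomness $r$ of $\KeyGen$ (so that the key is uniform over a finite set, as \Cref{def:PRSJLS} formally requires under the paper's convention that $k \gets \cK$ means uniform sampling), whereas you index by the output key and treat $\cK$ as the output distribution of $\KeyGen$; keying on $r$ removes that small mismatch but the reduction itself is identical.
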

\vspace{-0.2em}The proof is provided in \Cref{sec:scalable-prs-proof}.

\begin{definition}[Scalable and adaptive pseudorandom function-like quantum state (PRFS) generator]\label{def:prfs}
    Let $\lambda$ be the security parameter.
    A pair of algorithms $(\KeyGen,\CircuitGen)$ is a scalable adaptive pseudorandom function-like state generator if for any $n(\lambda), m(\lambda) \in \poly(\lambda)$, the following holds:

    \begin{enumerate}
        \item Key generation: $\KeyGen(1^n, 1^m, 1^{\lambda})$ is a PPT algorithm which outputs a key $k$.
        \item Circuit generation: $\CircuitGen(1^n, 1^m, 1^\lambda, k)$ is a PT algorithm whose output is a quantum circuit for an isometry $V_k: \C^{2^m} \to \C^{2^{n+m}}$.
        \item Adaptive pseudorandomness: For any (non-uniform) QPT distinguisher $\alg$,
    \begin{equation*}
        \abs{
            \Pr_{\substack{k \gets \KeyGen(1^n, 1^m, 1^{\lambda})}}
                \sparen{\alg^{V_k}(1^{\lambda}) = 1}
            -
            \Pr_{\{\ket{\psi_x} \gets \mu_n\}_{x \in \{0,1\}^m}}
                \sparen{\alg^{W_{\left\{\ket{\psi_x}\right\}_{x}}}(1^{\lambda}) = 1}
        }
        = \negl(\lambda),
    \end{equation*}
    where $W_{\left\{\ket{\psi_x}\right\}_{x}}:\C^{2^m} \to \C^{2^{n+m}}$ is the isometry
    $$ W_{\left\{\ket{\psi_x}\right\}_{x}} \defeq  \sum_{x=0}^{2^m-1}  (\ket{x} \ket{\psi_{x}}) \bra{x}.$$
    \end{enumerate}
\end{definition}

\section{Isometric asymptotically random state generator}\label{sec:ARS}

In this section, we show an isometric procedure which, given quantum oracle access to a random function, efficiently generates a quantum state that can be arbitrarily close to a Haar-random state statistically. More specifically, we show how to construct a unitary $\operatorname{RS}^{U_f}_{n, \lambda}$ such that $\{\operatorname{RS}^{U_f}_{n, \lambda}\ket{0}^{\otimes n}\}_f$ is statistically close to Haar-random states , with $\negl(\lambda)$ distance.

The following theorem, as the main result of this section, is a direct corollary of \Cref{thm:ars}.

\begin{theorem} \label{thm:ars-main}
    For every $\lambda \in \N^+$ security parameter, $n(\lambda) \in \poly(\lambda)$ number of qubits,
    there exists a unitary $\operatorname{RS}^{U_f}_{n, \lambda}$ which can be implemented efficiently given quantum oracle access to $U_f$,
    where $f: \cX \rightarrow \cY$,
    $\cX = \{0, 1\}^{n+1}$ and $\cY = \{0,1\}^{\poly(n + \lambda)}$.
    For any (non-uniform, quantum-output) quantum algorithm $\alg$ that can access an oracle at most $\poly(\lambda)$ times,
    \[
        \TD \left (
            \ex_{f \gets \cY^\cX}
                \sparen{\alg^{V_f}(1^\lambda)},
            \ex_{\ket{\psi} \gets \mu_n}
                \sparen{\alg^{W_{\ket{\psi}}}(1^\lambda)}
            \right )
        = \negl(\lambda) ,
    \]
    where
    \begin{itemize}
        \item $V_f: \C \to \C^{2^n}$ is the isometry $V_f \defeq \operatorname{RS}_{n, \lambda}^{U_f} \ket{0}^{\otimes n}$.
        \item $W_{\ket{\psi}}: \C \to \C^{2^n}$ is the isometry $W_{\ket{\psi}} \defeq \ket{\psi}$.
    \end{itemize}
\end{theorem}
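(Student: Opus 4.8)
The plan is to reduce \Cref{thm:ars-main} to \Cref{thm:ars} (the detailed statement, which I assume is proved in the body of \Cref{sec:ARS}) by first establishing the idealized correctness of the isometric procedure and then bounding the precision error. I would organize the argument in three stages. First, \emph{idealized correctness}: show that if the Beta-sampling subroutine $U_B^{(k)}$ were replaced by an oracle that produces \emph{exact} $\dBeta(2^{k-1},2^{k-1})$ samples (with continuous, infinite-precision values), then the qubit-by-qubit controlled-rotation procedure followed by the random-phase step produces exactly the state $\sum_z \alpha_z e^{2\pi i u_z/2^\lambda}\ket{z}$ whose coefficient vector, appealing to \Cref{fact:gamma-beta}, \Cref{fact:chi-sq-gamma}, and the telescoping identity for the weights $W_{z[1,\dots,t]}$, is distributed exactly as a normalized standard complex Gaussian vector. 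By \Cref{fact:haar} this is exactly a Haar-random state, and since $f$ is a truly random function each prefix $z$ gets an independent Beta/phase value, so the recursion is genuinely independent at each node of the tree in \Cref{fig:ramp-3-qubits}. This is where \Cref{lma:beta-ramp} and \Cref{lma:phase-gaussian} get invoked.

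Second, \emph{precision error per invocation}: replace the idealized sampler with the actual finite-precision algorithm $\alg_B(k,z)$ (whose guarantee is \Cref{lma:smp-beta}/\Cref{lma:smp-beta} in the appendix — rounded Beta sampling with negligible statistical distance) and the random phase with the $\lambda$-bit rounding $u_z/2^\lambda$. Using \Cref{fact:pure-td} and \Cref{fact:dpi} (data processing), the trace distance between the output of $V_f$ and a genuine Haar-random state, for a fixed $f$, is bounded by the accumulated rounding errors over the $n$ rotation layers plus the $2^\lambda$-precision phase error; each of these is $\negl(\lambda)$ because the sampler's statistical distance is $\negl(\lambda)$ and $n=\poly(\lambda)$, so a union/triangle bound keeps it negligible. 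A subtle point worth stating carefully: the error must be controlled in trace distance on the \emph{isometry/channel} level, not just on $\ket 0^{\otimes n}$, but since $V_f$ has one-dimensional domain this is the same thing; when the same technique is used for ARFS the relevant bound is the diamond-norm one via \Cref{fact:diamond-choi}, but here \Cref{fact:pure-td} suffices.

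Third, \emph{from a single query to an oracle-accessing adversary}: an adversary $\alg$ makes at most $q=\poly(\lambda)$ queries to $V_f$ versus $W_{\ket\psi}$. I would do a hybrid argument over the $q$ query positions. Because $f$ is truly random, all $q$ applications of $V_f$ are applications of the \emph{same} fixed isometry, and each is within $\negl(\lambda)$ trace distance (on any input state, again using \Cref{fact:dpi}) of a channel that outputs a fresh Haar-random state; swapping one query at a time and summing the $q$ penalties by the triangle inequality gives total distance $q\cdot\negl(\lambda)=\negl(\lambda)$ between $\ex_f[\alg^{V_f}(1^\lambda)]$ and $\ex_{\ket\psi}[\alg^{W_{\ket\psi}}(1^\lambda)]$. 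One must be mildly careful that the "fresh Haar state per query" target matches $W_{\ket\psi}$ in the theorem statement — but since $W_{\ket\psi}$ just appends the \emph{same} $\ket\psi$ each time while $V_f$ appends the same (fixed-$f$) state each time, the hybrid is clean: replace $V_f$ by "the same Haar state each time" in one shot using the fixed-$f$ closeness, not query-by-query. So the genuine hybrid is only over the precision error, and even that can be absorbed into a single application of \Cref{fact:dpi} to the whole circuit $\alg^{V_f}$ viewed as one big CPTP map acting on the environment that holds $f$.

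The main obstacle I anticipate is the \emph{precision accounting for the Beta sampler in superposition}: $\alg_B(k,z)$ is run coherently over all prefixes $z$ simultaneously, so its per-$z$ statistical-distance guarantee must be upgraded to a statement about the coherent superposition — i.e., the garbage register after $U_B^{(k)}$ must be shown to introduce only $\negl(\lambda)$ trace distance uniformly, and crucially it must be \emph{uncomputed} cleanly so no entanglement with the environment survives (this is exactly the isometry claim, and the whole point distinguishing this construction from \cite{BS20}). Making the rounded-Beta-sampling lemma (\Cref{lma:smp-beta}) interface correctly with a coherent superposition — controlling the trace distance of $\sum_z\alpha_z\ket z\ket{r_z}\ket{\tilde b_z}$ from the idealized $\sum_z\alpha_z\ket z\ket{r_z}\ket{b_z}$ by the worst-case (or average-case, weighted by $|\alpha_z|^2$) statistical distance of the sampler — is the technical crux, together with confirming that the Beta parameters $2^{k}$ which grow exponentially do not blow up the required seed length beyond $\poly(n+\lambda)$ (this is why $\cY=\{0,1\}^{\poly(n+\lambda)}$ in the statement, and why \Cref{fact:stir} is in the preliminaries — to control the sampler's precision needs as a function of the shape parameter).
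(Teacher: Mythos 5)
Your proposal follows essentially the same route as the paper: \Cref{thm:ars-main} is deduced as a direct corollary of the quantitative statement \Cref{thm:ars}, whose proof is exactly your three stages — exact-parameter states are Haar-distributed via \Cref{lma:beta-ramp}, \Cref{lma:phase-gaussian} and \Cref{fact:haar}; rounding and sampler errors are bounded via \Cref{lma:state-b-u-close} and \Cref{lma:smp-beta}; and a per-query hybrid with the factor $l$ of oracle calls finishes the argument. Two minor corrections to side remarks: the precision error cannot be absorbed into a single application of \Cref{fact:dpi} (the two channels differ for every fixed parameter value, so the factor $l$ over queries is genuinely needed, and it appears in the paper's bound), and your anticipated ``sampler in superposition'' obstacle dissolves because for each fixed $f$ the Beta sampler is a deterministic function of $f(2^t+z)$ that is computed and exactly uncomputed, so the paper handles the error entirely at the level of classical parameter distributions (a union bound over the $2^n$ prefixes, with precision $2^{-n-\lambda}$) together with the parameterized-state closeness bound $\sqrt{2^n n \delta_1} + 2\pi\delta_2$ of \Cref{lma:state-b-u-close}.
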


\subsection{Random amplitudes procedure}

Consider a standard Gaussian random vector (\Cref{def:gaussian-vector})
\begin{align*}
    v = (A_0+i B_0, A_1+i B_1, \ldots, A_{N-1}+i B_{N-1})\in \C^{N},
\end{align*}
where $A_i, B_i \sim \cN(0, 1)$ independently.
The squared norm of each item $C_i = A_i^2 + B_i^2$ follows $\chi_2^2 = \dGamma(1, 2)$ (\Cref{fact:chi-sq-gamma}).

\begin{definition}[Random amplitudes quantum state]\label{def:ramp-state}
    Let random variables $C_i \sim \chi_2^2$ i.i.d. The random state
    \[
        \ket{\phi} = \frac 1 {\sqrt{\sum_{i=0}^{N-1} C_i}} \sum_{i=0}^{N-1} \sqrt{C_i} \ket{i}
    \]
    is called a random amplitudes quantum state.
\end{definition}

The first and most critical step in our ARS / PRS algorithm is to prepare such a random amplitudes quantum state.

\begin{lemma}\label{lma:beta-ramp}
    For any $n \in \N^+$, given $B_{t, z} \sim \dBeta(2^{n-t-1}, 2^{n-t-1})$ all independent, where $t \in [n], z \in \{0, 1\}^t$, the state prepared by the following procedure is an $n$-qubit random amplitudes quantum state.
    \begin{enumerate}
        \item Initialize with $\ket{0}^{\otimes n}$.
        \item For $k := 1 \text{ to } n$
        \begin{itemize}
            \item Map the first $k$ qubits from $\ket{z}\ket{0}$ to $\ket{z}(\sqrt{B_{k-1, z}} \ket{0} + \sqrt{1-B_{k-1, z}}\ket{1})$ for each $z \in \{0, 1\}^{k-1}$. i.e., apply the unitary
            $$U_{k}: \ket{z}\ket{0} \mapsto \ket{z}(\sqrt{B_{k-1, z}} \ket{0} + \sqrt{1-B_{k-1, z}}\ket{1})$$
            on the first $k$ qubits.
        \end{itemize}
    \end{enumerate}
    (Earlier, we have shown the example for $n=3$ in \Cref{fig:ramp-3-qubits}.)
\end{lemma}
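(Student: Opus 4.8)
The plan is to follow the binary tree of \Cref{fig:ramp-3-qubits}: track the ``subtree weights'' of a genuine random amplitudes quantum state and show that the branching ratios of those weights are exactly a tree of independent Beta variables with the parameters appearing in the lemma. Concretely, let $C_i \sim \chi^2_2$ i.i.d.\ for $i \in \{0,1\}^n$, and for $z \in \{0,1\}^t$ set $W_z := \sum_{i : i[1,\ldots,t] = z} C_i$, so $W_\emptyset = \sum_i C_i$ and $W_i = C_i$ when $|i| = n$. Since $\chi^2_2 = \dGamma(1,2)$ by \Cref{fact:chi-sq-gamma} and disjoint sets of the $C_i$ are independent, \Cref{fact:gamma-beta} gives $W_z \sim \dGamma(2^{n-t}, 2)$ with $W_{z0}, W_{z1}$ independent. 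Define $\tilde B_{t,z} := W_{z0}/W_z$. Writing $\ket{\phi} = \tfrac{1}{\sqrt{W_\emptyset}}\sum_i \sqrt{C_i}\ket{i}$ and telescoping $\frac{C_x}{W_\emptyset} = \prod_{t=0}^{n-1} \frac{W_{x[1,\ldots,t+1]}}{W_{x[1,\ldots,t]}}$ together with $W_{z0} + W_{z1} = W_z$ shows that the squared amplitude of $\ket{x}$ in $\ket{\phi}$ equals $\prod_{t=0}^{n-1} r_{t,x}$, where $r_{t,x} = \tilde B_{t,x[1,\ldots,t]}$ if $x[t+1] = 0$ and $r_{t,x} = 1 - \tilde B_{t,x[1,\ldots,t]}$ if $x[t+1] = 1$. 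On the other hand, running the procedure on a fixed realization of the $B_{t,z}$'s produces, by unwinding the controlled rotations along the tree, the state whose amplitude of $\ket{x}$ is $\sqrt{\prod_{t=0}^{n-1} r'_{t,x}}$ with $r'_{t,x}$ defined from the $B$'s by the same rule. Hence it suffices to show that $\{\tilde B_{t,z}\}_{t\in[n],\,z\in\{0,1\}^t}$ is a family of \emph{mutually independent} variables with $\tilde B_{t,z} \sim \dBeta(2^{n-t-1}, 2^{n-t-1})$: then $\{B_{t,z}\}$ and $\{\tilde B_{t,z}\}$ have the same joint law, so applying the deterministic procedure to each yields states with the same distribution, and feeding $\{\tilde B_{t,z}\}$ gives $\ket{\phi}$ on the nose.

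To pin down the law of $\{\tilde B_{t,z}\}$ I would induct on $n$, proving the slightly stronger statement that, in addition, the whole family $\{\tilde B_{t,z}\}$ is jointly independent of $W_\emptyset$. The base case $n=1$ is \Cref{fact:gamma-beta} applied to $C_0,C_1 \sim \dGamma(1,2)$: $\tilde B_{0,\emptyset} = C_0/(C_0+C_1) \sim \dBeta(1,1)$ independent of $C_0+C_1 = W_\emptyset$. For the inductive step, split the depth-$n$ tree at the root into its left subtree (prefixes beginning with $0$) and right subtree (prefixes beginning with $1$); re-indexed by length-$(n-1)$ suffixes, each is a depth-$(n-1)$ tree on $2^{n-1}$ i.i.d.\ $\dGamma(1,2)$ leaves, so by the induction hypothesis the left-subtree ratios are mutually independent with the correct Beta laws and jointly independent of $W_0 = \sum_{i[1]=0} C_i$, and symmetrically on the right with $W_1$. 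Because the left and right leaf sets are disjoint, the block $(\text{left ratios}, W_0)$ is independent of $(\text{right ratios}, W_1)$; combining this with the two within-subtree independences gives that $(\text{left ratios},\text{right ratios})$ is independent of $(W_0,W_1)$, hence of every function of $(W_0,W_1)$ — in particular of $\tilde B_{0,\emptyset} = W_0/(W_0+W_1)$ and of $W_\emptyset = W_0+W_1$. Finally \Cref{fact:gamma-beta} applied to $W_0,W_1 \sim \dGamma(2^{n-1},2)$ gives $\tilde B_{0,\emptyset} \sim \dBeta(2^{n-1},2^{n-1})$ independent of $W_\emptyset$; assembling the three mutually independent blocks (left ratios, right ratios, $\tilde B_{0,\emptyset}$) and checking that each is independent of $W_\emptyset$ closes the induction, noting that the depth-$t$ parameter $2^{n-t-1}$ of the full tree equals the depth-$(t-1)$ parameter $2^{(n-1)-(t-1)-1}$ of a subtree.

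Two routine points remain to be dispatched: the map $U_k$ is specified in the lemma only on the subspace spanned by $\ket{z}\ket{0}$ with $z\in\{0,1\}^{k-1}$, but it extends to an honest unitary (a register of controlled $R_Y$ rotations with angles $\theta_{k-1,z} = 2\arccos\sqrt{B_{k-1,z}}$), and the procedure only ever feeds $U_k$ states of that form, so the output is well defined and has the stated tree structure; and the passage from the tree of rotations to the product formula for the squared amplitudes is just iterating $U_1,\ldots,U_n$ and collecting the branch factors, while the identification $\prod_t r_{t,x} = C_x/W_\emptyset$ is the telescoping noted above (all amplitudes being nonnegative reals, equality of squared amplitudes gives equality of the states).

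The main obstacle is the independence bookkeeping in the inductive step: one must strengthen the inductive hypothesis to include joint independence of all subtree ratios from the root weight $W_\emptyset$, since \Cref{fact:gamma-beta} by itself only yields independence of the single new ratio $\tilde B_{0,\emptyset}$ from $W_\emptyset$ and cannot otherwise be chained with the subtree data. Everything else reduces to repeated application of \Cref{fact:gamma-beta}/\Cref{fact:chi-sq-gamma} and elementary algebra on the tree.
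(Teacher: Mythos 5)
Your proposal is correct, but it runs the argument in the opposite direction from the paper. The paper inducts over the procedure's iterations $t$ (for fixed $n$), tracking the state after $t$ rotations and showing that the partial-product weights $W_{t,z}$ are i.i.d.\ $\chi^2_{2^{n-t+1}}$; here \Cref{fact:gamma-beta} is used in the ``synthesis'' direction (an independent chi-squared weight times an independent Beta splits into two independent chi-squareds), and the independence bookkeeping is essentially free because the $B_{t,z}$ are independent by hypothesis, so one only needs to note that the current weights are independent of the not-yet-consumed Betas. You instead start from the target random amplitudes state, define the subtree weights and their branching ratios $\tilde B_{t,z}=W_{z0}/W_z$, and prove by induction on the tree depth $n$ (splitting at the root) that these ratios are mutually independent $\dBeta(2^{n-t-1},2^{n-t-1})$ variables, then transfer the conclusion via the observation that the procedure is a deterministic map of its parameters, so equal joint laws of $\{B_{t,z}\}$ and $\{\tilde B_{t,z}\}$ give equally distributed outputs; this uses \Cref{fact:gamma-beta} in the ``analysis'' direction and genuinely needs the strengthened inductive hypothesis (joint independence of all ratios from the root weight $W_\emptyset$) that you correctly flag — without it the blocks cannot be merged across the root split. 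What your route buys is an explicit identification of the Beta tree inside the chi-squared weight decomposition of the target state (closer in spirit to the Grover--Rudolph picture), at the cost of the extra independence bookkeeping and the law-matching step; the paper's route is more economical because it computes the distribution of the procedure's output directly and never has to compare two joint laws.
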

\vspace{-0.2em}The proof is provided in \Cref{sec:ram-proof}.

Let $\cX = \{0, 1\}^{n+1}$ and $\cY = \{0,1\}^{\poly(n + \lambda)}$.
Given the unitary oracle $U_f: \ket{x}\ket{y} \mapsto \ket{x}\ket{y \oplus f(x)}$ corresponding to a classical random function $f: \cX \to \cY$, we now show a unitary algorithm $\operatorname{RA}_{n, \lambda}^{U_f}$ to implement the above random amplitudes procedure efficiently. That is, $\ket{\psi_f} = \operatorname{RA}_{n, \lambda}^{U_f} \ket{0}^{\otimes n}$ is ($\negl(\lambda)$-close to) a random amplitudes quantum state when $f \gets \cY^\cX$.

Define $B_f^{(t)}$ to be the unitary that computes (with $(n+\lambda)$-bit precision)
\[
    \theta_{t, z} = \frac {\arccos \sqrt{\alg_{RB}(1^{n+\lambda}, 2^{n-t-1}, f(2^t + z))}} {2 \pi}
    \in (0, 1),
\]
where $\alg_{RB}$ is the sampling algorithm in \Cref{lma:smp-beta} to sample from $\dBeta_{R(2^{-n-\lambda})}(2^{n-t-1}, 2^{n-t-1})$. That is,
\[
    B_f^{(t)}: \ket{z}\ket{0} \mapsto \ket{z} \ket{\theta_{t, z}}.
\]

\begin{algorithm}[H]
    \caption{Random amplitudes procedure $\operatorname{RA}_{n, \lambda}^{U_f}$}\label{alg:ramp}

    \begin{algorithmic}[1]
        \INPUT $n$ qubits, denoted by $q_i$
        \State Prepare an $(n+\lambda)$-qubit ancilla register $q_r$, initialized to $\ket{0}^{\otimes (n+\lambda)}$
        \For {$t := 0$ to $n-1$}
            \State Apply $B_f^{(t)}$ on $q_i[1, \cdots, t]$ and $q_r$
            \State Apply controlled rotation on $q_i[t+1]$, controlled on $q_r$
            \State Apply $B_f^{(t)}$ on $q_i[1, \cdots, t]$ and $q_r$
        \EndFor
        \State Output $q_i$.
    \end{algorithmic}
\end{algorithm}

The implementation details of $B_f^{(t)}$ from $U_f$ are shown in \Cref{fig:bt}.
The implementation details of controlled rotation $R: \ket{0}\ket{\theta} \mapsto (\cos (2 \pi\theta) \ket{0} + \sin (2 \pi \theta) \ket{1})\ket{\theta}$ are shown in \Cref{fig:ur}. Notice that each controlled-$R_Y(2 \pi / 2^i)$ gate will rotate $2 \pi / 2^{i+1}$ if the corresponding bit of $\theta$ is $1$, so $R$ indeed rotates $2 \pi \theta$ in the end.
The entire procedure for \Cref{alg:ramp} is shown in \Cref{fig:ARS-ramp}.

In each iteration, the second $B_f^{(t)}$ uncomputes $q_r$, so $\operatorname{RA}_{n, \lambda}^{U_f}$ is indeed a unitary.

\begin{figure}[ht]
    \centering
    \scalebox{\circuitscale}{
    \begin{quantikz}[transparent]
        \lstick{$\ket{z}$} & \qwbundle{t}  &[0.75cm] \gate[4][2.75cm]{U_f}\gateinput[3]{$2^t+z$}\gateoutput[3]{$2^t+z$} & & \gate[4][2.75cm]{U_f} & \rstick{$\ket{z}$} \\
        \lstick{$\ket{1}$} & & & & & \rstick{$\ket{1}$} \\
        \lstick{$\ket{0}$} & \qwbundle{n-t} & & & & \rstick{$\ket{0}$} \\
        \lstick{$\ket{0}$} & \qwbundle{\poly(n + \lambda)} & \gateinput{$0$}\gateoutput{$f(2^t + z)$} & \gate[2][6cm]{\frac{\arccos \sqrt{\alg_{RB}(1^{n + \lambda}, 2^{n-t-1}, \cdot)}} {2 \pi}} & & \rstick{$\ket{0}$} \\
        \lstick{$\ket{0}$} & \qwbundle{n+\lambda} & & \gateinput{$0$}\gateoutput{$\theta_{t, z}$} & & \rstick{$\ket{\theta_{t, z}}$} \\
    \end{quantikz}
    }
    \caption{
        $B_f^{(t)}: \ket{z}\ket{0} \mapsto \ket{z}\ket{\theta_{t, z}}$
    }
    \label{fig:bt}
\end{figure}

\begin{figure}[ht]
    \centering
    \scalebox{\circuitscale}{
    \begin{quantikz}[transparent]
        \lstick[4]{$\ket{\theta}$} & & \ctrl{4} & & \ \ldots \ & & \\
        & & & \ctrl{3} & \ \ldots \ & & \\
        & \lstick{\vdots} \setwiretype{n} \\
        & & & & \ \ldots \ & \ctrl{1} & \\
        \lstick{$\ket{0}$} & & \gate{R_Y(\pi / 2^{\kappa - 2})} & \gate{R_Y(\pi / 2^{\kappa - 1})} & \ \ldots \ & \gate{R_Y(2 \pi)} & \\
    \end{quantikz}
    }
    \caption{
     Controlled rotation $R: \ket{\theta}\ket{0} \mapsto \ket{\theta}(\cos (2 \pi \theta) \ket{0} + \sin(2 \pi \theta)\ket{1})$,
        where $\theta \in (0, 1)$, presented by $\kappa$ qubits, is the angle of turn.
    }
    \label{fig:ur}
\end{figure}

\begin{figure}[ht]
    \centering
    \scalebox{\circuitscale}{
    \begin{quantikz}[transparent]
    \lstick{$\ket{0}$} & &[0.1cm] & & & & & & \ \ldots\  & & \gate{R} & & \rstick[5]{$\ket{\psi}$}\\
    \lstick{$\ket{0}$} & & & & & & & & \ \ldots\  & \gate[5][1.4cm]{B_f^{(n-1)}} & & \gate[5][1.4cm]{B_f^{(n-1)}} & \\
    & \lstick{\vdots} \setwiretype{n}  \\[0.5cm]
    \lstick{$\ket{0}$} & & & & & & \gate{R} & & \ \ldots\  & & & &\\
    \lstick{$\ket{0}$} & & & \gate{R} & & \gate[2][1.4cm]{B_f^{(1)}} & & \gate[2][1.4cm]{B_f^{(1)}} & \ \ldots\  & & & & \\
    \lstick{$\ket{0}$} & \qwbundle{n+\lambda} &
        \gate[1][1.4cm]{B_f^{(0)}}\gateinput{$0$}\gateoutput{$\theta$} & \ctrl{-1} & \gate[1][1.4cm]{B_f^{(0)}}\gateinput{$\theta$}\gateoutput{$0$} &
        \gateinput{$0$}\gateoutput{$\theta$} & \ctrl{-2} & \gateinput{$\theta$}\gateoutput{$0$} &
        \ \ldots \
        & \gateinput{$0$}\gateoutput{$\theta$} & \ctrl{-5} & \gateinput{$\theta$}\gateoutput{$0$} &
        \rstick{$\ket{0}$}\\
    \end{quantikz}
    }
    \caption{The random amplitudes procedure for ARS/PRS. }
    \label{fig:ARS-ramp}
\end{figure}

\subsection{Random phases procedure}

Given quantum oracle access to a random classical function $f: \{0, 1\}^{n+1} \to \{0, 1\}^{\poly(n + \lambda)}$, now we know how to prepare a random amplitudes state, denoted as $\ket{\phi} = \sum_{z=0}^{2^n-1} \alpha_z \ket{z}$. The next step is to apply an independent random phase for each $\alpha_z \ket{z}$.

\begin{lemma}\label{lma:phase-gaussian}
    Let $C \sim \chi_2^2, U \sim \cU(0, 1)$ independently, then both real and imaginary parts of $\sqrt{C} e^{2\pi i U}$ follow standard normal distribution independently
\end{lemma}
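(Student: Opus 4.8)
The plan is to compute the joint density of $(\Re(\sqrt{C}e^{2\pi i U}), \Im(\sqrt{C}e^{2\pi i U}))$ directly and recognize it as the product of two independent $\cN(0,1)$ densities, i.e. as the standard two-dimensional Gaussian density $\frac{1}{2\pi}e^{-(x^2+y^2)/2}$. Write $R = \sqrt{C}$ and $\Theta = 2\pi U$, so that the target random vector is $(X,Y) = (R\cos\Theta, R\sin\Theta)$; this is just the polar-to-Cartesian change of variables, and the whole statement is really the classical fact that the Box–Muller construction (\Cref{fact:box-muller}) is a bijection. Concretely, I would first note that $\Theta \sim \cU(0,2\pi)$ since $U\sim\cU(0,1)$, and that by \Cref{fact:chi-sq-gamma} we have $C \sim \chi^2_2 = \dGamma(1,2)$, whose PDF is $f_C(c) = \frac12 e^{-c/2}$ for $c \geq 0$.

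The key steps, in order: (i) Using the transformation theorem (\Cref{fact:trans}) with $h(c) = \sqrt{c}$, deduce that $R = \sqrt{C}$ has density $f_R(r) = f_C(r^2)\cdot |2r| = r e^{-r^2/2}$ for $r \geq 0$ (this is the Rayleigh density). (ii) Since $C$ and $U$ are independent, $R$ and $\Theta$ are independent, so the joint density of $(R,\Theta)$ on $[0,\infty)\times[0,2\pi)$ is $f_{R,\Theta}(r,\theta) = r e^{-r^2/2}\cdot \frac{1}{2\pi}$. (iii) Apply the two-dimensional change of variables $(x,y) = (r\cos\theta, r\sin\theta)$, whose inverse has Jacobian determinant $\partial(x,y)/\partial(r,\theta) = r$, hence the density transforms as $f_{X,Y}(x,y) = f_{R,\Theta}(r,\theta)/r = \frac{1}{2\pi} e^{-r^2/2} = \frac{1}{2\pi} e^{-(x^2+y^2)/2}$, valid for all $(x,y)\in\R^2$ (the map is a bijection from $(0,\infty)\times[0,2\pi)$ onto $\R^2\setminus\{0\}$, and the origin has measure zero). (iv) Observe that $\frac{1}{2\pi}e^{-(x^2+y^2)/2} = \left(\frac{1}{\sqrt{2\pi}}e^{-x^2/2}\right)\left(\frac{1}{\sqrt{2\pi}}e^{-y^2/2}\right)$, which is exactly the product of two standard normal densities, so $\Re(\sqrt{C}e^{2\pi i U})$ and $\Im(\sqrt{C}e^{2\pi i U})$ are independent and each $\cN(0,1)$.

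I do not expect any serious obstacle here; the only mild subtleties are bookkeeping ones: being careful that \Cref{fact:trans} applies on $(0,\infty)$ where $h(c)=\sqrt c$ is monotonic and differentiable, and noting that excluding the measure-zero sets $\{c=0\}$ and $\{x=y=0\}$ does not affect the distribution. Alternatively — and perhaps more cleanly — one can avoid the explicit Jacobian computation entirely by invoking \Cref{fact:box-muller} in reverse: writing $Z_0 = \sqrt{-2\ln X}\cos(2\pi Y)$, $Z_1 = \sqrt{-2\ln X}\sin(2\pi Y)$ with $X,Y\sim\cU(0,1)$ independent, \Cref{fact:box-muller} says $(Z_0,Z_1)$ are i.i.d. $\cN(0,1)$; and since $-2\ln X$ has an exponential distribution with mean $2$, which is precisely $\dGamma(1,2) = \chi^2_2$, the pair $(\sqrt{-2\ln X}, 2\pi Y)$ has the same joint law as $(\sqrt{C}, 2\pi U)$, so $(\Re(\sqrt C e^{2\pi i U}), \Im(\sqrt C e^{2\pi i U}))$ has the same law as $(Z_0, Z_1)$. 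Either route is short; I would present the direct Jacobian computation for self-containedness and remark on the Box–Muller interpretation.
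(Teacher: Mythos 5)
Your proposal is correct, and your primary route differs from the paper's. The paper proves the lemma by reducing directly to the Box--Muller fact already in its preliminaries: it sets $U_1 = e^{-C/2}$, uses the one-dimensional transformation theorem (\Cref{fact:trans}) to check that $U_1 \sim \cU(0,1)$, rewrites $\sqrt{C}e^{2\pi i U} = \sqrt{-2\ln U_1}\cos(2\pi U) + i\sqrt{-2\ln U_1}\sin(2\pi U)$, and invokes \Cref{fact:box-muller} to conclude. That is exactly your ``alternative'' route run in the opposite direction (you show $-2\ln X \sim \dGamma(1,2)=\chi^2_2$ rather than $e^{-C/2}\sim\cU(0,1)$; the two are the same change of variables). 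Your primary route --- computing the Rayleigh density of $R=\sqrt{C}$, forming the joint density of $(R,\Theta)$, and doing the polar-to-Cartesian Jacobian computation --- is a valid, self-contained derivation that makes the joint Gaussian density explicit; its only cost is that it relies on the two-dimensional change-of-variables formula, which is standard but not among the paper's stated facts (\Cref{fact:trans} is only one-dimensional), whereas the paper's argument needs nothing beyond the one-dimensional transformation theorem and the already-quoted Box--Muller fact, keeping the proof to a few lines. Both arguments are sound; your handling of the measure-zero boundary sets is the right level of care.
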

\vspace{-0.2em}The proof is provided in \Cref{sec:phase-gaussian-proof}.

We now show a unitary algorithm $\operatorname{RP}_{n, \lambda}^{U_f}$ to implement the random phases procedure efficiently.
To simplify the notation,
for $f: \{0, 1\}^{n+1} \to \{0, 1\}^{\poly(n + \lambda)}$ we define the function $\check{f}: \{0, 1\}^n \to \{0, 1\}^\lambda$ as $$\check{f}(z) = f(2^n + z)[1, \dots, \lambda].$$
Here we add $2^n$ to ensure that the input of $f$ is different from what has already been used in the random amplitudes procedure.

\begin{algorithm}[H]
    \caption{Random phases procedure $\operatorname{RP}_{n, \lambda}^{U_f}$}\label{alg:rp}
    \begin{algorithmic}[1]
        \INPUT $n$ qubits, denoted by $q_i$
        \State Prepare a $\lambda$-qubit ancilla register $q_r$, initialized to $\ket{0}^{\otimes \lambda}$
        \State Apply $U_{\check{f}}$ on $q_i$ and $q_r$.
        \For {$k := 1$ to $\lambda$}
            \State Apply phase shift gate $P(2 \pi / 2^k)$ on $q_r[k]$.
        \EndFor
        \State Apply $U_{\check{f}}$ on $q_i$ and $q_r$.
        \State Output $q_i$.
    \end{algorithmic}
\end{algorithm}

Let the input be $\sum_{z=0}^{2^n-1} \alpha_z \ket{z}$. After applying the first $U_{\check{f}}$, the state is
\(
    \sum_{z=0}^{2^n-1} \alpha_z \ket{z} \ket{\check{f}(z)}.
\)
Since each phase shift gate $P(2 \pi / 2^k)$ applies a phase $2 \pi / 2^k$ to $\ket{\check{f}(z)}$ when the $k$-th bit of $\check{f}(z)$ is $1$, in total they apply a phase $2 \pi \check{f}(z) / 2^\lambda$ to $\ket{\check{f}(z)}$,
and the entire state becomes
\(
    \sum_{z=0}^{2^n-1} \alpha_x \ket{z} e^{2 \pi i \check{f}(z) / 2^\lambda} \ket{\check{f}(z)}.
\)
After uncomputing the ancilla register in the last step, the output is
\(
    \sum_{z=0}^{2^n-1} \alpha_z e^{2 \pi i \check{f}(z) / 2^\lambda} \ket{z}.
\)

\subsection{Asymptotically random states}

By combining the random amplitudes procedure \Cref{alg:ramp} and the random phases procedure \Cref{alg:rp}, we define our unitary random state procedure
\begin{equation}
    \operatorname{RS}_{n, \lambda}^{U_f} \defeq \operatorname{RP}_{n, \lambda}^{U_f}  \operatorname{RA}_{n, \lambda}^{U_f}. \label{random-u}
\end{equation}

\Cref{fact:haar}, \Cref{lma:beta-ramp}, and \Cref{lma:phase-gaussian} imply that the ideal output state (infinite precision and a perfect sampling algorithm) of the random amplitudes procedure followed by the random phases procedure is a Haar-random state. We now show that the distance is negligible in $\lambda$ when considering the actual algorithm output.

\begin{definition}\label{def:state-b-u}
    Let $n \in \N^+$ be the number of qubits. For parameters $\bB = (b_{t, z})_{t \in [n], z \in \{0, 1\}^t}$ and $\bU = (u_z)_{z \in \{0, 1\}^n}$, where $b_{t, z} \in [0, 1]$ and $u_z \in [0, 1)$, we define $\ket{\psi_{\bB, \bU}}$ as the quantum state prepared by the following procedure:
    \begin{enumerate}
        \item Initialize with $\ket{0}^{\otimes n}$.
        \item For $t := 0 \text{ to } n-1$
        \begin{itemize}
            \item Map the first $t+1$ qubits from $\ket{z}\ket{0}$ to $\ket{z}(\sqrt{b_{t, z}} \ket{0} + \sqrt{1-b_{t, z}}\ket{1})$ for each $z \in \{0, 1\}^{t}$. i.e., apply the unitary
            $$U^{(t)}: \ket{z}\ket{0} \mapsto \ket{z}(\sqrt{b_{t, z}} \ket{0} + \sqrt{1-b_{t, z}}\ket{1})$$
            on the first $t+1$ qubits.
        \end{itemize}
        \item Map $\ket{z}$ to $e^{2\pi i u_z} \ket{z}$ for each $z \in \{0, 1\}^n$. i.e., apply the unitary
        $$U_P: \ket{z} \mapsto e^{2 \pi i u_z} \ket{z}$$
        on all $n$ qubits.
    \end{enumerate}
\end{definition}

\begin{lemma}\label{lma:state-b-u-close}
    Let $n \in \N^+$ the number of qubits. If
    \begin{align*}
        &\bB = (b_{t, z})_{t \in [n], z \in \{0, 1\}^t}, \bU = (u_z)_{z \in \{0, 1\}^n}, \\
        &\bB' = (b'_{t, z})_{t \in [n], z \in \{0, 1\}^t}, \bU' = (u'_z)_{z \in \{0, 1\}^n}
    \end{align*}
    satisfy $b_{t, z}, b'_{t, z} \in [0, 1] ,u_z, u'_z \in [0, 1)$, $\abs{b_{t, z} - b'_{t, z}} < \delta_1$ and $\abs{u_z - u'_z} < \delta_2$ for all $t, z$, then
    \[
        \TD(\ketbra{\psi_{\bB, \bU}}, \ketbra{\psi_{\bB', \bU'}}) < \sqrt{2^n n \delta_1} + 2 \pi \delta_2.
    \]
\end{lemma}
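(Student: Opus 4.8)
The plan is to bound the trace distance between the two output pure states by the Euclidean distance between the corresponding unit vectors, and then to build up that Euclidean distance step by step through the $n+1$ unitaries (the $n$ rotation layers and the final phase layer) that produce each state. Concretely, write $\ket{\psi_{\bB,\bU}} = U_P U^{(n-1)} \cdots U^{(0)} \ket{0}^{\otimes n}$ and similarly $\ket{\psi_{\bB',\bU'}} = U'_P U'^{(n-1)} \cdots U'^{(0)} \ket{0}^{\otimes n}$, and use the telescoping identity: the norm difference of two products of unitaries applied to a fixed state is at most the sum of the operator norms of the differences $\norm{U^{(t)} - U'^{(t)}}_{\mathrm{op}}$ (and $\norm{U_P - U'_P}_{\mathrm{op}}$), since unitaries preserve norm. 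By \Cref{fact:pure-td}, $\TD(\ketbra{\psi}, \ketbra{\varphi}) = \sqrt{1 - \abs{\braket{\psi|\varphi}}^2} \le \norm{\ket{\psi} - \ket{\varphi}}_2$ for unit vectors, so it suffices to control the Euclidean distance.

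First I would handle the phase layer, which is easiest: $U_P$ and $U'_P$ are diagonal with entries $e^{2\pi i u_z}$ and $e^{2\pi i u'_z}$, so $\norm{U_P - U'_P}_{\mathrm{op}} = \max_z \abs{e^{2\pi i u_z} - e^{2\pi i u'_z}} \le 2\pi \max_z \abs{u_z - u'_z} < 2\pi \delta_2$, using $\abs{e^{i\theta} - 1} \le \abs{\theta}$. This contributes the $2\pi\delta_2$ term directly. Next I would handle a single rotation layer $U^{(t)}$ versus $U'^{(t)}$. These act as controlled single-qubit rotations: on the block labelled by $z \in \{0,1\}^t$, the map sends $\ket{0} \mapsto \sqrt{b_{t,z}}\ket{0} + \sqrt{1-b_{t,z}}\ket{1}$, i.e. it is an $R_Y$ rotation by an angle $\theta_{t,z}$ with $\cos^2(\theta_{t,z}/2) = b_{t,z}$ (acting on the $(t{+}1)$-st qubit, identity on the rest). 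Since $U^{(t)} - U'^{(t)}$ is block-diagonal over the $z$-blocks, its operator norm is the max over $z$ of the operator norm of a single $2\times 2$ (padded) rotation difference, which is bounded by the difference in half-angles, hence by $C\,\abs{\sqrt{b_{t,z}} - \sqrt{b'_{t,z}}}$ for an absolute constant. The catch is that $\sqrt{\cdot}$ is not Lipschitz near $0$, so $\abs{\sqrt{b} - \sqrt{b'}}$ is only bounded by $\sqrt{\abs{b-b'}} < \sqrt{\delta_1}$, not by $O(\delta_1)$. This is exactly the source of the $\sqrt{\delta_1}$ (rather than $\delta_1$) in the bound.

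Then I would assemble: summing the $n$ rotation-layer contributions, each of size $O(\sqrt{\delta_1})$, gives $O(n\sqrt{\delta_1})$ — but the stated bound is $\sqrt{2^n n \delta_1}$, which is larger, so a cruder accounting suffices and in fact one does not even need the operator-norm telescoping to be tight. An alternative and perhaps cleaner route that naturally produces the $\sqrt{2^n n \delta_1}$ shape: track the full amplitude vectors directly. The amplitude of $\ket{z}$ in $\ket{\psi_{\bB,\bU}}$ is $e^{2\pi i u_z}\prod_{t=0}^{n-1} \sqrt{b_{t, z[1..t]}}^{\,\epsilon_t}\sqrt{1-b_{t,z[1..t]}}^{\,1-\epsilon_t}$ where $\epsilon_t = 1 - z[t{+}1]$; the squared amplitudes are products of $b$'s and $(1-b)$'s, so one can bound $\abs{\,\abs{\alpha_z}^2 - \abs{\alpha'_z}^2\,}$ by a telescoping sum over the $n$ levels, each term $\le \delta_1$, giving $\le n\delta_1$ per $z$; summing over $2^n$ values of $z$ and passing from the $\ell_1$ distance of the squared-amplitude (probability) vectors to the $\ell_2$ distance of the amplitude vectors (which is where a square root enters, via $\norm{p - q}_1 \ge \norm{\sqrt p - \sqrt q}_2^2$ type inequalities) yields the $\sqrt{2^n n \delta_1}$ bound; the phase discrepancy adds $2\pi\delta_2$ on top via the triangle inequality. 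I expect the main obstacle to be the bookkeeping in the telescoping over the $n$ levels — carefully showing that each level contributes at most $\delta_1$ to the per-$z$ squared-amplitude discrepancy while keeping the products bounded by $1$ — together with getting the square-root conversion between probability-vector distance and amplitude-vector distance right; the phase part is routine.
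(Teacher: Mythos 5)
Your second ("alternative and cleaner") route is correct and does give the stated bound, but it is a genuinely different mechanism from the paper's. The paper proves this lemma in one line, as the $m=0$ case of \Cref{lma:iso-b-u-close}; there, the per-basis-state weights $c_z=\prod_t \operatorname{Flip}(b_{t,z[1..t]},z[t{+}1])$ are telescoped exactly as you do to get $\abs{c_z-c'_z}<n\delta_1$, but the conversion to trace distance goes through the fidelity: $\braket{\psi_{\bB,\bU}|\psi_{\bB',\bU}}=\sum_z\sqrt{c_z c'_z}\geq 1-\tfrac12\sum_z\abs{c_z-c'_z}$ via $\sqrt{cc'}\geq\min\{c,c'\}$, then $\TD=\sqrt{1-\abs{\braket{\cdot|\cdot}}^2}$ (\Cref{fact:pure-td}), and the phase part is handled by a cosine estimate giving $2\pi\delta_2$. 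You instead bound $\TD$ by the Euclidean distance of the state vectors, split off the phase layer by a triangle inequality through the intermediate state $\ket{\psi_{\bB',\bU}}$ (the phase term is $\leq 2\pi\delta_2$ since $\sum_z c'_z=1$), and convert the amplitude discrepancy via the Hellinger-type inequality $\norm{\sqrt p-\sqrt q}_2^2\leq\norm{p-q}_1\leq 2^n n\delta_1$. Both arguments hinge on the same telescoping of the product weights; yours is arguably more elementary (no fidelity computation), while the paper's fidelity route is what generalizes directly to the Choi-state/diamond-norm argument needed for the $m>0$ isometry version.

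One caution about your first sketch (which you abandon, so it does not affect the verdict): the claim that the per-layer rotation-angle difference is bounded by $C\abs{\sqrt{b}-\sqrt{b'}}$ is false, since $\arccos$ is not Lipschitz near $\pm1$ (take $b=1$, $b'=1-\epsilon$: the angle difference is $\approx\sqrt{\epsilon}$ while $\abs{\sqrt b-\sqrt{b'}}\approx\epsilon/2$). The correct per-layer bound is $O(\sqrt{\delta_1})$ directly (e.g.\ $\abs{\arccos\sqrt b-\arccos\sqrt{b'}}\leq\tfrac{\pi}{2}\sqrt{\abs{b-b'}}$), and with that constant the layer-by-layer sum does not obviously sit below $\sqrt{2^n n\delta_1}$ for $n=1,2$, so the route you actually carried out is the right one to keep.
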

\begin{proof}
        This result follows directly from \Cref{lma:iso-b-u-close} with $m=0$.
\end{proof}

\begin{theorem} \label{thm:ars}
    For every $n \in \N^+$ number of qubits, $\lambda \in \N^+$ security parameter,
    let $\cX = \{0, 1\}^{n+1}$, $\cY = \{0,1\}^{\poly(n + \lambda)}$,
    and $\operatorname{RS}_{n, \lambda}^{U_f}$ be defined as in \cref{random-u}.
    For any (non-uniform, quantum-output) quantum algorithm $\alg$ that can access an oracle at most $l$ times,
    \[
        \TD \left (
            \ex_{f \gets \cY^\cX}
                \sparen{\alg^{V_f}(1^\lambda)},
            \ex_{\ket{\psi} \gets \mu_n}
                \sparen{\alg^{W_{\ket{\psi}}}(1^\lambda)}
            \right )
        < (1 + \sqrt{2 \pi}) l \sqrt{n} 2^{-\frac \lambda 2} + 2 \pi l 2^{-\lambda} + 2^{-\lambda} ,
    \]
    where
    \begin{itemize}
        \item $V_f: \C \to \C^{2^n}$ is the isometry $V_f \defeq \operatorname{RS}_{n, \lambda}^{U_f} \ket{0}^{\otimes n}$.
        \item $W_{\ket{\psi}}: \C \to \C^{2^n}$ is the isometry $W_{\ket{\psi}} \defeq \ket{\psi}$.
    \end{itemize}
\end{theorem}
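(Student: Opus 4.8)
The plan is a three-step argument: first identify the ideal output of $\operatorname{RS}_{n,\lambda}^{U_f}$ with an exact Haar-random state, then control the discretization and sampling error of the actual implementation, and finally lift single-state closeness to closeness of the $l$-query outputs and average. Reading off \Cref{alg:ramp} and \Cref{alg:rp} shows that $\operatorname{RS}_{n,\lambda}^{U_f}\ket{0}^{\otimes n}$ is exactly the state $\ket{\psi_{\bB(f),\bU(f)}}$ of \Cref{def:state-b-u}, with $u_z(f)=\check f(z)/2^\lambda$ and $b_{t,z}(f)=\cos^2(2\pi\theta_{t,z})$, where $\theta_{t,z}$ is the $(n+\lambda)$-bit value $\arccos\sqrt{\alg_{RB}(1^{n+\lambda},2^{n-t-1},f(2^t+z))}/(2\pi)$. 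If instead genuinely independent $B_{t,z}\sim\dBeta(2^{n-t-1},2^{n-t-1})$ and $U_z\sim\cU(0,1)$ are fed in, then \Cref{lma:beta-ramp} makes the amplitude stage output a random amplitudes quantum state and \Cref{lma:phase-gaussian} turns it into a normalized standard complex Gaussian random vector, which by \Cref{fact:haar} is Haar-distributed. Hence, writing $V_{\bB,\bU}$ for the isometry $\ket{\psi_{\bB,\bU}}$, one has $\ex_{\bB,\bU}\sparen{\alg^{V_{\bB,\bU}}(1^\lambda)}=\ex_{\ket{\psi}\gets\mu_n}\sparen{\alg^{W_{\ket{\psi}}}(1^\lambda)}$, and it remains to bound $\TD\!\left(\ex_{f}\sparen{\alg^{V_f}(1^\lambda)},\ \ex_{\bB,\bU}\sparen{\alg^{V_{\bB,\bU}}(1^\lambda)}\right)$.

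Next, since the amplitude stage queries $f$ only on inputs in $\{1,\dots,2^n-1\}$ while the phase stage queries it only on $\{2^n,\dots,2^{n+1}-1\}$, for uniform $f$ all the $b_{t,z}(f)$ and all the $u_z(f)$ are mutually independent, each a deterministic function of one fresh uniform seed. I would couple each $u_z(f)$ with $U_z\sim\cU(0,1)$ by rounding, so that $\abs{u_z(f)-U_z}<\delta_2:=2^{-\lambda}$ always, and couple each $b_{t,z}(f)$ with $B_{t,z}\sim\dBeta(2^{n-t-1},2^{n-t-1})$ by composing: the optimal coupling of the sampler's output with $\dBeta_{R(2^{-(n+\lambda)})}$ (which fails with probability at most the negligible statistical error $\varepsilon_{RB}$ of \Cref{lma:smp-beta}), the rounding coupling of $\dBeta_{R(2^{-(n+\lambda)})}$ with $\dBeta$, and the $(n+\lambda)$-bit truncation of $\theta_{t,z}$ propagated through the rotation (whose sensitivity is at most $2\pi$); outside the bad event this yields $\abs{b_{t,z}(f)-B_{t,z}}<\delta_1$, where $\delta_1$ splits into a Beta-rounding part $\le 2^{-(n+\lambda)}$ and a rotation-precision part $\le 2\pi\,2^{-(n+\lambda)}$. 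A union bound over the $2^n-1$ pairs $(t,z)$ gives a good event $E$ with $\Pr[E^c]\le(2^n-1)\varepsilon_{RB}\le 2^{-\lambda}$ (which is exactly why $\alg_{RB}$ must be run at precision $n+\lambda$), on which all bounds hold simultaneously; so \Cref{lma:state-b-u-close} together with $\sqrt{a+b}\le\sqrt a+\sqrt b$ gives, on $E$, $\TD(\ketbra{\psi_{\bB(f),\bU(f)}},\ketbra{\psi_{\bB,\bU}})<(1+\sqrt{2\pi})\sqrt n\,2^{-\lambda/2}+2\pi\,2^{-\lambda}$.

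Finally, since each query to the fixed isometry $V_f$ appends a fresh copy of $\ket{\psi_{\bB(f),\bU(f)}}$ and $\alg$ uses at most $l$ of them, $\alg^{V_f}(1^\lambda)=\Phi(\ketbra{\psi_{\bB(f),\bU(f)}}^{\otimes l})$ for a CPTP map $\Phi$ depending only on $\alg$; a standard hybrid over the $l$ copies together with \Cref{fact:dpi} then gives $\TD(\alg^{V_f}(1^\lambda),\alg^{V_{\bB,\bU}}(1^\lambda))\le l\,\TD(\ketbra{\psi_{\bB(f),\bU(f)}},\ketbra{\psi_{\bB,\bU}})$ for every fixed triple $(f,\bB,\bU)$ in the coupling. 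Taking the expectation over the coupled triple, invoking convexity of trace distance, and splitting on $E$ versus $E^c$ (where the trace distance is at most $1$),
\[
    \TD\!\left(\ex_{f}\sparen{\alg^{V_f}(1^\lambda)},\ \ex_{\ket{\psi}\gets\mu_n}\sparen{\alg^{W_{\ket{\psi}}}(1^\lambda)}\right)\le l\left((1+\sqrt{2\pi})\sqrt n\,2^{-\lambda/2}+2\pi\,2^{-\lambda}\right)+2^{-\lambda},
\]
which is the claimed bound.

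I expect the second step to be the main obstacle: the union bound there ranges over the exponentially many ($2^n-1$) independent Beta-sampling instances, so negligibility survives only because $\alg_{RB}$ is invoked at precision $n+\lambda$, making both its per-instance statistical error and the Beta-rounding spacing at most $2^{-(n+\lambda)}$; one has to keep three error sources straight — the rounding of $\dBeta$ to spacing $2^{-(n+\lambda)}$, the statistical error of $\alg_{RB}$ from \Cref{lma:smp-beta}, and the $(n+\lambda)$-bit truncation of $\theta_{t,z}$ propagated through $\arccos$ and $\cos$ — plus the separate $\lambda$-bit discretization of the phases, and feed them through \Cref{lma:state-b-u-close} so they combine into the stated constants. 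The only other point needing (routine) care is the reduction from ``$\le l$ queries to a fixed isometry'' to ``$l$ copies of a fixed state,'' which is what introduces the factor $l$.
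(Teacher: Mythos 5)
Your proposal is correct and follows essentially the same route as the paper's proof: both identify the ideal infinite-precision output with a Haar-random state via \Cref{lma:beta-ramp}, \Cref{lma:phase-gaussian} and \Cref{fact:haar}, bound the per-query deviation of the real output through \Cref{lma:state-b-u-close} using the same three error sources (Beta rounding at spacing $2^{-n-\lambda}$, the statistical error of $\alg_{RB}$ from \Cref{lma:smp-beta}, and the $(n+\lambda)$-bit truncation of $\theta_{t,z}$ with sensitivity $2\pi$, plus the $2^{-\lambda}$ phase discretization), pick up the factor $l$ by a hybrid over the oracle queries, and let the sampler's statistical error enter additively, giving the lone $2^{-\lambda}$ term. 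The only difference is presentational: you package the paper's hybrid chain $S_0$--$S_4$ as a single coupling with a good event (and merge the two discretization hybrids into one $\delta_1$ via $\sqrt{a+b}\le\sqrt{a}+\sqrt{b}$), which yields the identical bound.
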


\begin{proof}
    This result follows directly from \Cref{thm:arfs} with $m=0$.
    For clarity, we also provide an independent proof in \Cref{sec:ars-proof}.
\end{proof}

\section{Pseudorandom quantum state generators}\label{sec:PRS}

\Cref{thm:ars} shows an ARS generator which generates asymptotically random states from real random functions. By replacing the real random functions with quantum-secure pseudorandom functions, we get a pseudorandom quantum state generator.

\begin{theorem}\label{thm:prs}
    Let $(\KeyGen_\prf, \Eval_\prf)$ be a scalable quantum-secure pseudorandom function generator as in \Cref{def:scalable-prf}. The following is a scalable pseudorandom quantum state generator:
    \begin{itemize}
        \item $\KeyGen(1^n, 1^\lambda)$: Let $k \gets \KeyGen_\prf (1^{\poly(n + \lambda)}, 1^{n+1}, 1^\lambda)$. Output $k$.
        \item $\CircuitGen(1^n, 1^\lambda, k)$: Let function $\prf_k: \{0, 1\}^{n+1} \to \{0, 1\}^{\poly(n + \lambda)}$ be defined as $$\prf_k(\cdot)\defeq \Eval_\prf(1^{\poly(n + \lambda)}, 1^\lambda, k, \cdot).$$
        Output the circuit for $\operatorname{RS}_{n, \lambda}^{U_{\prf_k}} \ket{0}^{\otimes n}$ where $\operatorname{RS}_{n, \lambda}^{U_{\prf_k}}$ is defined in \cref{random-u}.
    \end{itemize}
\end{theorem}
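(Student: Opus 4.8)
The plan is to reduce the pseudorandomness of $(\KeyGen, \CircuitGen)$ directly to \Cref{thm:ars} and the quantum-security of the PRF. Concretely, I would fix an arbitrary $n(\lambda) \in \poly(\lambda)$ and an arbitrary non-uniform QPT distinguisher $\alg$ making at most $l = \poly(\lambda)$ oracle queries, and bound the quantity
\[
    \abs{
        \Pr_{k\gets \KeyGen(1^n, 1^\lambda)}\sparen{\alg^{V_k}(1^\lambda)=1}
        -
        \Pr_{\ket{\psi}\gets \mu_n}\sparen{\alg^{W_{\ket{\psi}}}(1^\lambda)=1}
    }
\]
via a two-step hybrid argument. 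The intermediate hybrid replaces the PRF $\prf_k$ by a genuinely random function $f \gets \cY^\cX$ with $\cX = \{0,1\}^{n+1}$, $\cY = \{0,1\}^{\poly(n+\lambda)}$, and uses $V_f = \operatorname{RS}_{n,\lambda}^{U_f}\ket{0}^{\otimes n}$ as the oracle.

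First I would handle the distance between the real experiment and the random-function hybrid. Here the point is that the isometry oracle $V_k$ (resp.\ $V_f$) is obtained by a fixed polynomial-size quantum circuit that makes a constant number of calls to $U_{\prf_k}$ (resp.\ $U_f$) per invocation — see \Cref{alg:ramp} and \Cref{alg:rp}. Since $\alg$ queries its oracle $l \in \poly(\lambda)$ times, the whole experiment $\alg^{V_{(\cdot)}}(1^\lambda)$ can be repackaged as a single QPT algorithm $\mathcal{B}$ that makes $\poly(\lambda)$ quantum queries to $U_{\prf_k}$ or $U_f$ and outputs a bit; by \Cref{def:scalable-prf} (adaptive pseudorandomness, instantiated with input length $n+1$ and output length $\poly(n+\lambda)$, both polynomial in $\lambda$), the two acceptance probabilities of $\mathcal{B}$ differ by $\negl(\lambda)$. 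This is the step I expect to require the most care: I need to argue that $\mathcal{B}$ correctly simulates $\alg$'s oracle $V$ using only black-box access to $U_{\prf_k}$/$U_f$ — which is exactly what the explicit circuit descriptions in \Cref{sec:ARS} give — and that uncomputation of the ancilla registers (the second $B_f^{(t)}$ in \Cref{alg:ramp} and the second $U_{\check f}$ in \Cref{alg:rp}) makes each oracle invocation a clean unitary, so no junk leaks to $\alg$'s workspace across queries.

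Second, for the distance between the random-function hybrid and the ideal Haar experiment, I would invoke \Cref{thm:ars} directly: with $l \in \poly(\lambda)$, it gives
\[
    \TD\paren{
        \ex_{f\gets \cY^\cX}\sparen{\alg^{V_f}(1^\lambda)},\
        \ex_{\ket{\psi}\gets \mu_n}\sparen{\alg^{W_{\ket{\psi}}}(1^\lambda)}
    }
    < (1+\sqrt{2\pi})\, l\sqrt{n}\, 2^{-\lambda/2} + 2\pi l\, 2^{-\lambda} + 2^{-\lambda},
\]
which is negligible in $\lambda$ since $l, n \in \poly(\lambda)$; a bound on the trace distance of the output states of $\alg$ in the two cases implies the same bound on the difference of probabilities of outputting $1$, by \Cref{fact:pure-td} / the operational meaning of trace distance. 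Combining the two steps by the triangle inequality yields $\negl(\lambda)$, establishing condition~3 of \Cref{def:PRSour}. Conditions~1 and~2 (that $\KeyGen$ is PPT and that $\CircuitGen$ outputs, in polynomial time, a quantum circuit for an isometry $\C \to \C^{2^n}$) are immediate: $\KeyGen_\prf$ and $\Eval_\prf$ are polynomial-time, and the circuit for $\operatorname{RS}_{n,\lambda}^{U_{\prf_k}}\ket{0}^{\otimes n}$ has size $\poly(n+\lambda) = \poly(\lambda)$ by the construction in \Cref{alg:ramp} and \Cref{alg:rp}, with $U_{\prf_k}$ replaced by its explicit classical-reversible implementation; being of the form $U\ket{0}^{\otimes n}$ for a unitary $U$, it is manifestly an isometry from $\C$ to $\C^{2^n}$.
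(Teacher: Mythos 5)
Your proposal matches the paper's proof essentially step for step: the paper likewise introduces the truly-random-function hybrid, bounds its distance to the Haar experiment via \Cref{thm:ars}, and handles the PRF-to-random-function step by exactly the reduction you describe (a distinguisher $\mathcal{B}$ that runs $\alg$ with $\operatorname{RS}_{n, \lambda}^{\oracle} \ket{0}^{\otimes n}$ as its oracle and invokes \Cref{def:scalable-prf}), merely phrased contrapositively, before combining the two bounds by the triangle inequality. The only nits are cosmetic: each invocation of $V$ makes $O(n)$ (not a constant number of) queries to the function oracle, and the step from trace distance to a bound on acceptance probabilities uses the operational (distinguishing-probability) characterization of trace distance rather than \Cref{fact:pure-td}; neither affects the argument.
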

This result follows by replacing the random function in \Cref{thm:ars} with a quantum-secure pseudorandom function, which is a standard technique. A full proof is provided in \Cref{sec:ar_to_pr_proof}.

\section{(Pseudo)random function-like quantum state generators}\label{sec:ARFS}

In this section, we modify our ARS algorithm $\operatorname{RS}_{n, \lambda}^{U_f}$ (\cref{random-u}) to accept an additional input state, so that we can get the construction for ARFS / PRFS.

For a classical function $f: \{0, 1\}^{n+m+1} \to \{0, 1\}^{\poly(n+m+\lambda)}$, we define the quantum oracle $U_f^{\gets m}$ as follows: Prepare $U_f$ and keep an $m$-qubit register. Each time the algorithm $\operatorname{RS}_{n, \lambda}^{U_f^{\gets m}}$ uses this oracle, this register is used as the first $m$ qubits, and $\operatorname{RS}_{n, \lambda}^{U_f^{\gets m}}$ only provides the rest $n+1$ qubits (see \Cref{fig:hidem} with $a=n+m+1$ and $b=\poly(n+m+\lambda)$).

\begin{figure}[h!]
    \centering
    \scalebox{\circuitscale}{
    \begin{quantikz}[transparent]
        & \qwbundle{m} && \gate[3][2.7cm]{U_f}\gateinput{$x$}\gateoutput{$x$}
        &&& \gate[3][2.7cm]{U_f}\gateinput{$x$}\gateoutput{$x$}
        &&& \gate[3][2.7cm]{U_f}\gateinput{$x$}\gateoutput{$x$} &  \\
        & \qwbundle{a-m} &&\gateinput{$z_1$}\gateoutput{$z_1$}
        &&\wireoverride{n}&\gateinput{$z_2$}\gateoutput{$z_2$}
        &&\wireoverride{n}&\gateinput{$z_3$}\gateoutput{$z_3$} & \\
        & \qwbundle{b} &&\gateinput{$y_1$}\gateoutput{$y_1 \oplus f(x||z_1)$}
        &&\wireoverride{n}&\gateinput{$y_2$}\gateoutput{$y_2 \oplus f(x||z_2)$}
        &&\wireoverride{n}&\gateinput{$y_3$}\gateoutput{$y_3 \oplus f(x||z_3)$}& \\
    \end{quantikz}
    }
    \caption{How quantum oracle $U_f^{\gets m}$ works, where $f: \{0, 1\}^a \to \{0, 1\}^b$.}
    \label{fig:hidem}
\end{figure}

Similarly to $\operatorname{RS}_{n, \lambda}^{U_f}$, we define
\begin{equation}
    \operatorname{RS}_{n, \lambda}^{U_f^{\gets m}} \defeq \operatorname{RP}_{n, \lambda}^{U_f^{\gets m}}  \operatorname{RA}_{n, \lambda}^{U_f^{\gets m}}. \label{random-f-u}
\end{equation}

The algorithm $\operatorname{RS}_{n, \lambda}^{U_f^{\gets m}}$ is still a unitary (on $n+m$ qubits). The circuit for the random amplitudes procedure part (i.e. $\operatorname{RA}_{n, \lambda}^{U_f^{\gets m}}$) is shown in \Cref{fig:ARFS-ramp}. After this, we apply random phases as before (i.e. $\operatorname{RP}_{n, \lambda}^{U_f^{\gets m}}$).

\begin{figure}[H]
    \centering
    \scalebox{\circuitscale}{
    \begin{quantikz}[transparent]
    \lstick{$\ket{0}$} & &[0.1cm] & & & & & & \ \ldots\  & & \gate{R} & & \rstick[5]{$\ket{\psi_x}$}\\
    \lstick{$\ket{0}$} & & & & & & & & \ \ldots\  & \gate[6][1.4cm]{B_f'^{(n-1)}} & & \gate[6][1.4cm]{B_f'^{(n-1)}} & \\
    & \lstick{\vdots} \setwiretype{n}  \\[0.5cm]
    \lstick{$\ket{0}$} & & & & & & \gate{R} & & \ \ldots\  & & & &\\
    \lstick{$\ket{0}$} & & & \gate{R} & & \gate[3][1.4cm]{B_f'^{(1)}} & & \gate[3][1.4cm]{B_f'^{(1)}} & \ \ldots\  & & & & \\
    \lstick{$\ket{x}$} & \qwbundle{m} & \gate[2][1.4cm]{B_f'^{(0)}} & & \gate[2][1.4cm]{B_f'^{(0)}} & & & & \ \ldots \ & & & & \rstick{$\ket{x}$}\\
    \lstick{$\ket{0}$} & \qwbundle{n+\lambda} &
        \gateinput{$0$}\gateoutput{$\theta$} & \ctrl{-2} & \gateinput{$\theta$}\gateoutput{$0$} &
        \gateinput{$0$}\gateoutput{$\theta$} & \ctrl{-3} & \gateinput{$\theta$}\gateoutput{$0$} &
        \ \ldots \
        & \gateinput{$0$}\gateoutput{$\theta$} & \ctrl{-6} & \gateinput{$\theta$}\gateoutput{$0$} &
        \rstick{$\ket{0}$}\\
    \end{quantikz}
    }
    \caption{The random amplitudes procedure for ARFS/PRFS, where $B_f'^{(t)}$ is same as $B_f^{(t)}$ except taking $m$ more qubits for input of $f$. }
    \label{fig:ARFS-ramp}
\end{figure}

\begin{definition}\label{def:iso-b-u}
    Let $n \in \N^+$ and $m \in \N$. For parameters $\bB = (b_{t, z}^{(x)})_{t \in [n], z \in \{0, 1\}^t, x \in \{0, 1\}^m}$ and $\bU = (u_z^{(x)})_{z \in \{0, 1\}^n, x \in \{0, 1\}^m}$, where $b_{t, z}^{(x)} \in [0, 1]$ and $u_z^{(x)} \in [0, 1)$, we define isometry $V_{\bB, \bU}: \C^{2^m} \to \C^{2^{n+m}}$ as
    \[
        V_{\bB, \bU} \defeq \sum_{x=0}^{2^m-1} (\ket{x} \ket{\psi_{\bB_x, \bU_x}}) \bra{x},
    \]
    where $\bB_x = (b_{t, z}^{(x)})_{t \in [n], z \in \{0, 1\}^t}$, $\bU_x = (u_z^{(x)})_{z \in \{0, 1\}^n}$, and $\ket{\psi_{\bB_x, \bU_x}}$ is defined in \Cref{def:state-b-u}. We use $\cE_{\bB, \bU}$ to denote $V_{\bB, \bU}$ as a channel, i.e., $\cE_{\bB, \bU}(\rho) = V_{\bB, \bU} \rho V_{\bB, \bU}^\dagger$.
\end{definition}

\begin{lemma}\label{lma:iso-b-u-close}
    Let $n \in \N^+$ and $m \in \N$.
    If
    \begin{align*}
        &\bB = (b_{t, z}^{(x)})_{t \in [n], z \in \{0, 1\}^t, x \in \{0, 1\}^m}, \bU = (u_z^{(x)})_{z \in \{0, 1\}^n, x \in \{0, 1\}^m},\\
        &\bB' = ({b'}_{t, z}^{(x)})_{t \in [n], z \in \{0, 1\}^t, x \in \{0, 1\}^m}, \bU' = ({u'}_z^{(x)})_{z \in \{0, 1\}^n, x \in \{0, 1\}^m}
    \end{align*}
    satisfy $b_{t, z}^{(x)}, {b'}_{t, z}^{(x)} \in [0, 1], u_z^{(x)}, {u'}_z^{(x)} \in [0, 1)$, $\abs{b_{t, z}^{(x)} - {b'}_{t, z}^{(x)}} < \delta_1$ and $\abs{u_z^{(x)} - {u'}_z^{(x)}} < \delta_2$ for all $t, z, x$, then
    \begin{align*}
        \frac 1 2 \norm{\cE_{\bB, \bU} - \cE_{\bB', \bU'}}_\diamond < 2^m \left( \sqrt{2^n n \delta_1} + 2 \pi \delta_2 \right).
    \end{align*}
\end{lemma}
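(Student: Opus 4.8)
The plan is to reduce the diamond-norm bound to a single estimate on the prepared pure states $\ket{\psi_{\bB_x,\bU_x}}$ (one for each classical input $x\in\{0,1\}^m$), to prove that estimate by a layer-by-layer hybrid over the state-preparation circuit of \Cref{def:state-b-u}, and then to lift it to a channel statement through the Choi representation and \Cref{fact:diamond-choi}. The heart of the argument is the claim that $\norm{\ket{\psi_{\bB_x,\bU_x}}-\ket{\psi_{\bB'_x,\bU'_x}}}_2 < \sqrt{2^n n\delta_1}+2\pi\delta_2$ for every $x$; note that this simultaneously yields \Cref{lma:state-b-u-close} (the $m=0$ case), so it must be established here from scratch rather than cited.

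To prove that estimate, fix $x$ and write $\ket{\psi_{\bB_x,\bU_x}} = U_P\,U^{(n-1)}\cdots U^{(0)}\ket{0}^{\otimes n}$ as in \Cref{def:state-b-u}, where $U^{(t)} = \sum_{z\in\{0,1\}^t}\ketbra z\otimes R\!\big(b_{t,z}^{(x)}\big)\otimes I$, $R(b)$ is the controlled $R_Y$ rotation sending $\ket 0$ to $\sqrt b\ket 0+\sqrt{1-b}\ket 1$, and $U_P = \sum_z e^{2\pi i u_z^{(x)}}\ketbra z$; define $U'^{(t)},U'_P$ analogously from $\bB',\bU'$. Interpolating by replacing one layer at a time and using that each layer is norm-preserving, one gets $\norm{\ket{\psi_{\bB_x,\bU_x}}-\ket{\psi_{\bB'_x,\bU'_x}}}_2 \le \sum_{t=0}^{n-1}\norm{U^{(t)}-U'^{(t)}}_\infty + \norm{U_P-U'_P}_\infty$. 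Since $U^{(t)}-U'^{(t)}$ is block-diagonal in the control register, its operator norm equals $\max_z \norm{R(b_{t,z}^{(x)})-R(b_{t,z}'^{(x)})}_\infty$; a direct computation gives $\norm{R(b)-R(b')}_\infty = \sqrt{(\sqrt b-\sqrt{b'})^2+(\sqrt{1-b}-\sqrt{1-b'})^2}$, which is at most $\sqrt{2|b-b'|}<\sqrt{2\delta_1}$ by the elementary inequality $(\sqrt a-\sqrt{a'})^2\le|a-a'|$ valid for $a,a'\ge 0$. Likewise $\norm{U_P-U'_P}_\infty = \max_z|e^{2\pi i u_z^{(x)}}-e^{2\pi i u_z'^{(x)}}|\le 2\pi\max_z|u_z^{(x)}-u_z'^{(x)}|<2\pi\delta_2$. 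Summing over the $n$ rotation layers and the phase layer, $\norm{\ket{\psi_{\bB_x,\bU_x}}-\ket{\psi_{\bB'_x,\bU'_x}}}_2 < n\sqrt{2\delta_1}+2\pi\delta_2 \le \sqrt{2^n n\delta_1}+2\pi\delta_2$, the last step because $2n\le 2^n$.

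To pass to channels, note that $V_{\bB,\bU}-V_{\bB',\bU'} = \sum_x \ket x\big(\ket{\psi_{\bB_x,\bU_x}}-\ket{\psi_{\bB'_x,\bU'_x}}\big)\bra x$ is block-diagonal in $x$, so $\norm{V_{\bB,\bU}-V_{\bB',\bU'}}_\infty = \max_x\norm{\ket{\psi_{\bB_x,\bU_x}}-\ket{\psi_{\bB'_x,\bU'_x}}}_2 < \sqrt{2^n n\delta_1}+2\pi\delta_2 =: D$. The Choi operator of an isometry channel is rank one: $J(\cE_{\bB,\bU}) = \ketbra{\Phi}$ with $\ket\Phi = (V_{\bB,\bU}\otimes I)\ket\Omega$ and $\ket\Omega=\sum_i\ket i\ket i$, whence $\braket{\Phi|\Phi}=2^m$, and similarly $J(\cE_{\bB',\bU'})=\ketbra{\Phi'}$ with $\braket{\Phi'|\Phi'}=2^m$. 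Writing $\ket{\hat\Phi},\ket{\hat\Phi'}$ for the normalized vectors and using $\sqrt{1-|\braket{a|b}|^2}\le\norm{\ket a-\ket b}_2$ for unit vectors (from \Cref{fact:pure-td}), we get $\TD(\ketbra{\hat\Phi},\ketbra{\hat\Phi'})\le\norm{\ket{\hat\Phi}-\ket{\hat\Phi'}}_2 = 2^{-m/2}\norm{(V_{\bB,\bU}-V_{\bB',\bU'})\otimes I\,\ket\Omega}_2\le 2^{-m/2}\,\norm{V_{\bB,\bU}-V_{\bB',\bU'}}_\infty\,\norm{\ket\Omega}_2 < D$. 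Hence, by \Cref{fact:diamond-choi},
\[
  \tfrac12\norm{\cE_{\bB,\bU}-\cE_{\bB',\bU'}}_\diamond \le \tfrac12\norm{J(\cE_{\bB,\bU})-J(\cE_{\bB',\bU'})}_1 = 2^m\,\TD(\ketbra{\hat\Phi},\ketbra{\hat\Phi'}) < 2^m D = 2^m\big(\sqrt{2^n n\delta_1}+2\pi\delta_2\big),
\]
which is exactly the claim.

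The only genuinely delicate point is the per-layer operator-norm estimate: because $b_{t,z}^{(x)}$ can be arbitrarily close to $0$ or $1$, the map $b\mapsto\sqrt b$ is not Lipschitz, so one must argue algebraically via $(\sqrt a-\sqrt{a'})^2\le|a-a'|$ rather than by differentiating, and one must use the block structure so that the $2^t$ independently perturbed rotations in layer $t$ contribute only a maximum, not a sum. Everything else — the telescoping hybrid, the rank-one Choi computation, and the (lossy) factor $2^m$ incurred by bounding the diamond norm by the Choi trace norm — is routine. (Incidentally, the same argument in fact proves the sharper bound with $n\sqrt{2\delta_1}$ in place of $\sqrt{2^n n\delta_1}$, and without the $2^m$ if one bounds the diamond norm directly by $\norm{V_{\bB,\bU}-V_{\bB',\bU'}}_\infty$; the stated form is what the applications in \Cref{sec:ARFS} use.)
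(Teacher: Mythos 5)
Your proposal is correct, and while it shares the paper's outer skeleton (bounding the diamond norm by the Choi trace norm via \Cref{fact:diamond-choi}, writing the Choi operators as rank-one with norm $2^m$, and thereby paying the factor $2^m$), the core estimate is obtained by a genuinely different route. The paper never compares the states in Euclidean norm: it computes the overlaps $\braket{J_{\bB,\bU}|J_{\bB',\bU}}$ explicitly through the product-form amplitudes $c_z^{(x)}=\prod_t \operatorname{Flip}(b^{(x)}_{t,z[1,\dots,t]},z[t+1])$, bounds $\abs{c_z^{(x)}-{c'}_z^{(x)}}<n\delta_1$ by a hybrid over the $n$ factors, uses $\sqrt{c c'}\ge\min\{c,c'\}$, and treats the $\bB$-perturbation and the $\bU$-perturbation as two separate triangle-inequality steps at the Choi level (with a trigonometric estimate for the phases). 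You instead prove a per-$x$ bound $\norm{\ket{\psi_{\bB_x,\bU_x}}-\ket{\psi_{\bB'_x,\bU'_x}}}_2< n\sqrt{2\delta_1}+2\pi\delta_2$ by telescoping over the circuit layers of \Cref{def:state-b-u}, bounding each layer difference in operator norm ($\sqrt{2\delta_1}$ per rotation layer via $(\sqrt a-\sqrt{a'})^2\le\abs{a-a'}$, which correctly sidesteps the non-Lipschitzness of $\sqrt{\cdot}$; $2\pi\delta_2$ for the diagonal phase layer), and then exploit the block structure of $V_{\bB,\bU}-V_{\bB',\bU'}$ plus the elementary inequality $\sqrt{1-\abs{\braket{a|b}}^2}\le\norm{\ket a-\ket b}_2$ to transfer this to the Choi states. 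All steps check out (including $2n\le 2^n$, so your bound implies the stated one, and it also yields \Cref{lma:state-b-u-close} as the $m=0$ case, as you note); the only implicit choice is fixing a unitary extension of $U^{(t)}$, which is harmless since the intermediate states always carry $\ket 0$ on the target qubit. Your approach buys a sharper constant ($n\sqrt{2\delta_1}$ in place of $\sqrt{2^n n\delta_1}$) and a more modular, circuit-level argument that handles $\bB$ and $\bU$ in one pass; the paper's computation is more explicit about the final amplitudes and works directly with the normalized output states, which is the form it reuses elsewhere.
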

\vspace{-0.2em}The proof is provided in \Cref{sec:diamond-proof}.

\begin{theorem}\label{thm:arfs}
    For every $n, m, \lambda \in \N^+$, let $\cX = \{0, 1\}^{n+m+1}$ and $\cY = \{0,1\}^{\poly(n + m + \lambda)}$.
    With $\operatorname{RS}_{n, \lambda}^{U_f^{\gets m}}$ defined as in \cref{random-f-u},
    for any (non-uniform, quantum-output) quantum algorithm $\alg$ that can access an oracle at most $l$ times,
    \begin{align*}
        \TD \left (
            \ex_{f \gets \cY^{\cX}}
                \sparen{\alg^{V_f}(1^\lambda)},
            \ex_{\{\ket{\psi_x} \gets \mu_n\}_{x \in \{0,1\}^m}}
                \sparen{\alg^{W_{\left\{\ket{\psi_x}\right\}_{x}}}(1^{\lambda})}
            \right ) \\
        < (1 + \sqrt{2 \pi}) l \sqrt{n} 2^{-\frac \lambda 2} + 2 \pi l 2^{-m-\lambda} + 2^{-m-\lambda},
    \end{align*}
    where
    \begin{itemize}
        \item $V_f: \C^{2^m} \to \C^{2^{n+m}}$ is the isometry
        $$
            V_f \defeq \operatorname{RS}_{n, \lambda+2m}^{U_f^{\gets m}} (I_m \otimes \ket{0}^{\otimes n}).
        $$
        \item $W_{\left\{\ket{\psi_x}\right\}_{x}}:\C^{2^m} \to \C^{2^{n+m}}$ is the isometry
        $$
            W_{\left\{\ket{\psi_x}\right\}_{x}} \defeq \sum_{x=0}^{2^m-1}  (\ket{x} \ket{\psi_{x}}) \bra{x}.
        $$
    \end{itemize}
\end{theorem}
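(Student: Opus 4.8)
The plan is to repeat the five-hybrid argument behind the proof of \Cref{thm:ars} (the $m=0$ special case), with three changes. First, the oracles are now isometries $\C^{2^m}\to\C^{2^{n+m}}$ rather than states, so distinguishability is controlled through the diamond norm and \Cref{lma:iso-b-u-close} in place of \Cref{lma:state-b-u-close}. Second, we instantiate $\operatorname{RS}$ with internal security parameter $\lambda':=\lambda+2m$, so the relevant precisions are $\varepsilon_1:=2^{-n-\lambda'}$ and $\varepsilon_2:=2^{-\lambda'}$. Third, we use that $U_f^{\gets m}$ exposes, for each fixed $m$-bit prefix $x$, an independent random function $f(x\|\cdot)$ — the domain blocks for distinct $x$ are disjoint, and within a block the amplitude inputs $\{x\|(2^t+z)\}$ and phase inputs $\{x\|(2^n+z)\}$ are disjoint. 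Concretely, define $S_0=\ex_{\{\ket{\psi_x}\}_x\gets\mu_n}[\alg^{W_{\{\ket{\psi_x}\}_x}}(1^\lambda)]$; $S_1$ with each $\ket{\psi_x}$ replaced by $\ket{\psi_{\bB_x,\bU_x}}$, where the $\bB_x$ are built from independent $\dBeta(2^{n-t-1},2^{n-t-1})$ samples and the $\bU_x$ from independent $\cU(0,1)$ phases, all independent over $x$; $S_2$ obtained by rounding every $b$ to a multiple of $\varepsilon_1$ and every $u$ down to a multiple of $\varepsilon_2$; $S_3$ obtained by replacing the idealized rounded samples by $\alg_{RB}(1^{n+\lambda'},2^{n-t-1},f(x\|(2^t+z)))$ and the rounded phases by $f(x\|(2^n+z))[1,\dots,\lambda']/2^{\lambda'}$; and $S_4=\ex_{f\gets\cY^\cX}[\alg^{V_f}(1^\lambda)]$.

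For $S_0\to S_1$: by \Cref{lma:beta-ramp}, \Cref{lma:phase-gaussian} and \Cref{fact:haar}, each $\ket{\psi_{\bB_x,\bU_x}}$ is an independent Haar-random state, so the two oracle distributions coincide and $\TD(S_0,S_1)=0$. For the remaining three steps I would run a hybrid over the $\le l$ oracle calls: conditioning on the un-rounded parameters (resp.\ on $f$), the rounded or algorithmic parameters are deterministic functions of them, and by \Cref{fact:dpi} together with \Cref{fact:diamond-choi} replacing the oracle in one call costs at most $\tfrac12\norm{\cE_{\bB,\bU}-\cE_{\bB',\bU'}}_\diamond$; averaging and the triangle inequality then multiply the per-call bound by $l$ — except for $S_2\to S_3$, where only the \emph{law} of the parameters changes while the oracle is a fixed function of them, so data processing already gives the bound with no factor $l$.

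For $S_1\to S_2$, \Cref{lma:iso-b-u-close} with $\delta_1=\varepsilon_1,\delta_2=\varepsilon_2$ gives per-call bound $2^m(\sqrt{2^n n\varepsilon_1}+2\pi\varepsilon_2)=\sqrt n\,2^{-\lambda/2}+2\pi\,2^{-m-\lambda}$, so $\TD(S_1,S_2)<l(\sqrt n\,2^{-\lambda/2}+2\pi\,2^{-m-\lambda})$. For $S_2\to S_3$, both sides use phases that are uniform multiples of $2^{-\lambda'}$, so only the $\le 2^{n+m}$ Beta samples change; \Cref{lma:smp-beta} moves each by $<2^{-n-\lambda'}$ in statistical distance, so $\TD(S_2,S_3)<2^{n+m}\cdot2^{-n-\lambda'}=2^{-m-\lambda}$. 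For $S_3\to S_4$, note that $V_f=\cE_{\hat{\bB}_f,\bU_f}$ with $\hat b^{(x)}_{t,z}=\cos^2(2\pi\floor{\theta^{(x)}_{t,z}/\varepsilon_1}\varepsilon_1)$ and $\theta^{(x)}_{t,z}$ the angle computed in \Cref{alg:ramp} (so that exactly $\sqrt{b^{(x)}_{t,z}}=\cos(2\pi\theta^{(x)}_{t,z})$); since $\abs{\tfrac{d}{d\theta}\cos^2(2\pi\theta)}\le2\pi$ and $\theta$ is truncated by at most $\varepsilon_1$, we get $\abs{\hat b^{(x)}_{t,z}-b^{(x)}_{t,z}}\le2\pi\varepsilon_1$, so \Cref{lma:iso-b-u-close} with $\delta_1=2\pi\varepsilon_1,\delta_2=0$ gives per-call bound $2^m\sqrt{2^n n\cdot2\pi\varepsilon_1}=\sqrt{2\pi n}\,2^{-\lambda/2}$, hence $\TD(S_3,S_4)<l\sqrt{2\pi n}\,2^{-\lambda/2}$. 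Summing by the triangle inequality yields $(1+\sqrt{2\pi})\,l\sqrt n\,2^{-\lambda/2}+2\pi l\,2^{-m-\lambda}+2^{-m-\lambda}$.

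Since the diamond-norm estimate is already isolated in \Cref{lma:iso-b-u-close}, the rest is essentially bookkeeping, and the one genuinely new ingredient — the step I would be most careful about — is the choice $\lambda'=\lambda+2m$. Both the diamond-norm bound of \Cref{lma:iso-b-u-close} and the summed statistical distance over the $2^m$ prefixes $x$ pick up a factor $2^m$ relative to their $m=0$ analogues, and raising the internal precision by $2m$ bits is exactly what cancels it; the spare $m$ bits produce the $2^{-m-\lambda}$ in the last two terms, so $\lambda+2m$ keeps the bound clean even though slightly less would suffice. A second point to pin down is that the hybrid-over-calls argument is legitimate against a fully adaptive, quantum-input/quantum-output adversary: the oracle is the channel $\cE_{\bB,\bU}$ acting on the adversary's query register, everything the adversary does between calls is some CPTP map, and this is precisely the regime in which the diamond norm with \Cref{fact:dpi} and \Cref{fact:diamond-choi} controls distinguishability call by call; moreover the averaged quantities are well defined because $\tilde{\bB},\tilde{\bU}$ (resp.\ $\hat{\bB}_f,\bU_f$, resp.\ the $\alg_{RB}$ outputs) are deterministic functions of the conditioning randomness.
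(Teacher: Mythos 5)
Your proposal is correct and follows essentially the same route as the paper's proof: the identical five-hybrid chain with internal parameter $\lambda'=\lambda+2m$, precisions $\varepsilon_1=2^{-n-\lambda'}$, $\varepsilon_2=2^{-\lambda'}$, per-query diamond-norm bounds via \Cref{lma:iso-b-u-close} for the rounding and $\cos^2$-truncation steps, and a single data-processing step (no factor $l$) for swapping the ideal rounded Beta samples for the $\alg_{RB}$ outputs. Your accounting of the $2^m$ factors and why $\lambda+2m$ cancels them matches the paper's calculation exactly.
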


\begin{proof}
    Let $\lambda' = \lambda + 2m$, $\varepsilon_1 = 2^{-n-\lambda'}$, and $\varepsilon_2 = 2^{-\lambda'}$.
    We use $\fB$ to denote the joint distribution of random variables $(B_{t, z}^{(x)})_{t \in [n], z \in \{0, 1\}^t, x \in \{0, 1\}^m}$ where $B_{t, z}^{(x)} \sim \dBeta(2^{n-t-1}, 2^{n-t-1})$ independently. That is, $\bB \gets \fB$ is short for $$\bB = \left(b_{t, z}^{(x)} \gets \dBeta(2^{n-t-1}, 2^{n-t-1}) \right)_{t \in [n], z \in \{0, 1\}^t, x \in \{0, 1\}^m}.$$
    Similarly, we use $\fU$ to denote the joint distribution of random variables $(U_z^{(x)})_{z \in \{0, 1\}^n, x \in \{0, 1\}^m}$ where $U_z^{(x)} \sim \cU(0, 1)$ independently.
    We use notations $\bB_x$, $\bU_x$, $V_{\bB, \bU}$ and $\cE_{\bB, \bU}$ for given $\bB, \bU$ as in \Cref{def:iso-b-u}.

    Consider the following hybrids:
    \begin{itemize}
        \item[$P_0$:]
        \(
            \ex_{\{\ket{\psi_x} \gets \mu_n\}_{x \in \{0,1\}^m}}
                \sparen{\alg^{W_{\left\{\ket{\psi_x}\right\}_{x}}}(1^{\lambda})} .
        \)
        \item[$P_1$:]
        \(
            \ex_{
                \bB \gets \fB \\ \bU \gets \fU
            } \sparen{\alg^{V_{\bB, \bU}}(1^\lambda)}.
        \)
        \item[$P_2$:]
        \(
            \ex_{
                \bB \gets \fB , \bU \gets \fU
            } \sparen{\alg^{V_{\tilde{\bB}, \tilde{\bU}}}(1^\lambda)},
        \)
        where $\tilde{\bB} = (\round{b_{t, z}^{(x)} / \varepsilon_1 } \varepsilon_1)_{t \in [n], z \in \{0, 1\}^t, x \in \{0, 1\}^m}, \tilde{\bU} = (\floor{u_z^{(x)} / \varepsilon_2} \varepsilon_2)_{z \in \{0, 1\}^n, x \in \{0, 1\}^m}$.
        \item[$P_3$:]
        \(
            \ex_{
                f \gets \cY^\cX
            } \sparen{\alg^{V_{\bB_f, \bU_f}}(1^\lambda)},
        \)
        where
        \begin{align*}
            &\bB_f = \big(b_{t, z}^{(x)} := \alg_{RB}(1^{n + \lambda'}, 2^{n-t-1}, f(x \cdot 2^{n+1} + 2^t + z))\big)_{t \in [n], z \in \{0, 1\}^t, x \in \{0, 1\}^m} ,\\
            &\bU_f = \big(u_z^{(x)} := f(x \cdot 2^{n+1} + 2^n + z)[1, \dots, \lambda'] / 2^{\lambda'} \big)_{z \in \{0, 1\}^n, x \in \{0, 1\}^m} .
        \end{align*}
        \item[$P_4$:]
        \(
            \ex_{f \gets \cY^\cX}
                \sparen{\alg^{V_f}(1^\lambda)}.
        \)
    \end{itemize}

    We now prove that each pair of consecutive distributions are close in trace distance.
    \begin{itemize}
        \item $\TD(P_0, P_1) = 0$.\\
            From \Cref{lma:beta-ramp} and \Cref{lma:phase-gaussian}, we know that for $\bB, \bU$ in $S_1$, for each $x$ the coefficients of $\ket{\psi_{\bB_x, \bU_x}}$ follow the distribution of the normalized standard Gaussian random vector. Thus, from \Cref{fact:haar}, we know it's a Haar-random state. And for different $x$, they are all independent.
            Since
            $$
                W_{\left\{\ket{\psi_x}\right\}_{x}} \defeq \sum_{x=0}^{2^m-1}  (\ket{x} \ket{\psi_{x}}) \bra{x}
            $$
            in $P_0$ and
            $$
                V_{\bB, \bU} \defeq \sum_{x=0}^{2^m-1} (\ket{x} \ket{\psi_{\bB_x, \bU_x}}) \bra{x}
            $$
            in $P_1$ have exactly the same distribution, we have $\TD(P_0, P_1) = 0$.

        \item $\TD(P_1, P_2) < l (\sqrt{n} 2^{-\frac \lambda 2} + 2 \pi 2^{-m-\lambda})$.\\
            Given $\bB = (b_{t, z}^{(x)})_{t \in [n], z \in \{0, 1\}^t, x \in \{0, 1\}^m}$ and $\bU = (u_z^{(x)})_{z \in \{0, 1\}^n, x \in \{0, 1\}^m}$, we define hybrid $P_{1.j}|_{\bB, \bU}$ as
            \(
                \alg^{V_{\tilde{\bB}, \tilde{\bU}}, V_{\bB, \bU}, j}(1^\lambda).
            \)
            That is, the adversary $\alg$ accesses the oracle $V_{\tilde{\bB}, \tilde{\bU}}$ for the first $j$ queries, and then accesses the oracle $V_{\bB, \bU}$ for the remaining queries.

            Since $\abs{b_{t, z}^{(x)} - \round{b_{t, z}^{(x)} / \varepsilon_1 } \varepsilon_1} < \varepsilon_1$ and $\abs{u_z^{(x)} - \floor{u_z^{(x)} / \varepsilon_2} \varepsilon_2} < \varepsilon_2$, from \Cref{lma:iso-b-u-close} we have
            \begin{align*}
                \frac 1 2 \norm{V_{\bB, \bU}, V_{\tilde{\bB}, \tilde{\bU}}}_\diamond
                &< 2^m (\sqrt{2^n n \varepsilon_1} + 2 \pi \varepsilon_2) \\
                &= \sqrt{n} 2^{-\frac \lambda 2} + 2 \pi 2^{-m-\lambda} .
            \end{align*}
            By \Cref{fact:dpi}, we have
            \begin{align}
                \TD(P_{1.j}|_{\bB, \bU}, P_{1.j+1}|_{\bB, \bU}) < \sqrt{n} 2^{-\frac \lambda 2} + 2 \pi 2^{-m-\lambda} ,
                \label{eq:p-j-close}
            \end{align}
            because the only difference between them is in the $(j+1)$-th query one of them applies $V_{\bB, \bU}$ and the other applies $V_{\tilde{\bB}, \tilde{\bU}}$, and the trace distance between them after this point is at most $\frac 1 2 \norm{V_{\bB, \bU} - V_{\tilde{\bB}, \tilde{\bU}}}_\diamond$.

            Then we have
            \begin{align*}
                \TD(P_1, P_2)
                &= \frac 1 2 \norm{P_1 - P_2}_1 \\
                &= \frac 1 2 \norm{
                    \ex_{\substack{
                        \bB \gets \fB \\ \bU \gets \fU
                    }} \sparen{P_{1.0}|_{\bB, \bU}}
                    -
                    \ex_{\substack{
                        \bB \gets \fB \\ \bU \gets \fU
                    }} \sparen{P_{1.l}|_{\bB, \bU}}
                }_1\\
                &\leq \ex_{\substack{
                        \bB \gets \fB \\ \bU \gets \fU
                    }} \sparen{ \frac 1 2 \norm{P_{1.0}|_{\bB, \bU} - P_{1.l}|_{\bB, \bU}}_1}
                        \quad \quad \text{(Triangle inequality)}\\
                &< \ex_{\substack{
                        \bB \gets \fB \\ \bU \gets \fU
                    }} \sparen{l (\sqrt{n} 2^{-\frac \lambda 2} + 2 \pi 2^{-m-\lambda})}
                        \quad \quad \text{(From \cref{eq:p-j-close} and triangle inequality)}\\
                &= l (\sqrt{n} 2^{-\frac \lambda 2} + 2 \pi 2^{-m-\lambda}) .
            \end{align*}
        \item $\TD(P_2, P_3) < 2^{-m-\lambda}$.\\
            In $P_2$, since $b_{t, z}^{(x)} \gets \dBeta(2^{n-t-1}, 2^{n-t-1})$ and $u_z^{(x)} \gets \cU(0, 1)$, we know that $\round{b_{t, z}^{(x)} / \varepsilon_1 } \varepsilon_1$ follows the rounded Beta distribution $\dBeta_{R(\varepsilon_1)}(2^{n-t-1}, 2^{n-t-1})$ (see \Cref{def:rounded-beta}) and $\floor{u_z^{(x)} / \varepsilon_2}$ is uniformly distributed in $\{0, 1\}^{\lambda'}$ (recall that $\varepsilon_2 = 2^{-\lambda'}$ ).

            We use $\tilde{\fB}$ to denote the joint distribution of discrete random variables $(\tilde{B}_{t, z}^{(x)})_{t \in [n], z \in \{0, 1\}^t, x \in \{0, 1\}^m}$ where $\tilde{B}_{t, z}^{(x)} \sim \dBeta_{R(\varepsilon_1)}(2^{n-t-1}, 2^{n-t-1})$ independently, and $\tilde{\fU}$ to denote the joint distribution of discrete random variables $(\tilde{U}_z^{(x)})_{z \in \{0, 1\}^n, x \in \{0, 1\}^m}$ where $\tilde{U}_z^{(x)} \sim \cU_{\lambda'} / 2^{\lambda'}$ independently. Then $P_2$ can be rewritten as
            \[
                P_2 = \ex_{\substack{
                    \bB \gets \tilde{\fB} \\ \bU \gets \tilde{\fU}
                }} \sparen{\alg^{V_{\bB, \bU}}(1^\lambda)}.
            \]

            In $P_3$, since $f \gets \cY^\cX$ is a random function, each $f(x)$ is an independent uniformly random string for $x \in \cX$.
            We use $\dot{\fB}$ to denote the joint distribution of discrete random variables
            $$(\alg_{RB}(1^{n + \lambda'}, 2^{n-t-1}, \tilde{U}_{t, z}^{(x)}))_{t \in [n], z \in \{0, 1\}^t, x \in \{0, 1\}^m}$$
            where $\tilde{U}_{t, z}^{(x)} \sim \cU_{\poly(n + \lambda')}$ independently.
            Then $P_3$ can be rewritten as
            \[
                P_3 = \ex_{\substack{
                    \bB \gets \dot{\fB}\\
                    \bU \gets \tilde{\fU}
                }} \sparen{\alg^{V_{\bB, \bU}}(1^\lambda)}.
            \]

            \Cref{lma:smp-beta} shows that
            $$\Delta(\alg_{RB}(1^{n + \lambda'}, 2^{n-t-1}, \tilde{U}), \tilde{B}) < 2^{-n-\lambda'},$$
            where $\tilde{U} \sim \cU_{\poly(n + \lambda')}$ and $\tilde{B} \sim \dBeta_{R(2^{-n-\lambda'})}(2^{n-t-1}, 2^{n-t-1})$. Since $\tilde{\fB}$ (and $\dot{\fB}$) has $2^m(2^n-1)$ random variables, by triangle inequality we have
            \begin{align*}
                \Delta(\tilde{\fB} ,\dot{\fB})
                    &\leq 2^m (2^n - 1) 2^{-n-\lambda'} \\
                    &< 2^m \cdot 2^n \cdot 2^{-n-2m-\lambda}
                        && \text{($\lambda' = \lambda + 2m$)}\\
                    &= 2^{-m-\lambda} .
            \end{align*}

            Then we have
            \begin{align*}
                \TD(P_2, P_3)
                &= \TD \left(
                    \ex_{\substack{
                        \bB \gets \tilde{\fB} \\ \bU \gets \tilde{\fU}
                    }} \sparen{\alg^{V_{\bB, \bU}}}
                    ,
                    \ex_{\substack{
                        \bB \gets \dot{\fB} \\ \bU \gets \tilde{\fU}
                    }} \sparen{\alg^{V_{\bB, \bU}}}
                \right)\\
                &\leq \Delta \left( \tilde{\fB}, \dot{\fB} \right)
                    \quad \quad \text{(Data processing inequality)}\\
                &< 2^{-m-\lambda} .
            \end{align*}
        \item $\TD(P_3, P_4) < l \sqrt{2 \pi n} 2^{-\frac \lambda 2}$.\\
            Given $f \in \cY^\cX$, we define hybrid $P_{3.j}|_f$ as
            \[
                \alg^{V_f, V_{\bB_f, \bU_f}, j}(1^\lambda).
            \]
            That is, the adversary $\alg$ accesses the oracle $V_f$ for the first $j$ queries, and then accesses the oracle $V_{\bB_f, \bU_f}$ for the remaining queries.

            For each classical input $\ket{x}$,
            we now check the output of $\operatorname{RS}_{n, \lambda'}^{U_f^{\gets m}} (\ket{x} \otimes \ket{0}^{\otimes n})$. In \Cref{alg:ramp}, we compute $\theta_{t, z}^{(x)}$ with $(n+\lambda')$-bit precision, which is $\floor{\theta_{t, z}^{(x)}/ \varepsilon_1} \varepsilon_1$, and then use it to do rotation. In \Cref{alg:rp}, we apply the phase exactly as $e^{2 \pi i f(x \cdot 2^{n+1} + 2^n + z)[1, \cdots, \lambda'] / 2^{\lambda'}}$ for each $\ket{z}$. Thus, $\operatorname{RS}_{n, \lambda'}^{U_f^{\gets m}} (\ket{x} \otimes \ket{0}^{\otimes n})$ can be written as
            \[
                \operatorname{RS}_{n, \lambda'}^{U_f^{\gets m}} (\ket{x} \otimes \ket{0}^{\otimes n}) = \ket{\psi_{\hat{\bB}_{f, x}, \bU_{f, x}}},
            \]
            where
            \[
                \hat{\bB}_{f, x} = \paren{ \hat{b}_{t, z}^{(x)} := \cos^2 \paren{2 \pi \floor{\theta_{t, z}^{(x)}/ \varepsilon_1} \varepsilon_1} }_{t \in [n], z \in \{0, 1\}^t} ,
            \]
            \[
                \theta_{t, z}^{(x)} = \frac {\arccos \sqrt{\alg_{RB}(1^{n + \lambda'}, 2^{n-t-1}, f(x \cdot 2^{n+1} + 2^t + z))}} {2 \pi}.
            \]
            The isometry $V_f$ can thus be written as
            \begin{align*}
                V_f \defeq \operatorname{RS}_{n, \lambda'}^{U_f^{\gets m}} (I_m \otimes \ket{0}^{\otimes n})
                    = \sum_{x=0}^{2^m-1} (\ket{x} \ket{\psi_{\hat{\bB}_{f, x}, \bU_{f, x}}}) \bra{x}
                    = V_{\hat{\bB}_f, \bU_f}. \quad \mbox{(\Cref{def:iso-b-u})}
            \end{align*}

            Since $\abs{\diff{\cos^2 \theta}{\theta}} \leq 1$, we have $\abs{\hat{b}_{t, z}^{(x)} - b_{t, z}^{(x)}} \leq 2 \pi \varepsilon_1$,
            where \[b_{t, z}^{(x)} = \alg_{RB}(1^{n + \lambda'}, 2^{n-t-1}, f(x \cdot 2^{n+1} + 2^t + z))\] as defined in $P_3$.
            Then from \Cref{lma:iso-b-u-close} we have
            \begin{align*}
                \frac 1 2 \norm{V_{\bB_f, \bU_f} - V_{\hat{\bB}_f, \bU_f}}_\diamond
                < 2^m \sqrt{2^n n \cdot 2 \pi \varepsilon_1}
                = \sqrt{2 \pi n } 2^{-\frac \lambda 2} .
            \end{align*}

            By \Cref{fact:dpi}, we have
            \begin{align}
                \TD(P_{3.j}|_f, P_{3.j+1}|_f) < \sqrt{2 \pi n} 2^{-\frac \lambda 2} \label{eq:p3-j-close}
            \end{align}
            because the only difference between them is in the $(j+1)$-th query one of them applies $V_{\bB_f, \bU_f}$ and the other applies $V_{\hat{\bB}_f, \bU_f}$, and the trace distance between them after this point is at most $\frac 1 2 \norm{V_{\bB_f, \bU_f} - V_{\hat{\bB}_f, \bU_f}}_\diamond$.

            Then we have
            \begin{align*}
                \TD(P_3, P_4)
                &= \frac 1 2 \norm{P_3 - P_4}_1 \\
                &= \frac 1 2 \norm{
                    \ex_{f \gets \cY^\cX} \sparen{P_{3.0}|_f}
                    -
                    \ex_{f \gets \cY^\cX} \sparen{P_{3.l}|_f}
                }_1\\
                &\leq \ex_{f \gets \cY^\cX} \sparen{ \frac 1 2 \norm{P_{3.0}|_f - P_{3.l}|_f}_1}
                    \quad \quad \text{(Triangle inequality)} \\
                &< \ex_{f \gets \cY^\cX} \sparen{l \sqrt{2 \pi n} 2^{-\frac \lambda 2}}
                    \quad \quad \text{(From \cref{eq:p3-j-close} and triangle inequality)}\\
                &= l \sqrt{2 \pi n} 2^{-\frac \lambda 2} .
            \end{align*}
    \end{itemize}

\end{proof}

\begin{theorem}\label{thm:prfs}
    Let $(\KeyGen_\prf, \Eval_\prf)$ be a scalable quantum-secure pseudorandom function generator as in \Cref{def:scalable-prf}.
    Then the following is a scalable and adaptive pseudorandom function-like quantum state generator:
    \begin{itemize}
        \item $\KeyGen(1^n, 1^m, 1^\lambda)$: Let $k \gets \KeyGen_\prf(1^{\poly(n+m+\lambda)}, 1^{n+m+1}, 1^\lambda)$, output $k$.
        \item $\CircuitGen(1^n, 1^m, 1^\lambda, k)$: Let function $\prf_k: \{0, 1\}^{n+m+1} \to \{0, 1\}^{\poly(n + m + \lambda)}$ be defined as $$\prf_k(\cdot)\defeq \Eval_\prf(1^{\poly(n + m + \lambda)}, 1^\lambda, k, \cdot).$$
        Output the circuit for $\operatorname{RS}_{n, \lambda+2m}^{U_{\prf_k}^{\gets m}} (I_m \otimes \ket{0}^{\otimes n})$, where $\operatorname{RS}_{n, \lambda+2m}^{U_{\prf_k}^{\gets m}}$ is defined in \cref{random-f-u}.
    \end{itemize}
\end{theorem}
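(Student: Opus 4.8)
The plan is to follow exactly the two-step template of the proof of \Cref{thm:prs}, now built on top of \Cref{thm:arfs} instead of \Cref{thm:ars}. First I would dispatch efficiency: since $\Eval_\prf$ is a classical polynomial-time algorithm, the function $\prf_k\colon\{0,1\}^{n+m+1}\to\{0,1\}^{\poly(n+m+\lambda)}$ is classically efficiently computable, so $U_{\prf_k}$ — and hence $U_{\prf_k}^{\gets m}$, which merely reserves an $m$-qubit register fed into each call as the first $m$ input bits — is efficiently implementable; combined with the efficient circuits already given for \Cref{alg:ramp} and \Cref{alg:rp}, $\CircuitGen$ outputs in polynomial time a circuit realizing the isometry $\operatorname{RS}_{n,\lambda+2m}^{U_{\prf_k}^{\gets m}}(I_m\otimes\ket{0}^{\otimes n})\colon\C^{2^m}\to\C^{2^{n+m}}$ required by \Cref{def:prfs}, and $\KeyGen$ is PPT because $\KeyGen_\prf$ is. Here one uses that $n,m\in\poly(\lambda)$, so $\lambda':=\lambda+2m\in\poly(\lambda)$ and the scalable PRF of \Cref{def:scalable-prf} indeed supports input length $n+m+1$ and output length $\poly(n+m+\lambda)$.

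For pseudorandomness I would argue in two hybrids. Fix any non-uniform QPT distinguisher $\alg$ making $l=l(\lambda)\in\poly(\lambda)$ queries, and write $\cX=\{0,1\}^{n+m+1}$, $\cY=\{0,1\}^{\poly(n+m+\lambda)}$. The first hybrid replaces the Haar-random function-like oracle $W_{\{\ket{\psi_x}\}_x}$ by $V_f$ for a truly random $f\gets\cY^\cX$, where $V_f$ is the isometry of \Cref{thm:arfs}; that theorem bounds
\[
    \TD\paren{\ex_{f\gets\cY^\cX}\sparen{\alg^{V_f}(1^\lambda)},\ \ex_{\{\ket{\psi_x}\gets\mu_n\}_x}\sparen{\alg^{W_{\{\ket{\psi_x}\}_x}}(1^\lambda)}}< (1+\sqrt{2\pi})\,l\sqrt{n}\,2^{-\lambda/2}+2\pi\,l\,2^{-m-\lambda}+2^{-m-\lambda},
\]
which is $\negl(\lambda)$ since $l,n\in\poly(\lambda)$ — this is precisely why the construction runs the generator with internal security parameter $\lambda+2m$ rather than $\lambda$. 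The second hybrid replaces $V_f$ by $V_{\prf_k}$: if some $\alg$ distinguished $\{V_{\prf_k}\}_{k\gets\KeyGen}$ from $\{V_f\}_{f\gets\cY^\cX}$ with non-negligible advantage $\nu(\lambda)$, I would build a QPT distinguisher $\mathcal{B}$ that, given oracle access to $\oracle\in\{U_{\prf_k},U_f\}$, reserves an $m$-qubit register to realize $\oracle^{\gets m}$, constructs the circuit $\operatorname{RS}_{n,\lambda+2m}^{\oracle^{\gets m}}$ from $\oracle$, answers each of $\alg$'s isometry queries by adjoining $n$ fresh $\ket{0}$ ancillas and applying $\operatorname{RS}_{n,\lambda+2m}^{\oracle^{\gets m}}(I_m\otimes\ket{0}^{\otimes n})$, and outputs $\alg$'s bit; since $\operatorname{RS}$ makes $\poly(\lambda)$ calls to $\oracle^{\gets m}$ per query and $\alg$ makes $\poly(\lambda)$ queries, $\mathcal{B}$ is QPT, and $\mathcal{B}^{U_{\prf_k}}$, $\mathcal{B}^{U_f}$ perfectly simulate $\alg^{V_{\prf_k}}$, $\alg^{V_f}$ respectively, so $\mathcal{B}$ breaks \Cref{def:scalable-prf}. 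Combining the two hybrids by the triangle inequality gives
\[
    \abs{\Pr_{k\gets\KeyGen(1^n,1^m,1^\lambda)}\sparen{\alg^{V_{\prf_k}}(1^\lambda)=1}-\Pr_{\{\ket{\psi_x}\gets\mu_n\}_x}\sparen{\alg^{W_{\{\ket{\psi_x}\}_x}}(1^\lambda)=1}}=\negl(\lambda).
\]

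The conceptual weight of the result already lives in \Cref{thm:arfs} (the hybrid-over-queries argument and the diamond-norm bounds of \Cref{lma:iso-b-u-close}), so the only things I expect to need care here are bookkeeping: verifying that the reduction $\mathcal{B}$ stays polynomial-time — which needs the two-layer observation that both $\operatorname{RS}_{n,\lambda+2m}^{\cdot}$ and $\alg$ issue only $\poly(\lambda)$ queries — and verifying that the statistical error of \Cref{thm:arfs}, expressed in $\lambda'=\lambda+2m$, remains negligible in $\lambda$ after substituting $\lambda'$. Neither is a genuine obstacle; the main subtlety worth stating explicitly is just that the internal use of security parameter $\lambda+2m$ is what absorbs the $2^{-m-\lambda}$ union-bound loss over the $2^m(2^n-1)$ sampled Beta/phase variables.
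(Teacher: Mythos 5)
Your proposal is correct and follows essentially the same route as the paper: establish efficiency from the classical efficiency of $\Eval_\prf$ together with the efficient circuits for \Cref{alg:ramp} and \Cref{alg:rp}, then argue pseudorandomness by first invoking \Cref{thm:arfs} for a truly random $f$ and then reducing the replacement of $f$ by $\prf_k$ to the quantum security of the PRF (the paper compresses this second step to ``similar to \Cref{thm:prs}'', which is exactly the reduction you spell out). Your explicit remarks about the internal parameter $\lambda+2m$ absorbing the $2^m$ loss and about the query-counting for $\mathcal{B}$ are consistent with the paper's treatment and add no divergence.
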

This result follows by replacing the random function in \Cref{thm:arfs} with a quantum-secure pseudorandom function, which is a standard technique. A full proof is provided in \Cref{sec:ar_to_pr_proof}.

\subsection*{Acknowledgements}
We thank Kishor Bharti, Srijita Kundu, and Minglong Qin for helpful discussions.

This project is supported by the National Research Foundation, Singapore through the National Quantum Office, hosted in A*STAR, under its Centre for Quantum Technologies Funding Initiative (S24Q2D0009) and its Advanced Quantum Algorithms and Solutions Funding Initiative (S25Q9DA001 and S25Q9DA002).

\newpage

\bibliographystyle{alpha}
\bibliography{main/pseudorandom}


\appendix
\section{Alternative proof of ARS}

Here we show an independent proof of \Cref{thm:ars}.

\label{sec:ars-proof}

\begin{proof}
Let $\varepsilon_1 = 2^{-n-\lambda}$, and $\varepsilon_2 = 2^{-\lambda}$.
    We use $\fB$ to denote the joint distribution of random variables $(B_{t, z})_{t \in [n], z \in \{0, 1\}^t}$ where $B_{t, z} \sim \dBeta(2^{n-t-1}, 2^{n-t-1})$ independently. That is, $$\bB \gets \fB$$ is short for $$\bB = \left(b_{t, z} \gets \dBeta(2^{n-t-1}, 2^{n-t-1}) \right)_{t \in [n], z \in \{0, 1\}^t}.$$ 
    Similarly, we use $\fU$ to denote the joint distribution of random variables $(U_z)_{z \in \{0, 1\}^n}$ where $U_z \sim \cU(0, 1)$ independently. We use notation $\ket{\psi_{\bB, \bU}}$ as in \Cref{def:state-b-u}.
    
    Consider the following hybrids:
    \begin{itemize}
        \item[$S_0$:]
        \[
            \ex_{\ket{\psi} \gets \mu_n} \sparen{\alg^{W_{\ket{\psi}}}(1^\lambda)}.
        \]
        \item[$S_1$:]
        \begin{align*}    
            \ex_{\substack{
                \bB \gets \fB \\ \bU \gets \fU
            }} \sparen{\alg^{W_{\ket{\psi_{\bB, \bU}}}}(1^\lambda)}.
        \end{align*}
        \item[$S_2$:]
        \begin{align*}
            \ex_{\substack{
                \bB \gets \fB \\ \bU \gets \fU
            }} \sparen{\alg^{W_{\ket{\psi_{\tilde{\bB}, \tilde{\bU}}}}}(1^\lambda)},
        \end{align*}
        where $\tilde{\bB} = (\round{b_{t, z} / \varepsilon_1 } \varepsilon_1)_{t \in [n], z \in \{0, 1\}^t}, \tilde{\bU} = (\floor{u_z / \varepsilon_2} \varepsilon_2)_{z \in \{0, 1\}^n}$.
        \item[$S_3$:]
        \begin{align*}
            \ex_{
                f \gets \cY^\cX
            } \sparen{\alg^{W_{\ket{\psi_{\bB_f, \bU_f}}}}(1^\lambda)},
        \end{align*}
        where
        \begin{align*}
            &\bB_f = \big(b_{t, z} := \alg_{RB}(1^{n + \lambda}, 2^{n-t-1}, f(2^t + z))\big)_{t \in [n], z \in \{0, 1\}^t} ,\\
            &\bU_f = \big(u_z := f(2^n + z)[1, \dots, \lambda] / 2^\lambda \big)_{z \in \{0, 1\}^n} .
        \end{align*}
        \item[$S_4$:]
        \begin{align*}
            \ex_{f \gets \cY^\cX}
                \sparen{\alg^{V_f}(1^\lambda)}.
        \end{align*}
    \end{itemize}

    We now prove that each pair of consecutive distributions are close in trace distance.
    \begin{itemize}
        \item $\TD(S_0, S_1) = 0$.\\
            From \Cref{lma:beta-ramp} and \Cref{lma:phase-gaussian}, we know that the coefficients of $\ket{\psi_{\bB, \bU}}$ in $S_1$ follow the distribution of the normalized standard Gaussian random vector. Thus, from \Cref{fact:haar}, we know it's a Haar-random state. Since $\ket{\psi}$ in $S_0$ and $\ket{\psi_{\bB, \bU}}$ in $S_1$ have exactly the same distribution, we have $\TD(S_0, S_1) = 0$.

        \item $\TD(S_1, S_2) < l (\sqrt{n} 2^{-\frac \lambda 2} + 2 \pi 2^{-\lambda})$.\\
            Given $\bB = (b_{t, z})_{t \in [n], z \in \{0, 1\}^t}$ and $\bU = (u_z)_{z \in \{0, 1\}^n}$, we define hybrid $S_{1.j}|_{\bB, \bU}$ as
            \[
                \alg^{W_{\ket{\psi_{\tilde{\bB}, \tilde{\bU}}}}, W_{\ket{\psi_{\bB, \bU}}}, j}(1^\lambda).
            \]
            That is, the adversary $\alg$ accesses the oracle $W_{\ket{\psi_{\tilde{\bB}, \tilde{\bU}}}}$ for the first $j$ queries, and then accesses the oracle $W_{\ket{\psi_{\bB, \bU}}}$ for the remaining queries.

            Since $\abs{b_{t, z} - \round{b_{t, z} / \varepsilon_1 } \varepsilon_1} < \varepsilon_1$ and $\abs{u_z - \floor{u_z / \varepsilon_2} \varepsilon_2} < \varepsilon_2$, from \Cref{lma:state-b-u-close} we have
            \begin{align*}
                &\TD(\ketbra{\psi_{\bB, \bU}}, \ketbra{\psi_{\tilde{\bB}, \tilde{\bU}}}) \\
                &< \sqrt{2^n n \varepsilon_1} + 2 \pi \varepsilon_2 \\
                &= \sqrt{n} 2^{-\frac \lambda 2} + 2 \pi 2^{-\lambda} .
            \end{align*}
            By \Cref{fact:dpi}, we have
            \begin{align}    
                \TD(S_{1.j}|_{\bB, \bU}, S_{1.j+1}|_{\bB, \bU}) < \sqrt{n} 2^{-\frac \lambda 2} + 2 \pi 2^{-\lambda} ,
                \label{eq:s-j-close}
            \end{align}
            because the only difference between them is in the $(j+1)$-th query one of them gets $\ket{\psi_{\bB, \bU}}$ and the other gets $\ket{\psi_{\tilde{\bB}, \tilde{\bU}}}$.

            Then we have
            \begin{align*}
                &\TD(S_1, S_2)\\
                &= \frac 1 2 \norm{S_1 - S_2}_1 \\
                &= \frac 1 2 \norm{
                    \ex_{\substack{
                        \bB \gets \fB \\ \bU \gets \fU
                    }} \sparen{S_{1.0}|_{\bB, \bU}}
                    -
                    \ex_{\substack{
                        \bB \gets \fB \\ \bU \gets \fU
                    }} \sparen{S_{1.l}|_{\bB, \bU}}
                }_1\\
                &\leq \ex_{\substack{
                        \bB \gets \fB \\ \bU \gets \fU
                    }} \sparen{ \frac 1 2 \norm{S_{1.0}|_{\bB, \bU} - S_{1.l}|_{\bB, \bU}}_1} && \text{(Triangle inequality)}\\
                &< \ex_{\substack{
                        \bB \gets \fB \\ \bU \gets \fU
                    }} \sparen{l (\sqrt{n} 2^{-\frac \lambda 2} + 2 \pi 2^{-\lambda})}
                    && \text{(From \cref{eq:s-j-close} and triangle inequality)}\\
                &= l (\sqrt{n} 2^{-\frac \lambda 2} + 2 \pi 2^{-\lambda}) .
            \end{align*}
        \item $\TD(S_2, S_3) < 2^{-\lambda}$.\\
            In $S_2$, since $b_{t, z} \gets \dBeta(2^{n-t-1}, 2^{n-t-1})$ and $u_z \gets \cU(0, 1)$, we know that $\round{b_{t, z} / \varepsilon_1 } \varepsilon_1$ follows the rounded Beta distribution $\dBeta_{R(\varepsilon_1)}(2^{n-t-1}, 2^{n-t-1})$ (see \Cref{def:rounded-beta}) and $\floor{u_z / \varepsilon_2}$ is uniformly distributed in $\{0, 1\}^\lambda$ (recall that $\varepsilon_2 = 2^{-\lambda}$).

            We use $\tilde{\fB}$ to denote the joint distribution of discrete random variables $(\tilde{B}_{t, z})_{t \in [n], z \in \{0, 1\}^t}$ where $\tilde{B}_{t, z} \sim \dBeta_{R(\varepsilon_1)}(2^{n-t-1}, 2^{n-t-1})$ independently, and $\tilde{\fU}$ to denote the joint distribution of discrete random variables $(\tilde{U}_z)_{z \in \{0, 1\}^n}$ where $\tilde{U}_z \sim \cU_\lambda / 2^\lambda$ independently. Then $S_2$ can be rewritten as
            \[
                S_2 = \ex_{\substack{
                    \bB \gets \tilde{\fB} \\ \bU \gets \tilde{\fU}
                }} \sparen{\alg^{W_{\ket{\psi_{\bB, \bU}}}}(1^\lambda)}.
            \]

            In $S_3$, since $f \gets \cY^\cX$ is a random function, each $f(x)$ is an independent uniformly random string for $x \in \cX$.
            We use $\dot{\fB}$ to denote the joint distribution of discrete random variables
            $$(\alg_{RB}(1^{n + \lambda}, 2^{n-t-1}, \tilde{U}_{t, z}))_{t \in [n], z \in \{0, 1\}^t}$$
            where $\tilde{U}_{t, z} \sim \cU_{\poly(n + \lambda)}$ independently.
            Then $S_3$ can be rewritten as
            \[
                S_3 = \ex_{\substack{
                    \bB \gets \dot{\fB}\\
                    \bU \gets \tilde{\fU}
                }} \sparen{\alg^{W_{\ket{\psi_{\bB, \bU}}}}(1^\lambda)}.
            \]

            \Cref{lma:smp-beta} shows that
            $$\Delta(\alg_{RB}(1^{n + \lambda}, 2^{n-t-1}, \tilde{U}), \tilde{B}) < 2^{-n-\lambda},$$
            where $\tilde{U} \sim \cU_{\poly(n + \lambda)}$ and $\tilde{B} \sim \dBeta_{R(2^{-n-\lambda})}(2^{n-t-1}, 2^{n-t-1})$. Since $\tilde{\fB}$ (and $\dot{\fB}$) has $2^n-1$ random variables, by triangle inequality we have
            \[
                \Delta(\tilde{\fB} ,\dot{\fB}) \leq (2^n - 1) 2^{-n-\lambda} < 2^{-\lambda} .
            \] 

            Then we have
            \begin{align*}
                &\TD(S_2, S_3) \\
                &= \TD \left(
                    \ex_{\substack{
                        \bB \gets \tilde{\fB} \\ \bU \gets \tilde{\fU}
                    }} \sparen{\alg^{W_{\ket{\psi_{\bB, \bU}}}}}
                    ,
                    \ex_{\substack{
                        \bB \gets \dot{\fB} \\ \bU \gets \tilde{\fU}
                    }} \sparen{\alg^{W_{\ket{\psi_{\bB, \bU}}}}}
                \right)\\
                &\leq \Delta \left( \tilde{\fB}, \dot{\fB} \right)
                    && \text{(Data processing inequality)}\\
                &< 2^{-\lambda} .
            \end{align*}
        \item $\TD(S_3, S_4) < l \sqrt{2 \pi n} 2^{-\frac \lambda 2}$.\\
            Given $f \in \cY^\cX$, we define hybrid $S_{3.j}|_f$ as
            \[
                \alg^{V_f, W_{\ket{\psi_{\bB_f, \bU_f}}}, j}(1^\lambda).
            \]
            That is, the adversary $\alg$ accesses the oracle $V_f$ for the first $j$ queries, and then accesses the oracle $W_{\ket{\psi_{\bB_f, \bU_f}}}$ for the remaining queries.

            We now check the output of $V_f = \operatorname{RS}_{n, \lambda}^{U_f} \ket{0}^{\otimes n}$. In \Cref{alg:ramp}, we compute $\theta_{t, z}$ with $(n+\lambda)$-bit precision, which is $\floor{\theta_{t, z}/ \varepsilon_1} \varepsilon_1$, and then use it to do rotation. In \Cref{alg:rp}, we apply the phase exactly as $e^{2 \pi i f(2^n + z)[1, \cdots, \lambda] / 2^\lambda}$ for each $\ket{z}$. Thus, $\operatorname{RS}_{n, \lambda}^{U_f} \ket{0}^{\otimes n}$ can be written as
            \[
                \operatorname{RS}_{n, \lambda}^{U_f} \ket{0}^{\otimes n} = \ket{\psi_{\hat{\bB}_f, \bU_f}},
            \]
            where
            \[
                \hat{\bB}_f = \paren{ \hat{b}_{t, z} := \cos^2 \paren{2 \pi \floor{\theta_{t, z}/ \varepsilon_1} \varepsilon_1} }_{t \in [n], z \in \{0, 1\}^t} ,
            \]
            \[
                \theta_{t, z} = \frac {\arccos \sqrt{\alg_{RB}(1^{n+\lambda}, 2^{n-t-1}, f(2^t + z))}} {2 \pi}.
            \]
            That is,
            \[
                V_f = W_{\ket{\psi_{\hat{\bB}_f, \bU_f}}}.
            \]

            Since $\abs{\diff{\cos^2 \theta}{\theta}} \leq 1$, we have $\abs{\hat{b}_{t, z} - b_{t, z}} \leq 2 \pi \varepsilon_1$, where $b_{t, z} = \alg_{RB}(1^{n + \lambda}, 2^{n-t-1}, f(2^t + z)$ as defined in $S_3$. Then from \Cref{lma:state-b-u-close} we have
            \begin{align*}
                &\TD(\ketbra{\psi_{\bB_f, \bU_f}}, \ketbra{\psi_{\hat{\bB}_f, \bU_f}}) \\
                &< \sqrt{2^n n \cdot 2 \pi \varepsilon_1} \\
                &= \sqrt{2 \pi n } 2^{-\frac \lambda 2} .
            \end{align*}

            By \Cref{fact:dpi}, we have
            \begin{align}
                \TD(S_{3.j}|_f, S_{3.j+1}|_f) < \sqrt{2 \pi n} 2^{-\frac \lambda 2} \label{eq:s3-j-close}
            \end{align}
            because the only difference between them is in the $(j+1)$-th query one of them gets $\ket{\psi_{\bB_f, \bU_f}}$ and the other gets $\ket{\psi_{\hat{\bB}_f, \bU_f}}$.

            Then we have
            \begin{align*}
                &\TD(S_3, S_4) \\
                &= \frac 1 2 \norm{S_3 - S_4}_1 \\
                &= \frac 1 2 \norm{
                    \ex_{f \gets \cY^\cX} \sparen{S_{3.0}|_f}
                    -
                    \ex_{f \gets \cY^\cX} \sparen{S_{3.l}|_f}
                }_1\\
                &\leq \ex_{f \gets \cY^\cX} \sparen{ \frac 1 2 \norm{S_{3.0}|_f - S_{3.l}|_f}_1}  && \text{(Triangle inequality)}\\
                &< \ex_{f \gets \cY^\cX} \sparen{l \sqrt{2 \pi n} 2^{-\frac \lambda 2}} 
                    && \text{(From \cref{eq:s3-j-close} and triangle inequality)}\\
                &= l \sqrt{2 \pi n} 2^{-\frac \lambda 2} .
            \end{align*}
    \end{itemize}
\end{proof}
\section{Proof of asymptotically random to pseudorandom} \label{sec:ar_to_pr_proof}

\subsection*{Proof of \Cref{thm:prs}}

\begin{proof}
    $\prf_k$ can be implemented efficiently classically, so $U_{\prf_k}$ can be implemented efficiently. 
    Since we already show the efficient implementation of \Cref{alg:ramp} and \Cref{alg:rp},
    implementation of $\operatorname{RS}_{n, \lambda}^{U_{\prf_k}}$ can be output efficiently.

    Let $\cX = \{0, 1\}^{n+1}$, $\cY = \{0,1\}^{\poly(n + \lambda)}$ and $f \in \cY^{\cX}$.
    We define
    \begin{itemize}
        \item $V_f: \C \to \C^{2^n}$ is the isometry $V_f \defeq \operatorname{RS}_{n, \lambda}^{U_f} \ket{0}^{\otimes n}$.
        \item $W_{\ket{\psi}}: \C \to \C^{2^n}$ is the isometry $W_{\ket{\psi}} \defeq \ket{\psi}$.
    \end{itemize}
    Thus, $\CircuitGen(1^n, 1^\lambda, k)$ outputs $V_{\prf_k}$.

    To show the computational indistinguishability, we start from truly random functions.
    For any (non-uniform) QPT distinguisher $\alg$, it can only access the oracle polynomially many times (polynomial in $\lambda$).
    When $n$ is also in $\poly(\lambda)$, from \Cref{thm:ars} we have
    \begin{align}\label{eq:distingusher1}
        \abs{
            \Pr_{f \gets \cY^\cX }
                \sparen{\alg^{V_{f}}(1^\lambda) = 1}
            - \Pr_{\ket{\psi} \gets \mu_n}
                \sparen{\alg^{W_{\ket{\psi}}}(1^\lambda) = 1}
        } = \negl(\lambda) .
    \end{align}

    Assuming $\alg$ can distinguish $\{V_{\prf_k}\}_{k}$ and $\{V_{f}\}_{f}$ with non-negligible advantage $\nu(\lambda)$, then we can construct a QPT distinguisher $\mathcal{B}$ to distinguish $\{U_{\prf_k}\}_{k}$ and $\{U_f\}_{f}$:
    \begin{enumerate}
        \item Construct $\operatorname{RS}_{n, \lambda}^{\oracle}$ using the given oracle $\oracle$.
        \item Run $\alg$, using $\operatorname{RS}_{n, \lambda}^{\oracle} \ket{0}^{\otimes n}$ as the oracle, and output the result.
    \end{enumerate}
    This gives
    \[
        \abs{
            \Pr_{k \gets \KeyGen(1^n, 1^\lambda)}
                \left[\mathcal{B}^{U_{\prf_k}}(1^\lambda) = 1\right]
            - \Pr_{f \gets \cY^\cX }
                \left[\mathcal{B}^{U_f}(1^\lambda) = 1\right]
        } = \nu(\lambda) ,
    \]
    which contradicts that $\{\prf_k\}_k$ is quantum-secure pseudorandom. Thus, we have
        \begin{align}\label{eq:distingusher2}
        \abs{
            \Pr_{k \gets \KeyGen(1^n, 1^\lambda)}
                \left[\alg^{V_{\prf_k}}(1^\lambda) = 1\right]
            -
            \Pr_{f \gets \cY^\cX }
                \sparen{\alg^{V_{f}}(1^\lambda) = 1}
        } = \negl(\lambda).
    \end{align}
    From \cref{eq:distingusher1} and \cref{eq:distingusher2}, we get
    \begin{align*}
        \abs{
            \Pr_{k \gets \KeyGen(1^n, 1^\lambda)}
                \left[\alg^{V_{\prf_k}}(1^\lambda) = 1\right]
            - \Pr_{\ket{\psi} \gets \mu_n}\left[\alg^{W_{\ket{\psi}}}(1^\lambda) = 1\right]
        } = \negl(\lambda) .
    \end{align*}
\end{proof}

\subsection*{Proof of \Cref{thm:prfs}}

\begin{proof}
    $\prf_k$ can be implemented efficiently classically, so $U_{\prf_k}$ can be implemented efficiently.
    Since we already show the efficient implementation of \Cref{alg:ramp} and \Cref{alg:rp},
    circuit of $\operatorname{RS}_{n, \lambda+2m}^{U_{\prf_k}^{\gets m}}$ can be output efficiently.

    Let $\cX = \{0, 1\}^{n+m+1}, \cY = \{0,1\}^{\poly(n+m+\lambda)}$ and $f \in \cY^\cX$, we define
    \begin{itemize}
        \item $V_f: \C^{2^m} \to \C^{2^{n+m}}$ is the isometry
        $$
            V_f \defeq \operatorname{RS}_{n, \lambda+2m}^{U_f^{\gets m}} (I_m \otimes \ket{0}^{\otimes n}).
        $$
        \item $W_{\left\{\ket{\psi_x}\right\}_{x}}:\C^{2^m} \to \C^{2^{n+m}}$ is the isometry 
        $$
            W_{\left\{\ket{\psi_x}\right\}_{x}} \defeq  \sum_{x=0}^{2^m-1}  (\ket{x} \ket{\psi_{x}}) \bra{x}.
        $$
    \end{itemize}
    Thus, $\CircuitGen(1^n, 1^m, 1^\lambda, k, \cdot)$ output $V_{\prf_k}$.

    To show computational indistinguishability, we start from truly random functions.
    For any (non-uniform) quantum polynomial-time distinguisher $\alg$, it can only access the oracle polynomially many times (polynomial in $\lambda$). When $n$ and $m$ are also in $\poly(\lambda)$, from \Cref{thm:arfs} we have
    \begin{align*}
        \abs{
            \Pr_{f \gets \cY^\cX }
                [\alg^{V_f}(1^\lambda) = 1]
            - \Pr_{\{\ket{\psi_x} \gets \mu_n\}_{x \in \{0,1\}^m}}\sparen{\alg^{W_{\left\{\ket{\psi_x}\right\}_{x}}}(1^{\lambda}) = 1}
        } = \negl(\lambda) .
    \end{align*}
Since $\prf$ is a family of quantum-secure pseudorandom functions, similar to \Cref{thm:prs}, we have
    \begin{align*}
        \abs{
            \Pr_{k \gets \KeyGen(1^n, 1^m, 1^\lambda) }
                [\alg^{V_{\prf_k}}(1^\lambda) = 1]
            - \Pr_{\{\ket{\psi_x} \gets \mu_n\}_{x \in \{0,1\}^m}}\sparen{\alg^{W_{\left\{\ket{\psi_x}\right\}_{x}}}(1^{\lambda}) = 1}
        } = \negl(\lambda) .
    \end{align*}
\end{proof}

\section{Proof of the random amplitudes quantum state from Beta variables} \label{sec:ram-proof}

\subsection*{Proof of \Cref{lma:beta-ramp}}

\begin{proof}
    We proceed by induction on \( t \) to show that the state after the \( t \)-th iteration, denoted by \( \ket{\phi_t} \), can be written as
    \[
        \ket{\phi_t} = \frac 1 {\sqrt{\sum_{z=0}^{2^t-1} W_{t, z}}} \sum_{z=0}^{2^t-1} \sqrt{W_{t, z}} \ket{z} \ket{0}^{\otimes n-t},
    \]
    where each $W_{t, z} \sim \chi^2_{2^{n-t+1}}$ i.i.d. and is also independent of all $B_{t', z'}$ for $t' \geq t$.
    We take $\ket{\phi_0} = \ket{0}^{\otimes n}$ as the initial state before the first iteration. 

    \begin{itemize}
        \item Base case: This holds for $t=0$ since $\ket{\phi_0} = \ket{0}^{\otimes n}$.
        \item Inductive step: 
        Assume this holds for $t$, then the state after the $(t+1)$-th iteration is
        \begin{align*}
            & \ket{\phi_{t+1}} \\
            &= \frac 1 {\sqrt{\sum_{z=0}^{2^t-1} W_{t, z}}} \sum_{z=0}^{2^t-1} \sqrt{W_{t,z}} \ket{z} \paren{\sqrt{B_{t, z}} \ket{0} + \sqrt{1-B_{t, z}} \ket{1}} \ket{0}^{\otimes n-t-1}\\
            &= \frac 1 {\sqrt{\sum_{z=0}^{2^t-1} (W_{t, z} B_{t, z} + W_{t, z} (1-B_{t, z}))}} \sum_{z=0}^{2^t-1} \paren{\sqrt{W_{t, z} B_{t, z}} \ket{z} \ket{0} + \sqrt {W_{t, z} (1-B_{t, z})} \ket{z} \ket{1}} \ket{0}^{\otimes n-t-1}\\
            &= \frac 1 {\sqrt{\sum_{z=0}^{2^{t+1}-1}} W_{t+1, z}} \sum_{z=0}^{2^{t+1}-1} \sqrt{W_{t+1, z}} \ket{z} \ket{0}^{\otimes n-(t+1)},
        \end{align*}
        where $W_{t+1, 2z} = W_{t, z} B_{t, z}, W_{t+1, 2z+1} = W_{t, z} (1-B_{t, z})$.

        Notice that
        $$W_{t, z} = W_{t+1, 2z} + W_{t+1, 2z+1},$$
        and
        $$B_{t, z} = W_{t+1, 2z} / (W_{t+1, 2z} + W_{t+1, 2z+1}).$$
        From \Cref{fact:gamma-beta} we have
        \begin{align}
            &W_{t+1, 2z}, W_{t+1, 2z+1} \sim \chi^2_{2^{n-t}} \text{ independently} \nonumber\\
            &\Leftrightarrow W_{t, z} \sim \chi^2_{2^{n-t+1}}, B_{t, z} \sim \dBeta(2^{n-t-1}, 2^{n-t-1}) \text{ independently}. \label{eq:induc-step} 
        \end{align}

        Since for all $z \in \{0, 1\}^t$, $W_{t, z} \sim \chi^2_{2^{n-t}}$ (inductive hypothesis) and $B_{t, z} \sim \dBeta(2^{n-t-1}, 2^{n-t-1})$, \cref{eq:induc-step} gives that for all $z \in \{0, 1\}^{t+1}$,
        $W_{t+1, z} \sim \chi^2_{2^{n-t}}$ i.i.d. and is independent of all $B_{t', z'}$ for $t' \geq t+1$.
        That is, the statement also holds for $t+1$.
    \end{itemize}

    The output state of this procedure is
    \[
        \ket{\phi_n} = \frac 1 {\sqrt{\sum_{z=0}^{2^n-1} W_{n, z}}} \sum_{z=0}^{2^n-1} \sqrt{W_{n, z}} \ket{z}
    \]
    with $W_{n, z} \sim \chi^2_2$ i.i.d., which is a random amplitudes quantum state (\Cref{def:ramp-state}).
\end{proof}

\section{Proof of that random phase gives standard Gaussian} \label{sec:phase-gaussian-proof}

\subsection*{Proof of \Cref{lma:phase-gaussian}}

\begin{proof}
    Let $U_1 = e^{-C / 2} \in (0, 1]$.
    PDF of $\chi_2^2$ is $f_C(c) = \frac 1 2 e^{-c/2}, c \in [0, +\infty)$ (\Cref{def:chi-sq}),
    then from \Cref{fact:trans} we know that PDF of $U_1$ is $f_U(u) = f_C(c) / \abs{ \diff{}{c} e^{-c/2}} = 1$,
    which means $U_1  \sim \cU(0, 1)$.
    Rewrite $\sqrt{C} e^{2\pi i U}$ as
    \begin{align*}
        \sqrt{C} e^{2\pi i U}
        &= \sqrt{-2 \ln U_1} e^{2\pi i U} \\
        &= \sqrt{-2 \ln U_1} \cos (2\pi U) +  i \sqrt{-2 \ln U_1} \sin (2\pi U) .
    \end{align*}
    According to Box-Muller transform (\Cref{fact:box-muller}), $\sqrt{-2 \ln U_1} \cos 2\pi U, \sqrt{-2 \ln U_1} \sin 2\pi U \sim \cN(0, 1)$ i.i.d.
\end{proof}

\section{Proof of upper bound of diamond norm distance} \label{sec:diamond-proof}

\subsection*{Proof of \Cref{lma:iso-b-u-close}}

\begin{proof}
    Let $N = 2^n$ and $M = 2^m$. From \Cref{fact:diamond-choi}, we have
    \begin{align*}
        \norm{\cE_{\bB, \bU} - \cE_{\bB', \bU'}}_\diamond
        \leq \norm{J(\cE_{\bB, \bU}) - J(\cE_{\bB', \bU'})}_1.
    \end{align*}
    We use $\ket{J_{\bB, \bU}}$ to denote
    \[
        \ket{J_{\bB, \bU}} = \frac 1 {\sqrt{M}} \sum_{x \in [M]} \ket{x}\ket{\psi_{\bB_x, \bU_x}}\ket{x},
    \]
    and similarly for $\bB', \bU'$.
    Note that $J(\cE_{\bB, \bU}) = M\ket{J_{\bB, \bU}}\bra{J_{\bB, \bU}}$ since $\cE_{\bB, \bU}$ is the quantum channel for isometry $V_{\bB, \bU} = \sum_{x=0}^{2^m-1} (\ket{x} \ket{\psi_{\bB_x, \bU_x}}) \bra{x}$).

    We first consider $\norm{J(\cE_{\bB, \bU}) - J(\cE_{\bB', \bU})}_1$.
    \begin{align}
        \frac 1 2 \norm{J(\cE_{\bB, \bU}) - J(\cE_{\bB', \bU})}_1
        &= M \TD(\ketbra{J_{\bB, \bU}}, \ketbra{J_{\bB', \bU}}) \nonumber\\
        &= M \sqrt{1 - \abs{\braket{J_{\bB, \bU}|J_{\bB', \bU}}}^2}
        && \text{(From \Cref{fact:pure-td})} \nonumber\\
        &= M \sqrt{1 - \abs{\frac 1 M \sum_{x \in [M]} \braket{\psi_{\bB_x, \bU_x} | \psi_{\bB'_x, \bU_x}}}^2}
        \label{eq:jb-dis}
    \end{align}
    From the preparation procedure in \Cref{def:state-b-u}, state $\ket{\psi_{\bB_x, \bU_x}}$ can be written as
        \[
            \ket{\psi_{\bB_x, \bU_x}} = \sum_{z \in [N]} e^{2 \pi i u_z^{(x)}} \sqrt{c_z^{(x)}} \ket{z} ,
        \]
    where
        \begin{itemize}
            \item $c_z^{(x)} = \prod_{t \in [n]} \operatorname{Flip}(b_{t, z[1,\dots, t]}^{(x)}, z[t+1])$ .
            \item $\operatorname{Flip}(x, d) \defeq \begin{cases}
                x & \text{if } d = 0 \\
                1 - x & \text{if } d = 1 .
            \end{cases}
            $
        \end{itemize}
    We also define ${c'}_z^{(x)}$ for $\bB'$ similarly:
    \begin{itemize}
            \item ${c'}_z^{(x)} = \prod_{t \in [n]} \operatorname{Flip}({b'}_{t, z[1,\dots, t]}^{(x)}, z[t+1])$ ,
    \end{itemize}
     Since $\abs{b_{t, z}^{(x)} - {b'}_{t, z}^{(x)}} < \delta_1$ and $b_{t, z}^{(x)}, {b'}_{t, z}^{(x)} \in [0, 1]$, we have
        \begin{align}
            \abs{c_z^{(x)} - {c'}_z^{(x)}}
            \leq \abs{c_z^{(x)} - {c_z^{(x)}}^{[1]}} + \abs{{c_z^{(x)}}^{[1]} - {c_z^{(x)}}^{[2]}} + \cdots + \abs{{c_z^{(x)}}^{[n-1]} - {c'}_z^{(x)}}
            < n \delta_1, \label{eq:dis-c}
        \end{align}
        where the hybrid ${c_z^{(x)}}^{[j]}$ denotes the product in which the first $j$ multipliers use ${b'}_{t, z[1,\cdots,t]}^{(x)}$ (or $1 - {b'}_{t, z[1,\cdots,t]}^{(x)}$) and the remaining multipliers use $b_{t, z[1,\cdots,t]}^{(x)}$ (or $1 - b_{t, z[1,\cdots,t]}^{(x)}$):
        \[
            {c_z^{(x)}}^{[j]} = \prod_{t=0}^{j-1} \operatorname{Flip}({b'}^{(x)}_{t, z[1,\dots, t]}, z[t+1]) \prod_{t=j}^{n-1} \operatorname{Flip}(b^{(x)}_{t, z[1,\dots, t]}, z[t+1]) .
        \]
        Using
        \begin{align}
            &\sqrt{c^{(x)}_z {c'}^{(x)}_z} \\
            &\geq \min\{c^{(x)}_z, {c'}^{(x)}_z\} \nonumber\\ 
            &= \frac 1 2 (c^{(x)}_z + {c'}^{(x)}_z) - \frac 1 2 \abs{c^{(x)}_z - {c'}^{(x)}_z}\label{eq:sqr-c},
        \end{align}
        we have
        \begin{align*}
            &\frac 1 2 \norm{J(\cE_{\bB, \bU}) - J(\cE_{\bB', \bU})}_1 \\
            &= M \sqrt{1 - \abs{\frac 1 M \sum_{x \in [M]} \braket{\psi_{\bB_x, \bU_x} | \psi_{\bB'_x, \bU_x}}}^2}
                && \text{(From \cref{eq:jb-dis})}\\
            &= M \sqrt{1 - \paren{
                \frac 1 M \sum_{x \in [M]} 
                \sum_{z \in [N]} \sqrt{c^{(x)}_z {c'}^{(x)}_z}
            }^2} \\
            &\leq M \sqrt{1 - \paren{
                \frac 1 M \sum_{x \in [M]} 
                \sum_{z \in [N]} \frac 1 2 (c^{(x)}_z + {c'}^{(x)}_z - \abs{c^{(x)}_z - {c'}^{(x)}_z})
            }^2} 
                && \text{(From \cref{eq:sqr-c})}\\
            &= M \sqrt{1 - \paren{
                \frac 1 M \sum_{x \in [M]} \left(
                1 - \sum_{z \in [N]} \frac 1 2 \abs{c^{(x)}_z - {c'}^{(x)}_z}
                \right)
            }^2}
                && \text{(Since $\sum_{z \in [N]} c^{(x)}_z = \sum_{z \in [N]} {c'}^{(x)}_z = 1$)}\\
            &< M \sqrt{1 - \paren{1 - 2^n \cdot \frac 1 2 n \delta_1}^2}
                && \text{(From \cref{eq:dis-c})}\\
            &< 2^m \sqrt{2^n n \delta_1}.
        \end{align*}

        Then consider $\norm{J(\cE_{\bB', \bU}) - J(\cE_{\bB', \bU'})}_1$.
        We have
        \begin{align*}
            &\frac 1 2 \norm{J(\cE_{\bB', \bU}) - J(\cE_{\bB', \bU'})}_1 \\
            &= M \TD(\ketbra{J_{\bB', \bU}}, \ketbra{J_{\bB', \bU'}}) \\
            &= M \sqrt{1 - \abs{\braket{J_{\bB', \bU}|J_{\bB', \bU'}}}^2}
                && \text{(From \Cref{fact:pure-td})}\\
            & = M \sqrt{1 - \abs{\frac 1 M \sum_{x \in [M]} \braket{\psi_{\bB'_x, \bU_x} | \psi_{\bB'_x, \bU'_x}}}^2} \\
            &= M \sqrt{1 - \abs{
                \frac 1 M \sum_{x \in [M]}
                \sum_{z \in [N]} {c'}^{(x)}_z e^{2\pi i ({u'}^{(x)}_z - u^{(x)}_z)}
            }^2}\\
            &\leq M \sqrt{1 - \paren{
                \frac 1 M \sum_{x \in [M]}
                \sum_{z \in [N]}{c'}^{(x)}_z \Re\paren{e^{2\pi i ({u'}^{(x)}_z - u^{(x)}_z)}}
            }^2}\\
            &= M \sqrt{1 - \paren{
                \frac 1 M \sum_{x \in [M]}
                \sum_{z \in [N]}{c'}^{(x)}_z \cos(2 \pi ({u'}^{(x)}_z - u^{(x)}_z))
            }^2}
        \end{align*}

        Notice that when $\delta_2 < \frac 1 4$, $0 < \cos(2 \pi \delta_2) < \cos(2 \pi ({u'}^{(x)}_z - u^{(x)}_z))$ from $\abs{{u'}^{(x)}_z - u^{(x)}_z} < \delta_2$, and when $\delta_2 \geq \frac 1 4$ we have $\cos\left(\min\{2 \pi \delta_2, \frac \pi 2\}\right) = \cos \frac \pi 2 = 0$. So we have

        \begin{align*}
            &\frac 1 2 \norm{J(\cE_{\bB', \bU}) - J(\cE_{\bB', \bU'})}_1 \\
            &\leq M \sqrt{1 - \paren{
                \frac 1 M \sum_{x \in [M]}
                \sum_{z \in [N]}{c'}^{(x)}_z \cos\left(\min\{2 \pi \delta_2, \frac \pi 2\}\right)
            }^2} \\
            &= M \sqrt{1 - \cos^2 \left(\min\{2 \pi \delta_2, \frac \pi 2\}\right)}
                &&\text{(Since $\sum_{z \in [N]} {c'}^{(x)}_z = 1$)}\\
            &= M \sin \left(\min\{2 \pi \delta_2, \frac \pi 2\}\right)\\
            &< M \min\{2 \pi \delta_2, \frac \pi 2\}
                &&\text{(Since $\sin(\theta) < \theta$ for $\theta>0$)} \\
            &\leq M 2 \pi \delta_2 \\
            &= 2^m 2 \pi \delta_2 .
        \end{align*}

        Finally, we have
        \begin{align*}    
            \frac 1 2 \norm{\cE_{\bB, \bU} - \cE_{\bB', \bU'}}_\diamond
            &\leq \frac 1 2 \norm{J(\cE_{\bB, \bU}) - J(\cE_{\bB', \bU'})}_1 \\
            &\leq \frac 1 2 \norm{J(\cE_{\bB, \bU}) - J(\cE_{\bB', \bU})}_1 + \frac 1 2 \norm{J(\cE_{\bB', \bU}) - J(\cE_{\bB', \bU'})}_1 \\
            &< 2^m \left( \sqrt{2^n n \delta_1} + 2 \pi \delta_2 \right) .
        \end{align*}
\end{proof}

\section{Proof of claim about the scalable pseudorandom state generator} \label{sec:scalable-prs-proof}

\subsection*{Proof of: \Cref{def:PRSour} implies \Cref{def:PRSJLS}}

\begin{proof}
    We write $\KeyGen(1^n, 1^\lambda)$ as $\KeyGen(1^n, 1^\lambda; r)$, where $r$ is the randomness of this PPT algorithm. For given $\lambda$ and $n$, let $\cK$ be the set of all possible $r$. Then the family of quantum states $\{\ket{\phi_r} \in S(\C^{2^n})\}_{r \in \cK}$ is given by:
    
    \begin{enumerate}
        \item[$G(r)$:]
        \item $k_r = \KeyGen(1^n, 1^\lambda; r)$.
        \item $C_r = \CircuitGen(1^n, 1^\lambda, k_r)$.
        \item Output $\ket{\phi_r} = C_r()$.
    \end{enumerate}
    Since $\CircuitGen$ is a PT algorithm, the size of the quantum circuit $C_r$ is polynomially bounded in $\lambda$. The above algorithm $G$ is indeed a polynomial-time quantum algorithm.

    To show the pseudorandomness, assume that there exists $t \in \poly(\lambda)$ and a QPT distinguisher $\alg$ such that
    \[
        \abs{
            \Pr_{r \gets \cK}\sparen{\alg(\ket{\phi_r}^{\otimes t}) = 1}
            -
            \Pr_{\ket{\psi} \gets \mu_n}\sparen{\alg(\ket{\psi}^{\otimes t}) = 1}
        } = \varepsilon(\lambda) \notin \negl(\lambda).
    \]
    Then we can construct a QPT distinguisher $\mathcal{B}$ by using $V_{k_r}$ $t$ times to get $t$ copies of $\ket{\phi_r}$, and then running $\alg$. This gives
    \[
        \abs{
            \Pr_{\substack{r \gets \cK \\ k_r = \KeyGen(1^n, 1^\lambda; r)}}\sparen{\mathcal{B}^{V_{k_r}}(1^\lambda) = 1}
            -
            \Pr_{\ket{\psi} \gets \mu_n}\sparen{\mathcal{B}^{W_{\ket{\psi}}}(1^\lambda) = 1}
        } = \varepsilon(\lambda),
    \]
    and contradicts the pseudorandomness of $(\KeyGen, \CircuitGen)$ according to \Cref{def:PRSour}. Thus, $\{\ket{\phi_r}\}_{r \in \cK}$ is pseudorandom according to \Cref{def:PRSJLS}.

\end{proof}
\section{Classical sampling algorithms}

The main result of this section is an efficient deterministic classical algorithm to sample from the (finite precision) Beta distribution with given randomness as input (\Cref{lma:smp-beta}).

First, without considering precision, we show that, given a random variable from the normal distribution and a random variable from the uniform distribution, we can efficiently sample from $\dGamma(\alpha, 1)$ for $\alpha \geq 1$, with failure probability bounded by some constant.

\begin{lemma}\label{lma:smp-gamma}
    There exists a deterministic algorithm $\alg_G$, such that for any $\alpha \geq 1$, with $X \sim \cN(0, 1)$ and $U \sim \cU(0, 1)$ independently, the following holds:
    \begin{itemize}
        \item $\alg_G(\alpha, X, U)|_{\alg_G(\alpha, X, U) \neq \bot} \sim \dGamma(\alpha, 1)$.
        \item $\Pr[\alg_G(\alpha, X, U) = \bot] < 0.05$.
    \end{itemize}
\end{lemma}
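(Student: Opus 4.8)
\emph{Overview.} I will realize $\alg_G$ as the Marsaglia--Tsang transformation--rejection sampler for the Gamma distribution. Put $d = \alpha - \tfrac13$ (so $d \ge \tfrac23$ since $\alpha \ge 1$) and $c = 1/\sqrt{9d}$. On input $(\alpha, X, U)$ the algorithm computes $v = (1+cX)^3$; if $1 + cX \le 0$ it returns $\bot$; otherwise it returns $d v$ when $\ln U < \tfrac12 X^2 + d(1 - v + \ln v)$ and returns $\bot$ otherwise. This is deterministic in $(\alpha, X, U)$ and uses only arithmetic and the logarithm. Write $h(X) \defeq \tfrac12 X^2 + d(1 - v + \ln v)$ with $v = (1+cX)^3$.

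\emph{Correctness of the output distribution.} The first step is the inequality $h(X) \le 0$ for all $X$ with $1 + cX > 0$. Substituting $t = cX \in (-1,\infty)$ and using $X^2 = 9d\,t^2$, $v = (1+t)^3$, one finds $h = -d\,\phi(t)$ where $\phi(t) = (1+t)^3 - 1 - \tfrac92 t^2 - 3\ln(1+t)$; since $\phi(0) = \phi'(0) = \phi''(0) = \phi'''(0) = 0$ and $\phi''''(t) = 18(1+t)^{-4} > 0$ on $(-1,\infty)$, the function $\phi$ attains its minimum value $0$ at $t = 0$, so $h \le 0$. Because $U \sim \cU(0,1)$, conditioned on $1 + cX > 0$ and on the value of $X$ the algorithm returns $dv$ with probability exactly $\Pr[\ln U < h(X)] = e^{h(X)} \in (0,1]$. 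Next, $X \mapsto Y \defeq d(1+cX)^3$ is a strictly increasing differentiable bijection of $(-1/c, \infty)$ onto $(0,\infty)$, so by \Cref{fact:trans} the proposal variable $Y$ has density (on the event $1+cX>0$)
\[
    g(y) = \frac{1}{\sqrt{2\pi}}\, e^{-X^2/2}\cdot \frac{1}{3 d c\,(y/d)^{2/3}}, \qquad X = \frac{(y/d)^{1/3} - 1}{c}.
\]
Using $c^2 = 1/(9d)$, $3dc = \sqrt d$ and $d - \tfrac23 = \alpha - 1$, a direct calculation (the $e^{X^2/2}$ from $e^{h}$ cancels the Gaussian factor and $e^{-dv} = e^{-y}$) gives
\[
    g(y)\, e^{h} = \frac{e^{d}}{\sqrt{2\pi d}\; d^{\alpha-1}}\; y^{\alpha-1} e^{-y},
\]
which is proportional to the $\dGamma(\alpha,1)$ density; hence, by the rejection-sampling principle (\Cref{fact:rej-samp}), the output of $\alg_G$ conditioned on being $\neq \bot$ is distributed as $\dGamma(\alpha,1)$.

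\emph{Failure probability.} Integrating the previous display over $y \in (0,\infty)$ yields the exact acceptance probability
\[
    \Pr[\alg_G(\alpha, X, U) \neq \bot] = \frac{e^{d}\,\Gamma(\alpha)}{\sqrt{2\pi d}\; d^{\,\alpha-1}}, \qquad d = \alpha - \tfrac13 .
\]
I would then lower-bound $\Gamma(\alpha)$ by a Stirling-type inequality with an explicit positive remainder, $\Gamma(\alpha) > \sqrt{2\pi}\,\alpha^{\alpha - 1/2} e^{-\alpha} \exp\!\big(\tfrac{1}{12\alpha} - \tfrac{1}{360\alpha^3}\big)$ (valid for $\alpha \ge 1$), which gives
\[
    \Pr[\alg_G \neq \bot] > e^{-1/3}\paren{\frac{\alpha}{\alpha - 1/3}}^{\alpha - 1/2}\exp\!\paren{\frac{1}{12\alpha} - \frac{1}{360\alpha^3}} .
\]
The right-hand side equals $\approx 0.9512$ at $\alpha = 1$ and tends to $1$ as $\alpha \to \infty$; a routine single-variable check shows it never drops below $0.95$ on $[1,\infty)$, so $\Pr[\alg_G = \bot] < 0.05$.

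\emph{Main obstacle.} The only delicate point is this last uniform estimate near $\alpha = 1$, where the acceptance probability is smallest ($\approx 0.9517$): the asymptotic Stirling estimate of \Cref{fact:stir} carries an $O(1/\alpha)$ error term that is too loose to conclude at the boundary, so one must use a sharper Binet-type bound on $\Gamma$ as above (together with an elementary monotonicity/interval argument), or else simply cite the known efficiency analysis of the Marsaglia--Tsang generator. The remaining ingredients --- the inequality $h \le 0$ via the vanishing derivatives of $\phi$, and the exponent matching in $g\cdot e^{h} \propto y^{\alpha-1}e^{-y}$ --- are mechanical.
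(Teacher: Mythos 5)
Your proposal is correct and follows essentially the same route as the paper: the same Marsaglia--Tsang sampler, the same envelope inequality (your $h(X)\le 0$ is the paper's \Cref{fun-lma:s-lesser-than-0}, proved there by a sign analysis of $s'(x)=-c^2x^3/(3(1+cx))$ instead of your fourth-derivative argument for $\phi$), the same rejection-sampling correctness argument up to where the change of variables $y=d(1+cx)^3$ is performed, and the same exact acceptance probability $e^{d}\Gamma(\alpha)/(\sqrt{2\pi d}\,d^{\alpha-1})$. The only divergence is the closing uniform bound: where you import a Binet-type Stirling bound with explicit remainder and leave a "routine single-variable check" sketched, the paper instead proves self-containedly (\Cref{fun-lma:p-large}) that $p(\alpha)$ is increasing, so its minimum is $p(1)\approx 0.9517>0.95$.
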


\begin{proof}
    We use Marsaglia and Tsang's algorithm \cite{MT00}, and provide a rigorous proof of its correctness.

    \begin{algorithm}[H]
        \caption{$\alg_G$: Sample from standard Gamma distribution}
        \label{alg:g}
        \begin{algorithmic}[1]
            \INPUT $\alpha \geq 1$, $x \gets \mathcal{N}(0, 1)$, $u \gets \cU(0, 1)$
            \State $d := \alpha -\frac 1 3, c := \frac 1 {\sqrt{9 d}}$.
            \State $v := (1 + c x)^3$.
            \If {$v > 0$ and $\ln(u) < \frac 1 2 x^2 + d - d v + d \ln(v)$}
                \State return $d v$
            \Else   
                \State return $\bot$
            \EndIf
        \end{algorithmic}
    \end{algorithm}

    First, we show that the output follows $\dGamma(\alpha, 1)$ when this algorithm succeeds.
    Let $h(x) = d(1 + c x)^3$ for $x \in (- \frac 1 c, +\infty)$. If a random variable $X$ follows the distribution with PDF $h(x)^{\alpha-1} e^{-h(x)} h'(x) / \Gamma(\alpha)$, then PDF of $Y = h(X)$ will be $y^{\alpha-1} e^{-y} / \Gamma(\alpha)$ (\Cref{fact:trans}, since $h(x)$ is monotonic and differentiable, $h'(x) > 0$), which means $Y$ follows $\dGamma(\alpha, 1)$ (recall that $\dGamma(\alpha, \theta)$ has PDF $f(y) = \frac 1 {\Gamma(\alpha) \theta^\alpha } y^{\alpha-1} e^{-y/\theta} $).

    We have
    \begin{align}
        &h(x)^{\alpha-1} e^{-h(x)} h'(x) / \Gamma(\alpha) \nonumber\\
        &= d^{\alpha-1} (1 + c x)^{3(\alpha-1)} e^{-d(1 + c x)^3} 3 c d (1 + c x)^2 / \Gamma(\alpha) \nonumber\\
        &= 3 c d^{\alpha} / \Gamma(\alpha) \cdot e^{-d} \cdot e^{
            d \ln (1 + c x)^3 - d(1+ c x)^3 +d
        }.
            \quad \quad \text{(Since $d = \alpha - \frac 1 3$)}
        \label{eq:target_pdf}
    \end{align}

    Let $g(x) = d \ln (1 + c x)^3 - d(1+ c x)^3 +d, x \in (- \frac 1 c, +\infty)$.
    Rewrite \cref{eq:target_pdf} as $ C e^{g(x)}$, where the constant $C = 3 c d^{\alpha} / \Gamma(\alpha) e^{-d}$.
    Our target PDF for $x$ is
    \begin{align*}
        f_0(x) = \begin{cases}
            0 , & x \in (- \infty , -\frac 1 c]\\
            C e^{g(x)}, & x \in (-\frac 1 c, + \infty)
        \end{cases} .
    \end{align*}
    From \Cref{fun-lma:s-lesser-than-0} we know $g(x) \leq -\frac  1 2 x^2$,
    so we can sample $x$ from $\cN(0, 1)$, whose PDF is $f_1(x) = \frac 1 {\sqrt{2 \pi}} e^{-\frac 1 2 x^2}$,
    to do the rejection sampling (\Cref{fact:rej-samp}, with $M = \sqrt{2 \pi} C$) ,
    accepting $x$ with probability
    \[
    p_\text{accept} (x) = 
    \begin{cases}
        0, & x \in (- \infty , -\frac 1 c]\\
        e^{g(x)} / e^{-\frac 1 2 x^2}, & x \in (-\frac 1 c, + \infty)
    \end{cases}.
    \]
    In $\alg_G$ this acceptance probability is simulated by checking whether $x > - \frac 1 c$ and $u < e^{g(x) + \frac 1 2 x^2}$ for $u \gets \cU(0, 1)$. 

    Above rejection sampling ensures that $x$ follows the target PDF $f_0(x)$ when it is accepted, so that the output $d v = d(1 + cx)^3 = h(x)$ follows $\dGamma(\alpha, 1)$, i.e.
    \[
        \alg_G(\alpha, X, U)|_{\alg_G(\alpha, X, U) \neq \bot} \sim \dGamma(\alpha, 1) .
    \]
    
    Denote the success probability of this rejection sampling by $p_{\alpha}$, which is also the success probability of this algorithm. Then
    \begin{align*}
        p_{\alpha}
        &= \frac 1 M && \text{(\Cref{fact:rej-samp})}\\
        &= \frac 1 {
            \sqrt{2 \pi} \cdot 
            3 c d^\alpha  e^{-d} / \Gamma(\alpha)
        }\\
        &= \frac {e^d \Gamma(\alpha)} {3 c d^\alpha \sqrt{2 \pi}}\\
        &= \frac {e^{\alpha - \frac 1 3} \sqrt{\alpha-\frac 1 3} \Gamma(\alpha)} {\sqrt{2 \pi} (\alpha - \frac 1 3)^\alpha}.
    \end{align*}

    From \Cref{fun-lma:p-large}, we know $p_{\alpha} > 0.95$, so $\Pr\left[\alg_G(\alpha, X, U) = \bot\right] < 0.05$.
\end{proof}

\begin{definition}[Rounded Gaussian distribution \cite{BS20}]
    We define the rounded standard Gaussian distribution $\cN_{R(\varepsilon, B)}(0, 1)$ to be the output distribution of the following process: Sample $z$ from $\cN(0, 1)$, if $\abs{z} > B$ then output $0$, otherwise output $\floor{z /\varepsilon} \varepsilon$.
\end{definition}

\begin{fact} [Efficient Rounded Gaussian Sampling {\cite[Fact 3.14]{BS20}}] \label{fact:smp-norm}
    There is a sampling algorithm $\alg_{RN}$ that takes $1^m, B$ and a random tape as input, runs in polynomial time, i.e.\ $\poly(m, \log B)$, and samples from a distribution that has statistical distance at most $2^{-m}$ from the rounded standard Gaussian distribution $\cN_{R(2^{-m}, B)}(0, 1)$.
\end{fact}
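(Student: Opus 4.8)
The plan is to realize the rounded standard Gaussian by the inverse-CDF (quantile) method, which gives an exact reduction to sampling one uniform variable, and then to discretize everything. Write $\varepsilon = 2^{-m}$ and let $\Phi$ denote the standard Gaussian CDF. For $U \sim \cU(0,1)$ the value $\Phi^{-1}(U)$ is distributed as $\cN(0,1)$, so $\floor{\Phi^{-1}(U)/\varepsilon}\,\varepsilon$, with the value overwritten by $0$ whenever $\abs{\Phi^{-1}(U)} > B$, is exactly distributed as $\cN_{R(\varepsilon,B)}(0,1)$. Equivalently, $U$ lands in the atom $j\varepsilon$ exactly when $\Phi(j\varepsilon) \le U < \Phi((j+1)\varepsilon)$, and there are $O(B/\varepsilon) = O(B 2^m)$ such cell boundaries in $[0,1]$ (including the truncation boundaries $\Phi(\pm B)$). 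So $\alg_{RN}$ reads $p = \Theta(m + \log B)$ uniformly random bits of its tape to form $\hat U \in \{0, 2^{-p}, \dots, 1-2^{-p}\}$, binary-searches over $j \in \{-\lceil B2^m \rceil - 1, \dots, \lceil B2^m\rceil + 1\}$ for the cell containing $\hat U$ using a high-precision approximation $\tilde\Phi$ of $\Phi$, and outputs $j\varepsilon$ --- overwritten by $0$ in the truncated region or when a comparison of $\hat U$ against $\tilde\Phi(j\varepsilon)$ is flagged as too close to call.

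For the numerical subroutine, recall $\Phi(x) = \tfrac12\bigl(1 + \operatorname{erf}(x/\sqrt2)\bigr)$. By \Cref{fact:mills}, $\abs{\Phi(x) - \mathbf 1[x>0]} \le e^{-x^2/2}/(\abs{x}\sqrt{2\pi})$, which is below $2^{-p-2}$ once $\abs{x}$ exceeds a threshold $T = O(\sqrt{m+\log B})$; for such $x$ we simply set $\tilde\Phi(x) := \mathbf 1[x>0]$. For $\abs{x} \le T$ we evaluate the everywhere-convergent series $\operatorname{erf}(y) = \tfrac{2}{\sqrt\pi}\sum_{k\ge 0}(-1)^k y^{2k+1}/(k!\,(2k+1))$; since $\abs{y} = O(\sqrt{m+\log B})$ there, truncating after $\poly(m,\log B)$ terms and carrying $\poly(m,\log B)$ bits of working precision yields, by an explicit bound on the truncation tail and on accumulated rounding, an approximation with $\abs{\tilde\Phi(x) - \Phi(x)} \le \delta := 2^{-p-2}$, computable in time $\poly(m,\log B)$. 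The binary search makes $O(m + \log B)$ calls to $\tilde\Phi$, so the whole algorithm runs in $\poly(m, \log B)$ time.

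The error analysis is a union bound over coupling failures. Couple $\hat U$ to a genuine $U \sim \cU(0,1)$ with $\hat U \le U < \hat U + 2^{-p}$. The algorithm's output equals the ideal sample $\floor{\Phi^{-1}(U)/\varepsilon}\,\varepsilon$ (truncated at $\pm B$) unless $U$ lies within $2^{-p} + 2\delta$ of one of the $O(B2^m)$ cell boundaries $\Phi(j\varepsilon)$: outside that bad set, rounding $U$ down to $\hat U$ crosses no boundary and each $\delta$-approximate comparison returns the true side, so the binary search lands in the correct cell. Hence $\Delta\bigl(\text{output},\, \cN_{R(\varepsilon,B)}(0,1)\bigr) \le O(B2^m)\cdot(2^{-p} + 2\delta)$, which drops below $2^{-m}$ for a suitable choice $p = \Theta(m + \log B)$ (note $\delta = 2^{-p-2}$ is subsumed).

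The only non-routine ingredient is the claim in the middle paragraph that $\Phi$ is computable to $2^{-\poly}$ additive accuracy in $\poly(m,\log B)$ time with a fully certified error bound. This is where care is needed, but it is standard --- rigorous tail bounds on the erf Taylor series for the bounded regime, and Mill's inequality (\Cref{fact:mills}) for the tail regime --- and, crucially, nothing in the argument relies on a lower bound for the Gaussian density, so no conditioning problem arises near $\pm B$. I expect this to be the main, though not deep, obstacle; the rest is bookkeeping.
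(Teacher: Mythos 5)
The paper never proves this statement: it is imported verbatim as an external fact from \cite[Fact 3.14]{BS20}, so there is no in-paper argument to compare yours against; the only ``proof'' here is the citation. Judged on its own, your inverse-CDF construction is a sound, self-contained way to establish the fact, and it is the standard route for this kind of claim. The exact reduction to one uniform variable (rounding and truncating $\Phi^{-1}(U)$ reproduces $\cN_{R(2^{-m},B)}(0,1)$ exactly, since the paper's definition rounds down and outputs $0$ outside $[-B,B]$), the $O(B2^m)$ cell boundaries including $\Phi(\pm B)$, the coupling $\hat U \le U < \hat U + 2^{-p}$, and the union bound over boundary neighborhoods are all correct; note that with $O(B2^m)$ boundaries you need $p \ge 2m + \log_2 B + O(1)$, which your $\Theta(m+\log B)$ accommodates but is worth stating explicitly. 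Two small points to tighten: (i) the ``too close to call'' flag should be shown to fire only inside (a slightly enlarged) bad set --- with threshold $2\delta$ the flagged event forces $U$ within $2^{-p}+3\delta$ of a true boundary, so it is harmless, and in fact you could drop the flag altogether since mis-resolved comparisons are already charged to the bad set; (ii) the certified $\poly(m,\log B)$-time evaluation of $\Phi$ to additive error $2^{-p-2}$ is, as you say, the only real content, and your split into the tail regime (handled by \Cref{fact:mills}) and the bounded regime $\abs{x} \le T = O(\sqrt{m+\log B})$ (handled by a truncated $\operatorname{erf}$ Taylor series with explicit tail and rounding bounds) is the standard and adequate way to get it. So the proposal is correct in outline, with only routine bookkeeping left, and it gives this paper something it currently lacks, namely a proof of the cited fact rather than a pointer to \cite{BS20}.
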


\begin{definition}[Rounded Beta distribution]\label{def:rounded-beta}
    We define the rounded Beta distribution $\dBeta_{R(\varepsilon)}(\alpha, \beta)$ to be the output distribution of the following process: Sample $z$ from $\dBeta(\alpha, \beta)$, output $\round{z /\varepsilon} \varepsilon$.
\end{definition}

\begin{lemma}\label{lma:smp-beta}
    There exists a deterministic polynomial sampling algorithm $\alg_{RB}$ such that for any $m, \alpha \in \N^+$, with $U \sim \cU_{\poly(m)}$ uniformly random, $\alg_{RB}(1^m, \alpha, U)$ has statistical distance at most $2^{-m}$ from the rounded Beta distribution $\dBeta_{R(2^{-m})}(\alpha, \alpha)$.
\end{lemma}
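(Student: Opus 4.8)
The plan is to reduce Beta sampling to Gamma sampling via the identity in \Cref{fact:gamma-beta}: if $X \sim \dGamma(\alpha, 1)$ and $Y \sim \dGamma(\alpha, 1)$ independently, then $X/(X+Y) \sim \dBeta(\alpha, \alpha)$. So the first step is to build a deterministic classical routine that, given a uniform random tape, produces a sample statistically close to $\dGamma(\alpha, 1)$. For this I would invoke \Cref{lma:smp-gamma}: it yields $\alg_G(\alpha, x, u) \sim \dGamma(\alpha, 1)$ conditioned on not outputting $\bot$, with failure probability below $0.05$. To drive $\alg_G$ I need one sample from $\cN(0,1)$ and one from $\cU(0,1)$; the Gaussian sample I get (to finite precision) from the rounded-Gaussian sampler $\alg_{RN}$ of \Cref{fact:smp-norm}, and the uniform sample I get by reading off a block of the random tape and dividing by a power of two. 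To boost the $0.05$ failure probability down to something negligible in $m$, I would run $\Theta(m)$ independent attempts in sequence (each consuming a fresh block of the tape) and take the first successful one; the probability that all attempts fail is at most $0.05^{\Theta(m)} = 2^{-\Omega(m)}$, which I can make at most, say, $2^{-m-3}$. If every attempt fails I output some fixed default value; this contributes at most $2^{-m-3}$ to the statistical distance.

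Next I would do this twice, with disjoint portions of the tape, to obtain two values $\hat x$ and $\hat y$ that are each within statistical distance $O(2^{-m})$ of $\dGamma(\alpha,1)$ and are independent; set $\hat b := \hat x / (\hat x + \hat y)$, and output $\round{\hat b / 2^{-m'}} 2^{-m'}$ for a suitable internal precision parameter $m' = m + O(\log \alpha + \text{const})$. The idealized quantity $b := x/(x+y)$ with true $\dGamma(\alpha,1)$ inputs is exactly $\dBeta(\alpha,\alpha)$-distributed by \Cref{fact:gamma-beta}, so $\round{b/2^{-m}} 2^{-m} \sim \dBeta_{R(2^{-m})}(\alpha,\alpha)$ by definition (\Cref{def:rounded-beta}). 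The remaining work is a chain of statistical-distance bounds: (i) replacing the true Gaussian inputs of $\alg_G$ by the rounded/truncated samples from $\alg_{RN}$ costs $O(2^{-m'})$ per call by the data-processing inequality and \Cref{fact:smp-norm}; (ii) replacing true uniforms by tape blocks of length $\poly(m)$ costs $2^{-\poly(m)}$; (iii) the repeated-trials default case costs $\le 2^{-m-3}$; and (iv) I must argue that the division-then-round map applied to the finite-precision $(\hat x, \hat y)$ produces nearly the same distribution as rounding the idealized $b$. For (iv) the key observation is that $b \mapsto \round{b/\varepsilon}\varepsilon$ is $1$-Lipschitz-ish as a map between distributions only away from the midpoints of rounding intervals, so I would instead argue at the level of the pre-rounding random variable: $|\hat b - b|$ is small with overwhelming probability (the denominator $x+y$ is bounded away from $0$ except on an event of tiny probability, using that $\dGamma(\alpha,1)$ with $\alpha \ge 1$ has a light left tail), and then use the standard fact that if two reals differ by at most $\eta$ then their $\varepsilon$-roundings agree except on a set of the first variable of measure $O(\eta/\varepsilon)$ under any density bounded near the grid points (the $\dBeta(\alpha,\alpha)$ density is bounded). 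Choosing $m'$ a few bits larger than $m$ absorbs all these $O(\eta/\varepsilon)$ and $O(2^{-m'})$ terms into a final bound below $2^{-m}$.

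Efficiency and determinism are immediate: $\alg_{RN}$ runs in $\poly(m, \log B)$, $\alg_G$ is a constant number of arithmetic/transcendental operations (evaluated to $\poly(m)$ bits, which only changes the statistical distance by another $2^{-\poly(m)}$), the repetition factor is $O(m)$, and the division and rounding are polynomial-time; the entire procedure is a deterministic function of $(1^m, \alpha, U)$ with $U \in \cU_{\poly(m)}$ once the needed tape length is fixed.

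\textbf{Main obstacle.} The delicate point is step (iv): controlling the effect of rounding after the nonlinear map $(\hat x,\hat y)\mapsto \hat x/(\hat x+\hat y)$. Rounding is discontinuous, so a small additive perturbation of $\hat b$ can flip the rounded output near interval midpoints; I must convert the ``$|\hat b - b|$ is tiny with high probability'' statement into a total-variation bound on the rounded outputs, which requires a density bound for $\dBeta(\alpha,\alpha)$ near the rounding grid and a careful handling of the (rare) event that $x+y$ is small. Getting the internal precision $m'$ large enough to kill every one of these terms simultaneously, while keeping it $\poly(m)$, is where the real care lies; everything else is bookkeeping with the data-processing inequality and the triangle inequality for statistical distance.
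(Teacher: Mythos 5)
Your high-level route is the same as the paper's (Marsaglia--Tsang Gamma sampling via \Cref{lma:smp-gamma}, Gamma-to-Beta via \Cref{fact:gamma-beta}, $\Theta(m)$ repetitions to amplify the $0.95$ success probability, Gaussian inputs from \Cref{fact:smp-norm}), but the two places where the paper's proof actually does its work are exactly the places where your plan has gaps. First, step (i) is not a data-processing argument. Comparing ``run $\alg_G$ on exact $(x,u)$'' with ``run $\alg_G$ on $(x,u)$ rounded to $2^{-m'}$'' is a comparison of two different maps applied to the same randomness, and the danger is that rounding flips the accept/reject decision of the Marsaglia--Tsang test $\ln u < \tfrac12 x^2 + d - dv + d\ln v$; that flip changes which trial succeeds and hence the conditional output law, and it is not controlled by \Cref{fact:smp-norm} plus DPI. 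Bounding it requires showing that the acceptance boundary $u = e^{s(x)}$ is Lipschitz with a constant \emph{independent of $\alpha$} (recall $\alpha=2^{n-t-1}$ is exponentially large in the application, while the lemma demands error $2^{-m}$ uniformly in $\alpha$, with only $\poly(m)$ randomness). The paper needs a dedicated checker $\cC$, Mill's inequality for the $\abs{x}>2m$ truncation, and the nontrivial \Cref{fun-lma:esx-d-small} to get this uniform constant; your plan does not address it at all.

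Second, your resolution of the obstacle you correctly identify (step (iv)) rests on ``the $\dBeta(\alpha,\alpha)$ density is bounded,'' which is not a uniform statement: the peak of that density at $1/2$ grows like $\sqrt{\alpha}$, so a density-sup bound times the window size does not give an $\alpha$-free estimate. Combined with your crude control of the perturbation ($\abs{\hat b - b}$ via ``$x+y$ bounded away from $0$'' misses that the relevant derivative of $a=d(1+cx)^3$ is itself of order $\sqrt{\alpha}$, and that the $\sqrt{\alpha}$ factors only cancel if you track the map $(u,v)\mapsto u^3/(u^3+v^3)$ carefully, as in \Cref{fun-lma:partial-small}, with the corner case $x\approx-\tfrac1c$ handled separately), your stated precision $m'=m+O(\log\alpha+\mathrm{const})$ is insufficient even for constant $\alpha$ (one needs roughly $2m$--$3m$ extra bits), and any $\alpha$-dependent tape length conflicts with the $U\sim\cU_{\poly(m)}$ requirement of the lemma for arbitrary $\alpha$ (benign in the application, where $\log_2\alpha<m$, but not as the lemma is stated). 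The paper avoids any density-sup bound altogether: it compares the mass in a $2^{-\mm+m}$-window around each rounding midpoint against the mass of the adjacent half-interval using only monotonicity of the $\dBeta(\alpha,\alpha)$ density on $(0,\tfrac12)$ and $(\tfrac12,1)$, which yields a bound of the form $4\cdot 2^{-\mm+2m}$ uniform in $\alpha$. So the skeleton of your plan is right, but the acceptance-boundary stability and an $\alpha$-uniform rounding argument are missing ideas, not bookkeeping.
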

\begin{proof}
    Without considering precision, we first show a deterministic algorithm $\alg_{B1}$ to sample from $\dBeta(\alpha, \alpha)$ with a polynomial number of standard normal random variables and uniform random variables as inputs.

    \begin{algorithm}[H]
        \caption{$\alg_{B1}$}
        \label{alg:beta1}
        \begin{algorithmic}[1]
            \INPUT $1^m, \alpha \in \N^+$, $x_1, \cdots, x_{2m} \gets \mathcal{N}(0, 1)$, $u_1, \cdots, u_{2m} \gets \cU(0, 1)$
            \For{$i := 1 \text{ to } m$}
                \State $a := \alg_G(\alpha, x_{2i-1}, u_{2i-1})$
                \If{$a \neq \bot$} \label{alg:beta1:line:check-a}
                    \Break
                \EndIf
            \EndFor
            \For{$i := 1 \text{ to } m$}
                \State $b := \alg_G(\alpha, x_{2i}, u_{2i})$
                \If{$b \neq \bot$} \label{alg:beta1:line:check-b}
                    \Break
                \EndIf
            \EndFor
            \If{$a = \bot$ or $b = \bot$} \label{alg:beta1:line:ab}
                \State return $\bot$
            \Else
                \State return $a / (a+b)$
            \EndIf
        \end{algorithmic}
    \end{algorithm}

    Let the values of $a$ and $b$ at line~\ref{alg:beta1:line:ab} of \Cref{alg:beta1} be $a^*$ and $b^*$. We have
    \begin{align*}    
        &\Pr[a^* = \bot ] \leq 0.05^{m} < 2^{-4m}, \\
        &\Pr[b^* = \bot ] \leq 0.05^{m} < 2^{-4m}. &&\text{(From \Cref{lma:smp-gamma})}
    \end{align*}
    Using the union bound, we have
    \begin{align}
        \Pr[a^* = \bot \text{ or } b^* = \bot] < 2^{-4m+1}. \label{eq:b1}
    \end{align}
    This gives
    \[
        \Pr[\alg_{B1}(1^m, \alpha, X_1, \cdots, X_{2m}, U_1, \dots, U_{2m}) = \bot]
        < 2^{-4m+1},
    \]
    where random variables $X_i \sim \cN(0, 1), U_i \sim \cU(0, 1)$ independently.

    From \Cref{fact:gamma-beta}, we know that if both $a^*$ and $b^*$ follow $\dGamma(\alpha, 1)$ independently, then $a^*/(a^* + b^*)$ follows $\dBeta(\alpha, \alpha)$. Thus, we have
    \[
        \alg_{B1}(1^m, \alpha, X_1, \cdots, X_{2m}, U_1, \dots, U_{2m})|_{
            \alg_{B1}(1^m, \alpha, X_1, \cdots, X_{2m}, U_1, \dots, U_{2m}) \neq \bot
        }
        \sim \dBeta(\alpha, \alpha) .
    \]

    We now modify $\alg_{B1}$ to produce an output with $m$-bit precision.
    We define $\tilde{\alg}_G^{\angbra{m_1}}$ to be the rounded version of $\alg_G$, which only reads the first $m_1$ bits after the binary point of inputs (always round down even if the input is negative); i.e.
    \[
        \tilde{\alg}_G^{\angbra{m_1}}(\alpha, x, u) \defeq \alg_G(\alpha, \floor{2^{m_1} x} 2^{-m_1} , \floor{2^{m_1} u} 2^{-m_1}).
    \] 
    We need to consider when the output (after rounding to $m$-bit precision) of the original $\alg_G$ and the new one using $\tilde{\alg}_G^{\angbra{m_1}}$ differ. There are only two possible scenarios:
    \begin{enumerate}
        \item $\alg_G(\alpha, x_i, u_i)$ and $\tilde{\alg}_G^{\angbra{m_1}}(\alpha, x_i, u_i)$ behave differently in some iteration: One accepts and the other rejects.
        \item Each pair of them behave the same, but the final output is different under $m$-bit precision, i.e. $\round{2^m a / (a+b)}2^{-m} \neq \round{2^m \tilde{a} / (\tilde{a} + \tilde{b})} 2^{-m}$, where $\tilde{a}, \tilde{b}$ are the outputs of $\tilde{\alg}_G^{\angbra{m_1}}$ in the same iteration as $a, b$.
    \end{enumerate}
    To detect the first one, we define a checker $\cC$ as follows (and choose $\varepsilon = 2^{-m_1}$):
    \begin{itemize}
        \item $\cC(\varepsilon, \alpha, x, u)$:
        \begin{enumerate}
            \item Let $d = \alpha - \frac 1 3, c = \frac 1 {\sqrt{9 d}}$, and $\tilde{x} = \floor{x/\varepsilon} \varepsilon, \tilde{u} = \floor{u/\varepsilon} \varepsilon$.
            \item If $\ind{(1+c x)^3 > 0} \neq \ind{(1+c \tilde{x})^3 > 0}$, output false. 
            \item Let $s(x) = \frac 1 2 x^2 + d - d (1+c x)^3 + d \ln((1+c x)^3)$. If $\ind{u < e^{s(x)}} \neq \ind{\tilde{u} < e^{s(\tilde{x})}}$, output false.
            \item Otherwise, output true.
        \end{enumerate}
    \end{itemize}
    This checker checks whether $\alg_G$ behaves the same, both accepting or both rejecting, when run on $(\alpha, x, u)$ versus $(\alpha, \tilde{x}, \tilde{u})$ (see $\alg_G$ \Cref{alg:g}). Thus, as long as $\cC(2^{-m_1}, \alpha, x_{2i-1}, u_{2i-1})$ outputs true, the branch taken by the if statement at line~\ref{alg:beta1:line:check-a} is unaffected by replacing $\alg_G$ with $\tilde{\alg}_G^{\angbra{m_1}}$, and similarly for the branch at line~\ref{alg:beta1:line:check-b}.

    We also exclude the case that $\abs{x}$ is too large, because later we will use $x$ sampled from the rounded Gaussian distribution, whose range is bounded. Now we have our new algorithm $\alg_{B2}$.
    
    \newcommand{\mm}{\ensuremath{m_1}}

    Let $\mm = 3m + 3 + \ceil{\log_2 (\eta +3)}$, where $\eta$ is the constant from \Cref{fun-lma:esx-d-small}.

    \begin{algorithm}[H]
        \caption{$\alg_{B2}$}
        \label{alg:beta2}
        \begin{algorithmic}[1]
            \INPUT $1^m, \alpha \in \N^+$, $x_1, \cdots, x_{2m} \gets \mathcal{N}(0, 1)$, $u_1, \cdots, u_{2m} \gets \cU(0, 1)$
            \For{$i := 1 \text{ to } m $}
                \If{ $\abs{x_{2 i -1}} > 2m $ or not $\cC(2^{-\mm}, \alpha, x_{2 i -1}, u_{2 i -1})$}
                    \State return $\bot_R$
                \EndIf
                \State $\tilde{a} := \tilde{\alg}_G^{\angbra{\mm}}(1^m, \alpha, x_{2i-1}, u_{2i-1})$
                \State $a := \alg_G(1^m, \alpha, x_{2i-1}, u_{2i-1})$
                \If{$\tilde{a} \neq \bot$}
                    \Break
                \EndIf
            \EndFor
            \For{$i := 1 \text{ to } m $}
                \If{$\abs{x_{2 i}} > 2m$ or not $\cC(2^{-\mm}, \alpha, x_{2 i}, u_{2 i})$}
                    \State return $\bot_R$
                \EndIf
                \State $\tilde{b} := \tilde{\alg}_G^{\angbra{\mm}}(1^m, \alpha, x_{2i}, u_{2i})$
                \State $b := \alg_G(1^m, \alpha, x_{2i}, u_{2i})$
                \If{$\tilde{b} \neq \bot$}
                    \Break
                \EndIf
            \EndFor
            \If{$\tilde{a} = \bot$ or $\tilde{b} = \bot$}
                \State return $\bot$
            \ElsIf {$\round{2^m a / (a+b)} 2^{-m} \neq \round{2^m \tilde{a} / (\tilde{a} + \tilde{b})} 2^{-m}$}
                \State return $\bot_R$ \label{alg:beta2:line:bot-R-at-end}
            \Else
                \State return $\round{2^m \tilde{a} / (\tilde{a} + \tilde{b})} 2^{-m}$
            \EndIf
        \end{algorithmic}
    \end{algorithm}\

    We need to check the probability that $\alg_{B2}$ outputs $\bot_R$.
    From \Cref{fact:mills}, we have
    \begin{align}
        \Pr_{x \gets \cN(0, 1)}[\abs{x} > 2m] < e^{-\frac {(2m)^2} {2}}/(2m) . \label{eq:check1}
    \end{align}
    We also have
    \begin{align}
        &\Pr_{x \gets \cN(0, 1)}\sparen{\ind{(1+cx)^3 > 0} \neq \ind{(1+c\floor{x/\varepsilon}\varepsilon)^3 > 0}} \nonumber \\
        &= \Pr_{x \gets \cN(0, 1)}\sparen{ \ind{x > -\frac 1 c} \neq \ind{\floor{x/\varepsilon}\varepsilon > -\frac 1 c}}\nonumber\\
        &\leq \Pr_{x \gets \cN(0, 1)}\sparen{x \in (-\frac 1 c, -\frac 1 c + \varepsilon)} \nonumber\\
        & < \varepsilon.
            \quad \quad \text{(Since PDF of $\cN(0, 1)$ is $\frac{1}{\sqrt{2 \pi}} e^{-\frac 1 2 x^2} < 1$)}
        \label{eq:v}
    \end{align}
    Since $\abs{(e^{s(x)})'} \leq \eta$ (\Cref{fun-lma:esx-d-small}) and $\abs{x -  \floor{x/\varepsilon}\varepsilon} < \varepsilon $, we know
    \begin{align}
        \abs{e^{s(\floor{x/\varepsilon}\varepsilon)} - e^{s(x)}} < \eta \varepsilon. &&\text{(Mean value inequality)}
        \label{eq:esx-diff}
    \end{align}
    We also know that $\abs{u - \floor{u/\varepsilon} \varepsilon} < \varepsilon$. Then, when $u \leq \min\left\{
                    e^{s(x_0)},
                    e^{s(\floor{x_0/\varepsilon}\varepsilon)}
                \right\}$,
    $$
        \ind{u < e^{s(x_0)}}
        =
        \ind{\floor{u/\varepsilon}\varepsilon < e^{s(\floor{x_0/\varepsilon}\varepsilon)}} = 1,
    $$
    and when $u \geq \max\left\{
                    e^{s(x_0)},
                    e^{s(\floor{x_0/\varepsilon}\varepsilon)}
                \right\} + \varepsilon$,
    $$
        \ind{u < e^{s(x_0)}}
        =
        \ind{\floor{u/\varepsilon}\varepsilon < e^{s(\floor{x_0/\varepsilon}\varepsilon)}} = 0.
    $$
    Thus, for a fixed $x_0$ and uniformly random $u$,
    \begin{align*}
        &\Pr_{u \gets \cU(0, 1)} \sparen{
            \ind{u < e^{s(x_0)}}
            \neq
            \ind{\floor{u/\varepsilon}\varepsilon < e^{s(\floor{x_0/\varepsilon}\varepsilon)}}
        }\\
        &\leq \Pr_{u \gets \cU(0, 1)} \sparen{
            u \in \paren{
                \min\left\{
                    e^{s(x_0)},
                    e^{s(\floor{x_0/\varepsilon}\varepsilon)}
                \right\},
                \max\left\{
                    e^{s(x_0)},
                    e^{s(\floor{x_0/\varepsilon}\varepsilon)}
                \right\} + \varepsilon
            }
        }\\
        &= \max\left\{
            e^{s(x_0)},
            e^{s(\floor{x_0/\varepsilon}\varepsilon)}
        \right\} + \varepsilon - \min\left\{
            e^{s(x_0)},
            e^{s(\floor{x_0/\varepsilon}\varepsilon)}
        \right\} \\
        &< (\eta + 1) \varepsilon , 
            \quad \quad \text{(from \cref{eq:esx-diff})}
    \end{align*}
    so
    \begin{align}     
        \Pr_{u \gets \cU(0, 1), x \gets \cN(0, 1)} \sparen{\ind{u < e^{s(x)}} \neq \ind{\floor{u/\varepsilon}\varepsilon < e^{s(\floor{x/\varepsilon}\varepsilon)}}} < (\eta + 1) \varepsilon . \label{eq:check2}
    \end{align}

    The final point at which $\alg_{B_2}$ may output $\bot_R$ is in line~\ref{alg:beta2:line:bot-R-at-end}.
    We show that the probability of this is small enough by discussing the following two cases:
    \begin{enumerate}
        \item At least one of $a$ and $b$ is not so close to $0$. In this case, $\abs{a / (a+b) - \tilde{a} / (\tilde{a} + \tilde{b})}$ can be bounded, then the probability that they differ after rounding can be bounded.
        \item Both $a$ and $b$ are very close to $0$. The probability that this happens is small enough.
    \end{enumerate}
    $\alg_{B_2}$ reaches line~\ref{alg:beta2:line:bot-R-at-end} only when both $a$ and $b$ are not $\bot$. Let $x_a, x_b$ be the value of input $x$ of $\alg_G$ for $a$ and $b$ respectively; i.e. $$a = d(1+c x_a)^3, b = d(1+c x_b)^3,$$ where $d, c$ are defined in $\alg_G$. Let $\tilde{x}_a, \tilde{x}_b$ be the rounded version of $x_a, x_b$ respectively; i.e. $$\tilde{x}_a = \floor{x_a / 2^{-\mm}} 2^{-\mm}, \tilde{x}_b = \floor{x_b / 2^{-\mm}} 2^{-\mm} .$$
    We have $x_a - \tilde{x}_a \in [0, 2^{-\mm})$ and $x_b - \tilde{x}_b \in [0, 2^{-\mm})$, so $(1+c x_a) - (1+c \tilde{x}_a) \in [0, c 2^{-\mm})$ and $(1+c x_b) - (1+c \tilde{x}_b) \in [0, c 2^{-\mm})$.

    First consider the case that $x_a > -\frac 1 c + 2^{-m} + 2^{-\mm}$ or $x_b > -\frac 1 c + 2^{-m} + 2^{-\mm}$, which gives $$1+c x_a, 1+c \tilde{x}_a > c 2^{-m} \text{ or } 1+c x_b, 1+c \tilde{x}_b > c 2^{-m} .$$
    Define function $f(u, v) = u^3 / (u^3 + v^3)$, and let $u = 1+c x_a, v = 1+c x_b$, $\tilde{u} = 1+c \tilde{x}_a, \tilde{v} = 1+c \tilde{x}_b$.
    We have
    \begin{align}
        &\abs{\frac a {a+b} - \frac {\tilde{a}} {\tilde{a} + \tilde{b}}} \nonumber\\
        &= \abs{\frac {(1+c x_a)^3} {(1+c x_a)^3 + (1+c x_b)^3} - \frac {(1+c \tilde{x}_a)^3} {(1+c \tilde{x}_a)^3 + (1+c \tilde{x}_b)^3}}\nonumber\\
        &= \abs{f(u, v) - f(\tilde{u}, \tilde{v})} \nonumber\\
        & \leq \max \Bigg\{
            \abs{ (u-\tilde{u}) \min_{\substack{
                u > c 2^{-m} \lor \\ v > c 2^{-m}}
             } \diffp{f}{u}(u, v) 
             + (v-\tilde{v}) \min_{\substack{
                u > c 2^{-m} \lor \\ v > c 2^{-m}}
             } \diffp{f}{v}(u, v)
            }, \nonumber\\
            & \quad \quad \quad
            \abs{ (u-\tilde{u}) \max_{\substack{
                u > c 2^{-m} \lor \\ v > c 2^{-m}}
             } \diffp{f}{u}(u, v) 
             + (v-\tilde{v}) \max_{\substack{
                u > c 2^{-m} \lor \\ v > c 2^{-m}}
             } \diffp{f}{v}(u, v)
            }
        \Bigg\}
            &&\text{(Mean value inequality)} \nonumber\\
        &< c 2^{-\mm} \cdot \frac 1 {c 2^{-m}} &&\text{(Using \Cref{fun-lma:partial-small})}\nonumber\\
        &= 2^{-\mm + m}.
        \label{eq:case-ab-large}
    \end{align}
    In the penultimate line, a factor of 2 is not needed because $\diffp{f}{u}(u, v)$ and $\diffp{f}{v}(u, v)$ always have opposite signs.

    For the case that $x_a, x_b \in (-\frac 1 c, -\frac 1 c + 2^{-m} + 2^{-\mm}]$, we have
    \begin{align}
        a = d(1+c x_a)^3 
        &\in \left(0, d c^3 (2^{-m} + 2^{-\mm})^3\right] \nonumber\\
        &= \left(0, \frac 1 9 c (2^{-m} + 2^{-\mm})^3\right]
            &&\text{(Since $c^2 d = \frac 1 9$)} \nonumber\\
        &\subseteq \left(0, c \left(\frac {2^{-m} + 2^{-\mm}} {2}\right)^3\right) \nonumber\\
        &\subseteq \left(0, c 2^{-3m}\right). &&\text{(Since $\mm > m$)}
        \label{eq:case-ab-small}
    \end{align}
    Similarly, we have $b \in (0, c 2^{-3m})$.

    We already show that $a, b$ follow $\dGamma(\alpha, 1)$ and $a / (a+b)$ follows $\dBeta(\alpha, \alpha)$ when $a, b \neq \bot$, if we don't check and output $\bot_R$ earlier. 
    Thus,
    \begin{align}
        &\Pr\left[\text{ $\alg_{B2}$ outputs $\bot_R$ at line~\ref{alg:beta2:line:bot-R-at-end} }\right] \nonumber\\
        &\leq \Pr\left[\text{$\alg_{B_2}$ reaches line~\ref{alg:beta2:line:bot-R-at-end} and $x_a, x_b \in (-\frac 1 c, -\frac 1 c + 2^{-m} + 2^{-\mm}]$}\right] \nonumber\\ &\quad \quad
            + \Pr\left[\text{$\alg_{B2}$ outputs $\bot_R$ at line~\ref{alg:beta2:line:bot-R-at-end} } | \text{ $\alg_{B_2}$ reaches line~\ref{alg:beta2:line:bot-R-at-end} and $x_a$ or $x_b$ larger than $-\frac 1 c + 2^{-m} + 2^{-\mm}$}\right] \nonumber\\
        &<\Pr_{a, b \gets \dGamma(\alpha, 1)}\sparen{a, b \in (0 , c 2^{-3m})}
            \quad \quad \text{(From \cref{eq:case-ab-small})} \nonumber\\
            &\quad \quad
            + \Pr_{r \gets \dBeta(\alpha, \alpha)}\sparen{ \exists \tilde{r}: \abs{r - \tilde{r}} < 2^{-\mm+m}, \round{2^m r} 2^{-m} \neq \round{2^m \tilde{r}} 2^{-m}}
            \quad \quad \text{(From \cref{eq:case-ab-large})}
            \nonumber\\
        &\leq (c 2^{-3m})^2 \quad \quad \text{(Since PDF of $\dGamma(\alpha, 1)$ is $f(x) = \frac 1 {\Gamma(\alpha)} x^{\alpha - 1} e^{-x} \leq 1$)}
            \nonumber\\ &\quad \quad
            + \Pr_{r \gets \dBeta(\alpha, \alpha)}\sparen{ \exists \tilde{r}: \abs{r - \tilde{r}} < 2^{-\mm+m}, \round{2^m r} 2^{-m} \neq \round{2^m \tilde{r}} 2^{-m}}  \nonumber\\
        &= c^2 2^{-6m}
            + \Pr_{r \gets \dBeta(\alpha, \alpha)}\sparen{ r \in \left((k+ \frac 1 2) 2^{-m} - 2^{-\mm+m}, (k+ \frac 1 2) 2^{-m} + 2^{-\mm+m}\right) \text{ for some } k \in \Z } \nonumber\\
        & = c^2 2^{-6m}
            +  \Pr_{r \gets \dBeta(\alpha, \alpha)}\sparen{ r \in \left((k+ \frac 1 2) 2^{-m} - 2^{-\mm+m}, (k+ \frac 1 2) 2^{-m} + 2^{-\mm+m}\right) \text{ for some } k \in \Z, k 2^{-m} < \frac 1 2}
            \nonumber\\ &\quad \quad \quad \quad
            + \Pr_{r \gets \dBeta(\alpha, \alpha)}\sparen{ r \in \left((k+ \frac 1 2) 2^{-m} - 2^{-\mm+m}, (k+ \frac 1 2) 2^{-m} + 2^{-\mm+m}\right) \text{ for some } k \in \Z, k 2^{-m} \geq \frac 1 2}
            \nonumber\\
        &< c^2 2^{-6m}
            + \frac {2 \cdot 2^{-\mm+m}} {\frac 1 2 \cdot 2^{-m}} \Bigg( \Pr_{r \gets \dBeta(\alpha, \alpha)}\sparen{ r \in \left((k+\frac 1 2) 2^{-m}, (k+1) 2^{-m}\right) \text{ for some } k \in \Z, k 2^{-m} < \frac 1 2}
            \nonumber\\ &\quad \quad \quad \quad
            + \Pr_{r \gets \dBeta(\alpha, \alpha)}\sparen{ r \in \left(k 2^{-m}, (k+\frac 1 2) 2^{-m}\right) \text{ for some } k \in \Z, k 2^{-m} \geq \frac 1 2} \Bigg)
            \tag{$\ast$} \nonumber \\
            & \quad \quad  \text{(Since PDF of $\dBeta(\alpha, \alpha)$ is monotonic in $(0, \frac 1 2)$ and $(\frac 1 2, 1)$ respectively, see \Cref{fig:fr-left} and \Cref{fig:fr-right})} \nonumber\\
        &< c^2 2^{-6m} + 4 \cdot 2^{-\mm + 2m} \cdot 1 \nonumber\\
        &< 2^{-6m} + 4 \cdot 2^{-\mm + 2m}.
            \quad \quad \text{(Since $c = \frac 1 {\sqrt{9 \alpha - 3}} \leq \frac {1} {\sqrt{6}} < 1$)}
        \label{eq:check-round}
    \end{align}

    \begin{figure}[H]
\centering
\begin{tikzpicture}
    \begin{axis}[
        axis lines=middle,
        xlabel=$r$,
        ylabel={$f(r)$},
        xtick={0.25, 0.375, 0.5},
        xticklabels={$kd$, $(k+\frac{1}{2})d$, $(k+1)d$},
        ymin=0,
        ymax=1.7,
        domain=0.25-0.02:0.5+0.02,
        ]
        
        \addplot[blue, opacity=0.2, fill=blue, domain=0.375:0.5] {6*x*(1-x)} \closedcycle;

        \addplot[
            pattern=north west lines, 
            pattern color=blue!60!black, 
            draw=none, 
            domain={(0.375-0.03)}:{(0.375+0.03}
        ] {6*x*(1-x)} \closedcycle;

        \addplot[blue, thick, samples=100] {6*x*(1-x)};

        \node[font=\small] at (axis cs:0.37, 1) {$S_1$};
        \node[font=\small] at (axis cs:0.45, 1) {$S_2$};

        \draw[thin, gray] (axis cs:0.375-0.03, 0) -- (axis cs:0.375-0.03, 1.5);
        \draw[thin, gray] (axis cs:0.375, 0) -- (axis cs:0.375, 1.5);
        \draw[thin, gray] (axis cs:0.375+0.03, 0) -- (axis cs:0.375+0.03, 1.5);

        \draw[<->, thick] (axis cs:0.375-0.03, 1.5) -- (axis cs:0.375, 1.5) node[midway, above] {$a$};
        \draw[<->, thick] (axis cs:0.375, 1.5) -- (axis cs:0.375+0.03, 1.5) node[midway, above] {$a$};
    \end{axis}
\end{tikzpicture}
\begin{align*}
        S_1 &< \frac {2 a} {\frac 1 2 d} S_2\\
        \Pr\left[r \in \left((k+\frac 1 2)d - a, (k+\frac 1 2)d - a \right)\right] &< \frac {2 a} {\frac 1 2 d} \Pr\left[r \in \left((k+\frac 1 2)d, (k+1)d \right)\right]
\end{align*}    
\caption{The case $kd < \frac 1 2$}
\label{fig:fr-left}
\end{figure}

\begin{figure}[ht]
\centering
\begin{tikzpicture}
    \begin{axis}[
        axis lines=middle,
        xlabel=$r$,
        ylabel={$f(r)$},
        xtick={0.5, 0.625, 0.75},
        xticklabels={$kd$, $(k+\frac{1}{2})d$, $(k+1)d$},
        ymin=0,
        ymax=1.7,
        domain=0.5-0.02:0.75+0.02,
        ]
        
        \addplot[blue, opacity=0.2, fill=blue, domain=0.5:0.625] {6*x*(1-x)} \closedcycle;

        \addplot[
            pattern=north west lines, 
            pattern color=blue!60!black, 
            draw=none, 
            domain={(0.625-0.03)}:{(0.625+0.03}
        ] {6*x*(1-x)} \closedcycle;

        \addplot[blue, thick, samples=100] {6*x*(1-x)};

        \node[font=\small] at (axis cs:0.62, 1) {$S_1$};
        \node[font=\small] at (axis cs:0.55, 1) {$S_2$};

        \draw[thin, gray] (axis cs:0.625-0.03, 0) -- (axis cs:0.625-0.03, 1.5);
        \draw[thin, gray] (axis cs:0.625, 0) -- (axis cs:0.625, 1.5);
        \draw[thin, gray] (axis cs:0.625+0.03, 0) -- (axis cs:0.625+0.03, 1.5);

        \draw[<->, thick] (axis cs:0.625-0.03, 1.5) -- (axis cs:0.625, 1.5) node[midway, above] {$a$};
        \draw[<->, thick] (axis cs:0.625, 1.5) -- (axis cs:0.625+0.03, 1.5) node[midway, above] {$a$};
    \end{axis}
\end{tikzpicture}
\begin{align*}
        S_1 &< \frac {2 a} {\frac 1 2 d} S_2\\
        \Pr\left[r \in \left((k+\frac 1 2)d - a, (k+\frac 1 2)d - a \right)\right] &< \frac {2 a} {\frac 1 2 d} \Pr\left[r \in \left(kd, (k+\frac 1 2)d \right)\right]
\end{align*}    
\caption{The case $kd \geq \frac 1 2$}
\label{fig:fr-right}
\end{figure}

    Notice that checker $\cC$ runs at most $2 m$ times with independent inputs $x_i, u_i$.
    Combining \cref{eq:check1}, \cref{eq:v}, \cref{eq:check2} and \cref{eq:check-round} we have
    \begin{align}
        &\Pr[\alg_{B2}(1^m, \alpha, X_1, \cdots, X_{2m}, U_1, \dots, U_{2m}) = \bot_R] \nonumber\\
        &< 2m\paren{e^{-2m^2}/(2m) + (\eta + 2) \cdot 2^{-\mm}} + 2^{-6m} + 4 \cdot 2^{-\mm + 2m} \nonumber \\
        &= e^{-2m^2} + 2(\eta + 2)m \cdot 2^{-\mm} + 2^{-6m} + 4 \cdot 2^{-\mm + 2m}, \label{eq:b2}
    \end{align}
    where random variables $X_i \sim \cN(0, 1), U_i \sim \cU(0, 1)$ independently.

    When the output is not $\bot_R$, $\alg_{B 2}$ outputs the rounded result of $\alg_{B1}$ under the same inputs. Combining \cref{eq:b1} and \cref{eq:b2}, we have
    \begin{align}
        &\Delta\paren{
            \alg_{B2}(1^m, \alpha, X_1, \cdots, X_{2m}, U_1, \dots, U_{2m}),
            \dBeta_{R(2^{-m})}(\alpha, \alpha)
        } \nonumber \\
        &< 2^{-4m+1} + e^{-2m^2} + 2(\eta +2)m \cdot 2^{-\mm} + 2^{-6m} + 4 \cdot 2^{-\mm + 2m}. \label{eq:b2-dis}
    \end{align}

    By modifying $\alg_{B2}$ to use rounded inputs and avoid output $\bot$ or $\bot_R$, we obtain $\alg_{B3}$.
    \begin{algorithm}[H]
        \caption{$\alg_{B3}$}
        \begin{algorithmic}[1]
            \INPUT $1^m, \alpha \in \N^+$, $x_1, \cdots, x_{2m} \gets \cN_{R(2^{-\mm}, 2m)}(0, 1)$, $u_1, \cdots, u_{2m} \gets \{0, 1\}^{\mm}$
            \For{$i := 1 \text{ to } m $}
                \State $\tilde{a} := \tilde{\alg}_G^{\angbra{\mm}}(1^m, \alpha, x_{2i-1}, u_{2i-1})$
                \If{$\tilde{a} \neq \bot$}
                    \Break
                \EndIf
            \EndFor
            \For{$i := 1 \text{ to } m $}
                \State $\tilde{b} := \tilde{\alg}_G^{\angbra{\mm}}(1^m, \alpha, x_{2i}, u_{2i})$
                \If{$\tilde{b} \neq \bot$}
                    \Break
                \EndIf
            \EndFor
            \If{$\tilde{a} = \bot$ or $\tilde{b} = \bot$}
                \State return $\frac 1 2$
            \Else
                \State return $\round{2^m \tilde{a} / (\tilde{a} + \tilde{b})} 2^{-m}$
            \EndIf
        \end{algorithmic}
    \end{algorithm}

    Notice that in $\alg_{B2}$ we have the following:
    \begin{itemize}
        \item Only the first $\mm$ bits (after the binary point) of $x_i$ and $u_i$ are used for computing $\tilde{a}$ and $\tilde{b}$.
        \item If any $\abs{x_i} > 2m$, then it will output $\bot_R$.
        \item $\bot$ and $\bot_R$ are not in the support of $\dBeta_{R(2^{-m})}(\alpha, \alpha)$.
    \end{itemize}  
    Thus, the rounded version of the output of $\alg_{B2}$ is the same as the output of $\alg_{B3}$ with corresponding inputs, when it is not $\bot$ or $\bot_R$.
    From \cref{eq:b2-dis} we have
    \begin{align}
        &\Delta\paren{
            \alg_{B3}(1^m, \alpha, \tilde{X}_1, \cdots, \tilde{X}_{2m}, \tilde{U}),
            \dBeta_{R(2^{-m})}(\alpha, \alpha)
        } \nonumber \\
        &< 2^{-4m+1} + e^{-2m^2} + 2(\eta + 2)m \cdot 2^{-\mm} + 2^{-6m} + 4 \cdot 2^{-\mm + 2m}, \label{eq:b3-dis}
    \end{align}
    where random variables $\tilde{X}_i \sim \cN_{R(2^{-\mm}, 2m)}(0, 1)$, $\tilde{U} \sim \cU_{\mm \cdot 2m}$ independently.
 
    Finally, algorithm $\alg_{RB}$ is defined as follows:
    \begin{enumerate}
        \item Run $\alg_{RN}$ in \Cref{fact:smp-norm} multiple times to sample each $x_i$ with fresh randomness.
        \item Run $\alg_{B3}$ with those $x_i$'s and a fresh uniform random $u_i$.
    \end{enumerate}
    More formally,
    \begin{align*}    
        &\alg_{RB}(1^m, \alpha, \tilde{U}'_1||\tilde{U}'_2||\cdots||\tilde{U}'_{2m}||\tilde{U}_1||\tilde{U}_2||\cdots||\tilde{U}_{2m}) \\
        &\defeq
        \alg_{B3}\paren{
            1^m, \alpha, \alg_{RN}(1^{\mm}, 2m, \tilde{U}'_1), \cdots, \alg_{RN}(1^{\mm}, 2m, \tilde{U}'_{2m}), \tilde{U}_1, \cdots, \tilde{U}_{2m}
        }.
    \end{align*}
    From \Cref{fact:smp-norm}, we have 
    \[
        \Delta\paren{
            \alg_{RN}(1^{\mm}, 2m, \tilde{U}),
            \cN_{R(2^{-\mm}, 2m)}(0, 1)
        } \leq 2^{-\mm} ,
    \]
    where $\tilde{U} \sim \cU_{\poly(m)}$. Then, from the triangle inequality and the data processing inequality, we have
    \begin{align*}
        \Delta&\Big(\\
            &\alg_{B3}\paren{
                1^m, \alpha, \alg_{RN}(1^{\mm}, 2m, \tilde{U}'_1), \cdots, \alg_{RN}(1^{\mm}, 2m, \tilde{U}'_{2m}), \tilde{U}_1, \cdots, \tilde{U}_{2m}
            },\\
            &\alg_{B3}\paren{
                1^m, \alpha, \tilde{X}_1, \cdots, \tilde{X}_{2m}, \tilde{U}_1, \cdots, \tilde{U}_{2m}
            }\\
        \Big)& \leq 2m \cdot 2^{-\mm},
    \end{align*}
    which gives
    \begin{align}    
        \Delta\paren{
            \alg_{RB}(
                1^m, \alpha, \tilde{U}
            ),
            \alg_{B3}(
                1^m, \alpha, \tilde{X}_1, \cdots, \tilde{X}_{2m}, \tilde{U}_1, \cdots, \tilde{U}_{2m}
            )
        } \leq 2m \cdot 2^{-\mm}, \label{eq:rb}
    \end{align}
    where $\tilde{U}=\tilde{U}'_1\|\tilde{U}'_2\|\cdots\|\tilde{U}'_{2m}\|\tilde{U}_1\|\tilde{U}_2\|\cdots\|\tilde{U}_{2m} \sim \cU_{\poly(m)}$.

    Combining \cref{eq:b3-dis} and \cref{eq:rb}, we have
    \begin{align*}
        &\Delta\paren{
            \alg_{RB}(1^m, \alpha, \tilde{U}),
            \dBeta_{R(2^{-m})}(\alpha, \alpha)
        } \\
        &< 2^{-4m+1} + e^{-2m^2} + 2(\eta + 2)m \cdot 2^{-\mm} + 2^{-6m} + 4 \cdot 2^{-\mm + 2m} + 2m \cdot 2^{-\mm} \nonumber \\ 
        &< 2^{-m} .
    \end{align*}
    For the last step, notice that $\mm = 3m + 3 + \ceil{\log_2 (\eta +3)} > 3m + 4$, so
    $$2(\eta + 2)m \cdot 2^{-\mm} + 2m \cdot 2^{-\mm} = 2(\eta + 3)m \cdot 2^m \leq (\eta + 3) \cdot 2^m 2^{-\mm} < \frac 1 8 \cdot 2^{-m}$$
    and
    $$4 \cdot 2^{-\mm + 2m} = 2^{- \mm + 2m + 2} < \frac 1 4 \cdot 2^{-m}.$$
\end{proof}
\section{Technical lemmas on function properties}

\begin{lemma}\label{fun-lma:s-lesser-than-0}
    Let $d > 0$, $c = \frac 1 {\sqrt{9 d}}$ and $g(x) = d \ln (1 + c x)^3 - d(1+ c x)^3 +d$ for $x \in (- \frac 1 c, +\infty)$. Then
    \[
        g(x) \leq -\frac 1 2 x^2.
    \]
\end{lemma}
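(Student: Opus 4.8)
The plan is to clear the constants by the substitution $y := 1 + cx$. As $x$ ranges over $(-1/c, +\infty)$, $y$ ranges over $(0, +\infty)$, and since $c^2 = \frac{1}{9d}$ we have $(y-1)^2 = c^2 x^2 = \frac{x^2}{9d}$, hence $\frac{9d}{2}(y-1)^2 = \frac{x^2}{2}$. Writing $g$ in terms of $y$ gives
\[
    g(x) = d\bigl(3\ln y - y^3 + 1\bigr) = d\,\phi(y) - \frac{x^2}{2},
    \qquad
    \phi(y) := 3\ln y - y^3 + 1 + \tfrac{9}{2}(y-1)^2 .
\]
So it suffices to show $\phi(y) \le 0$ for all $y > 0$; then $g(x) = d\,\phi(1+cx) - \frac{x^2}{2} \le -\frac{x^2}{2}$ because $d > 0$.

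To prove $\phi \le 0$ I would study its monotonicity. Differentiating, $\phi'(y) = \frac{3}{y} - 3y^2 + 9y - 9$, and multiplying by $y$ gives $y\,\phi'(y) = -3(y^3 - 3y^2 + 3y - 1) = -3(y-1)^3$. Hence $\phi'(y) = -\dfrac{3(y-1)^3}{y}$, which is strictly positive on $(0,1)$ and strictly negative on $(1,+\infty)$. Therefore $\phi$ increases on $(0,1)$, decreases on $(1,+\infty)$, and attains its global maximum at $y = 1$. Since $\phi(1) = 3\ln 1 - 1 + 1 + 0 = 0$, we conclude $\phi(y) \le 0$ for every $y > 0$, which is what we needed.

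I do not expect any real obstacle here: the whole argument is single-variable calculus, and the only step requiring a small observation is spotting the factorization $y\,\phi'(y) = -3(y-1)^3$, which makes the sign of $\phi'$ transparent. If one does not see that, an equally short alternative is to compute $\phi''(y) = -\dfrac{3(y-1)^2(2y+1)}{y^2} \le 0$ on $(0,\infty)$, deduce that $\phi$ is concave, and combine this with $\phi'(1) = 0$ and $\phi(1) = 0$ to reach the same conclusion. Either way the bound $g(x) \le -\frac{1}{2}x^2$ follows immediately from the displayed identity $g(x) = d\,\phi(1+cx) - \frac{x^2}{2}$.
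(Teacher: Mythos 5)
Your proposal is correct and is essentially the paper's argument rewritten in the variable $y = 1+cx$: your $\phi$ satisfies $d\,\phi(1+cx) = s(x)$, where $s(x) = g(x) + \tfrac12 x^2$ is the auxiliary function the paper analyzes, and your factorization $\phi'(y) = -3(y-1)^3/y$ is exactly the paper's $s'(x) = -\frac{c^2 x^3}{3(1+cx)}$ after the change of variables, with the maximum at $y=1$ corresponding to the paper's maximum at $x=0$. The substitution is a pleasant cosmetic simplification (it removes the constants $c,d$), but the underlying single-variable monotonicity argument is the same.
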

\begin{proof}
    Let
    \begin{align*}
        s(x)
            &= g(x) - (- \frac 1 2 x^2) \\
            &= \frac 1 2 x^2 + d - d (1 + c x)^3 + d \ln (1 + c x)^3.
    \end{align*}
    Then
    \begin{align*}
        s'(x)
        &= x -3 c d (1 + c x)^2  + \frac {3 c d }{1 + c x}\\
        &= \frac {x (-3 c^4 d x^2 - 9 c^3 d x - 9 c^2 d + cx + 1)} {1 + c x}\\
        &= - \frac {c^2 x^3} {3(1 + c x)}. \quad \quad \text{(Since $c^2 d = \frac 1 9$)}\\
    \end{align*}
    We have
    $s(0) = 0$ and $\begin{cases}
        s'(x) >0, & x \in (- \frac 1 c, 0)\\
        s'(x)     \leq 0, & x \in [0, + \infty)
    \end{cases}$, 
    so $s(x) \leq 0$ for $x \in (-\frac 1 c , + \infty$), which means $g(x) \leq -\frac 1 2 x^2$ for $x \in (-\frac 1 c , + \infty)$.
\end{proof}

\begin{lemma}\label{fun-lma:p-large}
    Let $p(\alpha) = e^{\alpha - \frac 1 3} \sqrt{\alpha-\frac 1 3} \Gamma(\alpha) / \left(\sqrt{2 \pi} (\alpha - \frac 1 3)^\alpha\right)$, $\alpha \in \N^+$. Then $p(\alpha) > 0.95$.
\end{lemma}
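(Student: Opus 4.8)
The plan is to reduce the claim to the single value $\alpha=1$ by showing that $p$ is increasing on $\N^+$, and then to verify $p(1)>0.95$ by an explicit rational estimate. Monotonicity is the substantial part; the base case is a short computation.

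For the monotonicity step I would form the ratio $p(\alpha+1)/p(\alpha)$ directly from the definition, using $\Gamma(\alpha+1)=\alpha\Gamma(\alpha)$. Writing $\beta:=\alpha+\tfrac23$ and simplifying, one gets
\[
\frac{p(\alpha+1)}{p(\alpha)} \;=\; e\cdot\Bigl(1-\tfrac1\beta\Bigr)^{\beta}\cdot\frac{\beta-2/3}{\beta}\cdot\Bigl(\frac{\beta}{\beta-1}\Bigr)^{7/6}.
\]
Taking logarithms and substituting $s:=1/\beta\in(0,3/5]$ turns every factor into a function of $s$ built from $-\ln(1-x)=\sum_{k\ge1}x^k/k$; regrouping by powers of $s$ (valid since $0<s<1$ makes all series absolutely convergent) yields
\[
\ln\frac{p(\alpha+1)}{p(\alpha)} \;=\; \sum_{j\ge1} c_j\, s^{\,j}, \qquad c_j \;=\; \frac{\tfrac16 j + \tfrac76 - (j+1)(2/3)^{j}}{j(j+1)} \;=\; \frac{j+7-6(j+1)(2/3)^j}{6\,j(j+1)} .
\]
The sign of $c_j$ is that of $j+7-6(j+1)(2/3)^j$. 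Since $6(j+1)(2/3)^j$ equals $8$ at $j=2$ and is strictly decreasing for $j\ge2$ (the ratio of consecutive terms is $\tfrac{2(j+2)}{3(j+1)}<1$ for $j\ge2$), while $j+7\ge9$ there, we get $c_j>0$ for all $j\ge2$; and $c_1=0$ because $1+7=8=6\cdot2\cdot\tfrac23$. Hence $\ln\bigl(p(\alpha+1)/p(\alpha)\bigr)=\sum_{j\ge2}c_j s^j>0$, so $p$ is strictly increasing on $\N^+$ and $p(\alpha)\ge p(1)$ for every $\alpha\in\N^+$.

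For the base case, evaluate $p(1)=\dfrac{e^{2/3}\sqrt{2/3}}{\sqrt{2\pi}\cdot\tfrac23}=\dfrac{\sqrt3\,e^{2/3}}{2\sqrt\pi}$. Squaring, $p(1)>0.95$ is equivalent to $300\,e^{4/3}>361\pi$. I would prove this from $e^{4/3}>3.79$ (equivalently $e^4>3.79^3=54.44\ldots$, which holds because $e^2>7.389$ gives $e^4>7.389^2>54.59$) together with $\pi<3.1416$: then $300\,e^{4/3}>1137>1134.12>361\pi$.

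The main obstacle is the monotonicity step, specifically the bookkeeping that collapses $\ln\bigl(p(\alpha+1)/p(\alpha)\bigr)$ into the clean series $\sum c_j s^j$ and the verification that every coefficient is nonnegative; the exact vanishing $c_1=0$ (so the leading term is $\tfrac1{36}s^2$) is what must be nailed down, and it is precisely the reason crude Stirling-type lower bounds on $\Gamma(\alpha)$ are too weak to settle $\alpha=1$ on their own. An alternative, less self-contained route is to differentiate $\ln p(\alpha)$ in the continuous variable: using the standard digamma bounds $\ln x-\tfrac1{2x}-\tfrac1{12x^2}<\psi(x)<\ln x-\tfrac1{2x}$, the positivity of $\tfrac{d}{d\alpha}\ln p(\alpha)$ reduces to the trivial inequality $18\alpha^2+6\alpha>0$; I would keep the series argument as the primary proof since it uses only the logarithm's power series.
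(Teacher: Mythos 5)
Your proposal is correct, and it reaches the key inequality by a different mechanism than the paper. Both arguments start the same way: form the ratio $r(\alpha)=p(\alpha+1)/p(\alpha)=e\,\alpha\,(\alpha-\tfrac13)^{\alpha-\frac12}/(\alpha+\tfrac23)^{\alpha+\frac12}$, show $r(\alpha)>1$ so that $p$ increases on $\N^+$, and reduce to the value at $\alpha=1$. The paper establishes $r(\alpha)>1$ indirectly: it differentiates $\ln r(\alpha)$ in the continuous variable, truncates the Taylor series of $\ln(1-1/\beta)$ after three terms to exhibit the derivative as $-\frac{(3\alpha-4)^2/27}{6\alpha\beta^3(\beta-1)}\le 0$, and then computes $\lim_{\alpha\to\infty}r(\alpha)=1$, concluding that the decreasing quantity $r$ stays above its limit. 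You instead expand $\ln r$ itself as a power series in $s=1/(\alpha+\tfrac23)$ and verify that every coefficient is nonnegative, with $c_1=0$ and $c_2=\tfrac1{36}$; I checked the regrouping and it is right ($\ln r=1+\beta\ln(1-s)+\ln(1-\tfrac23 s)-\tfrac76\ln(1-s)$ gives exactly your $c_j$, and your monotone-decay argument for $6(j+1)(2/3)^j\le 8<j+7$ when $j\ge2$ is sound, with $s\le 3/5$ guaranteeing absolute convergence). This buys you positivity of $\ln r$ in one stroke, with no limit at infinity and no sign analysis of a derivative; the price is the coefficient bookkeeping, which you handle correctly. Your base case is also more rigorous than the paper's: the paper simply quotes $p(1)\approx 0.95167$, whereas you reduce $p(1)>0.95$ to $300\,e^{4/3}>361\pi$ and certify it from $e^2>7.389$ and $\pi<3.1416$, which is airtight. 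The digamma-based aside is only a remark and is not needed, so its level of detail does not affect the verdict.
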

\begin{proof}
    Let 
    \begin{align*}
        r(\alpha) 
        &= \frac {p(\alpha+1)} {p(\alpha)} \\
        &= e \frac {\sqrt{\alpha + \frac 2 3}} {\sqrt{\alpha - \frac 1 3}} \alpha \frac {(\alpha - \frac 1 3)^\alpha} {(\alpha + \frac 2 3)^{\alpha + 1}} \\
        & = e \alpha \frac {(\alpha - \frac 1 3)^{\alpha - \frac 1 2}} {(\alpha + \frac 2 3)^{\alpha + \frac 1 2}} .
    \end{align*}
    Let
    \begin{align*}
        f(\alpha)
        &=\ln r(\alpha) \\
        &= 1 + \ln \alpha + (\alpha - \frac 1 2) \ln (\alpha - \frac 1 3) - (\alpha + \frac 1 2) \ln (\alpha + \frac 2 3).
    \end{align*}

    Notice that $f(\alpha)$ is well-defined and differentiable on $[1, +\infty)$.
    Let $\beta = \alpha + \frac 2 3$, we have
    \begin{align*}
        f'(\alpha)
        &= \frac 1 \alpha + \ln (\alpha - \frac 1 3) + \frac {\alpha - \frac 1 2} {\alpha - \frac 1 3} - \ln (\alpha + \frac 2 3) - \frac {\alpha + \frac 1 2} {\alpha + \frac 2 3} \\
        &= \frac 1 \alpha + \ln (\beta - 1) + \frac {\alpha - \frac 1 2} {\beta -1} - \ln \beta - \frac {\alpha + \frac 1 2} {\beta} \\
        &= \ln(\frac {\beta - 1} {\beta}) + \frac 1 \alpha + \frac {\alpha - \frac 1 2} {\beta - 1} - \frac {\alpha + \frac 1 2} {\beta} \\
        &= \ln(1 - \frac 1 {\beta}) + \frac 1 \alpha - \frac {1} {6 \beta (\beta - 1)} \\
        &= - \sum_{k=1}^{+\infty} \frac 1 {k \beta^k} + \frac 1 \alpha - \frac {1} {6 \beta (\beta - 1)}
         \quad \quad \text{(Taylor expansion)}\\
        &< - \sum_{k=1}^3 \frac 1 {k \beta^k} + \frac 1 \alpha - \frac {1} {6 \beta (\beta - 1)}\\
        &= - \frac {2\alpha(\beta-1) + 3 \alpha \beta (\beta-1) + 6 \alpha \beta^2(\beta-1) - 6 \beta^3(\beta-1) + \alpha \beta^2} {6 \alpha \beta^3 (\beta-1) } \\
        &= - \frac { \frac 1 {27} (3 \alpha - 4)^2 } {6 \alpha \beta^3 (\beta-1) }\\
        & \leq 0 .
    \end{align*}
    This shows that $f(\alpha)$ decreases monotonically, so $r(\alpha)$ decreases monotonically.

    We also have
    \begin{align*}
        \lim_{\alpha \to +\infty} r(\alpha)
        &= \lim_{\alpha \to +\infty} e \frac {\alpha} {\alpha - \frac 1 3} \frac {(\alpha - \frac 1 3)^{\alpha + \frac 1 2}} {(\alpha + \frac 2 3)^{\alpha + \frac 1 2}} \\
        & = e \lim_{\alpha \to +\infty} \paren{\frac {\alpha - \frac 1 3} {\alpha + \frac 2 3}}^{\alpha + \frac 1 2} \\
        & = e \lim_{\alpha \to +\infty} \paren{1 - \frac 1 {\alpha + \frac 2 3}}^{\alpha + \frac 1 2} \\
        & = e \cdot e^{
            \lim_{\alpha \to +\infty} \ln\left( 1 - \frac 1 {\alpha + \frac 2 3}\right) (\alpha + \frac 1 2)
        }\\
        & = e \cdot e^{
            \lim_{\alpha \to +\infty} - \left(\frac 1 {\alpha + \frac 2 3} + O\left(\left(\frac 1 {\alpha + \frac 2 3}\right)^2\right) \right) (\alpha + \frac 1 2)
        } \quad \quad \text{(Taylor expansion)}\\
        & = e \cdot e^{-1} \\ 
        & = 1.
    \end{align*}
    Thus, $r(\alpha) > 1$. This gives that $p(\alpha)$ increases monotonically. Then for all $\alpha \geq 1$,
    \[
        p(\alpha) \geq p(1) = e^{\frac 2 3} \sqrt{\frac 2 3} / (\sqrt{2 \pi} \cdot \frac 2 3) \approx 0.95167 > 0.95.
    \]
\end{proof}

\begin{lemma}\label{fun-lma:partial-small}
    Let $f(x, y) = \frac {x^3} {x^3 + y^3}$, $x, y \in \R^+$. Given constants $\delta > 0$, then 
    \begin{align*}
        0 < \diffp{f}{x} < \frac 1 \delta,\\
        -\frac 1 \delta < \diffp{f}{y} < 0,
    \end{align*}
    when $x > \delta$ or $y > \delta$.
\end{lemma}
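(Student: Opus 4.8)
The plan is to compute the two partial derivatives in closed form, read off their signs directly, and then bound their magnitudes using two elementary AM--GM inequalities.

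First I would differentiate: by the quotient rule, $\frac{\partial f}{\partial x} = \frac{3x^2(x^3+y^3)-x^3\cdot 3x^2}{(x^3+y^3)^2} = \frac{3x^2y^3}{(x^3+y^3)^2}$ and $\frac{\partial f}{\partial y} = \frac{-x^3\cdot 3y^2}{(x^3+y^3)^2} = -\frac{3x^3y^2}{(x^3+y^3)^2}$. For $x,y\in\R^+$ the denominator and every factor in the two numerators is strictly positive, so $\frac{\partial f}{\partial x}>0$ and $\frac{\partial f}{\partial y}<0$ with no further work; this already gives one side of each displayed inequality.

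For the remaining (upper, resp.\ lower) bound it suffices to prove the single clean claim that $\frac{3x^2y^3}{(x^3+y^3)^2} < \frac{1}{\max\{x,y\}}$ for all $x,y\in\R^+$, since swapping the roles of $x$ and $y$ turns this into $\frac{3x^3y^2}{(x^3+y^3)^2} < \frac{1}{\max\{x,y\}}$, i.e.\ the bound on $\bigl|\partial f/\partial y\bigr|$. To prove the claim I would show the two separate estimates $\frac{3x^2y^3}{(x^3+y^3)^2}<\frac1x$ and $\frac{3x^2y^3}{(x^3+y^3)^2}<\frac1y$ and then keep whichever is sharper. The first is equivalent to $3x^3y^3<(x^3+y^3)^2$, which holds because $(x^3+y^3)^2 = (x^3-y^3)^2 + 4x^3y^3 \geq 4x^3y^3 > 3x^3y^3$, strict since $x^3y^3>0$. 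The second is equivalent to $3x^2y^4 < (x^3+y^3)^2$; expanding $(x^3+y^3)^2 = x^6+2x^3y^3+y^6$ and applying weighted AM--GM, $\tfrac23 x^3y^3 + \tfrac13 y^6 \geq (x^3y^3)^{2/3}(y^6)^{1/3} = x^2y^4$, one gets $2x^3y^3+y^6\geq 3x^2y^4$, hence $(x^3+y^3)^2 \geq x^6 + 3x^2y^4 > 3x^2y^4$, strict since $x^6>0$.

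Finally, the hypothesis ``$x>\delta$ or $y>\delta$'' is precisely the statement $\max\{x,y\}>\delta$, so $\frac{1}{\max\{x,y\}}<\frac1\delta$; combining with the claim yields $0<\frac{\partial f}{\partial x}<\frac1\delta$ and $-\frac1\delta<\frac{\partial f}{\partial y}<0$, as required. I do not expect any genuine obstacle here; the only points needing a little care are keeping the inequalities strict (which is why I separate out the $x^6>0$ and $x^3y^3>0$ terms) and choosing the weights $\tfrac23,\tfrac13$ in the AM--GM step so that the exponents land exactly on $x^2y^4$.
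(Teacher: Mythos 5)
Your proof is correct, and it follows the same overall strategy as the paper's: compute the derivatives in closed form, read off the signs, bound $\abs{\partial f/\partial x}$ separately by (a constant times) $1/x$ and by $1/y$, and transfer the result to $\partial f/\partial y$ by the symmetry of $f$ (the paper uses $f(x,y)=1-f(y,x)$, you swap the roles of $x$ and $y$ in the claim --- same thing). The $1/x$-side is identical in substance: both arguments rest on $(x^3+y^3)^2\ge 4x^3y^3$. The one genuine difference is the $1/y$-side: the paper substitutes $t=x/y$ and maximizes $t/(t^3+1)$ by one-variable calculus to get the constant $2^{4/3}/3$, whereas you derive $3x^2y^4\le 2x^3y^3+y^6$ from a weighted AM--GM, which is purely algebraic, avoids the optimization, and packages both cases into the single clean statement $\abs{\partial f/\partial x}<1/\max\{x,y\}$. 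Both routes yield constants at most $1$, so either suffices for the stated $1/\delta$ bound; your version is arguably slightly tidier, the paper's gives the marginally sharper constants $3/(4x)$ and $2^{4/3}/(3y)$.
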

\begin{proof}
    Let $u = x^3, v = y^3$, then
    \begin{align*}    
        \diffp{f}{x}
        &= \diffp{f}{u} \diff{u}{x} \\
        & = \frac {v} {(u + v)^2} \cdot 3 x^2 \\
        & = \frac {3 x^2 y^3} {(x^3 + y^3)^2} \\
        &> 0. &&\text{(Since $x, y \in \R^+$)}
    \end{align*}

    When $x > \delta$,
    \begin{align*}
        \diffp{f}{x} =
        &\frac {3 x^2 y^3} {(x^3 + y^3)^2} \\
        &\leq \frac {3 x^2 y^3} {4 x^3 y^3} &&\text{(Since $(A + B)^2 \geq 4 A B$)} \\
        &= \frac {3} {4 x} \\
        &< \frac {3} {4 \delta} \\
        &< \frac 1 \delta.
    \end{align*}

    When $y > \delta$, let $t = \frac x y$, then
    \begin{align*}
        \diffp{f}{x} 
        &= \frac {3 x^2 y^3} {(x^3 + y^3)^2} \\
        &= \frac {3 (t y)^2 y^3} {((t y)^3 + y^3)^2} \\
        &= \frac 3 y \cdot \paren{\frac {t} {t^3+1}}^2 \\
        &\leq \frac 3 y \cdot \frac {2^{\frac 4 3}} {9} &&\text{(
            $\frac t {t^3+1}$ achieves maximum value $\frac {2^{\frac 4 3}} {9}$
            at $t = 2^{-\frac 1 3}$ for $t > 0$
            )} \\
        &\leq \frac {2^{\frac 4 3}} {3 \delta} \\
        &< \frac 1 \delta.
    \end{align*}

    $f(x, y)$ can be represented as $f(x, y) = 1 - \frac {y^3} {x^3 + y^3} = 1 - f(y, x)$, which shows the result for $\diffp{f}{y}$.
\end{proof}

\begin{lemma}\label{fun-lma:esx-d-small}
    Let $\alpha \geq 1$, $d = \alpha - \frac 1 3$, $c = \frac 1 {\sqrt{9 d}}$, and $s(x) = \frac 1 2 x^2 + d - d (1 + c x)^3 + d \ln ( (1 + c x)^3)$ for $x \in (-\frac{1}{c}, +\infty)$.
    Let $f(x) = e^{s(x)}$, then there exists a constant $\eta$ (independent of $\alpha$) such that
    \[
        \abs{f'(x)} \leq \eta . 
    \]
\end{lemma}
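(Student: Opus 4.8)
\emph{Proof proposal.} Since $f=e^{s}$ we have $f'(x)=s'(x)\,e^{s(x)}$, so on the domain $x\in(-1/c,+\infty)$ it suffices to bound $\abs{s'(x)}\,e^{s(x)}$. Two facts are already available from (the proof of) \Cref{fun-lma:s-lesser-than-0}: $s(x)\le 0$, and $s'(x)=-\frac{c^{2}x^{3}}{3(1+cx)}$. The plan is to make the change of variables $y=1+cx\in(0,+\infty)$. Using $c^{2}=\frac{1}{9d}$ one computes $s(x)=d\,\phi(y)$ with $\phi(y)=\frac{9}{2}(y-1)^{2}+1-y^{3}+3\ln y$, and $\abs{s'(x)}=\frac{\sqrt{d}\,\abs{y-1}^{3}}{y}$, hence
\[
  \abs{f'(x)}=\frac{\sqrt{d}\,\abs{y-1}^{3}}{y}\,e^{d\phi(y)} .
\]
So the whole task becomes: bound the right-hand side uniformly over $y>0$ and over $d=\alpha-\tfrac13\ge\tfrac23$.

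Next I would record the estimates on $\phi$ that I need. Directly, $\phi(1)=0$ and $\phi'(y)=-\frac{3(y-1)^{3}}{y}$; integrating $\phi'$ from $1$ to $y$ and using $\frac1s\ge\frac1y$ when $s$ lies between $1$ and $y$ (and $\frac1s\ge1$ when $y\le 1$) yields the quartic bound $\phi(y)\le-\frac34\,\frac{(y-1)^{4}}{\max\{1,y\}}$ for all $y>0$. Separately, on $(0,1]$ the function $\frac92(y-1)^{2}-y^{3}$ is decreasing (its derivative $9(y-1)-3y^{2}$ is negative there) and tends to $\frac92$ as $y\to0^{+}$, so $\phi(y)\le\frac{11}{2}+3\ln y$ on $(0,1]$. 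I will also use the elementary fact that $\sup_{w\ge0}w^{3/4}e^{-\frac{3d}{4}w}=d^{-3/4}e^{-3/4}$, and that $\sqrt d\,e^{-\beta d}\le\frac{1}{\sqrt{2\beta e}}$ for all $d>0$ and any $\beta>0$.

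Then I would split on $y$. For $y\ge\tfrac18$: one checks $\frac{\abs{y-1}^{3}}{y}\le 8\bigl(\frac{(y-1)^{4}}{\max\{1,y\}}\bigr)^{3/4}$ (this reduces to $\frac1y\le8$ when $y\le1$ and to $1\le 8y^{1/4}$ when $y\ge1$), so writing $W:=\frac{(y-1)^{4}}{\max\{1,y\}}$ the quartic bound gives $\abs{f'(x)}\le 8\sqrt d\,W^{3/4}e^{-\frac{3d}{4}W}\le 8\,d^{-1/4}e^{-3/4}\le 8\,(3/2)^{1/4}e^{-3/4}$. For $y\in(0,\tfrac18]$: here $\abs{y-1}^{3}\le1$ and $\phi(y)\le\frac{11}{2}+3\ln y$, so $\abs{f'(x)}\le\sqrt d\,e^{11d/2}\,y^{3d-1}$; since $d\ge\tfrac23$ we have $3d-1\ge1>0$, hence $y^{3d-1}\le(1/8)^{3d-1}=8\,e^{-9d\ln 2}$, giving $\abs{f'(x)}\le 8\sqrt d\,e^{-(9\ln 2-\frac{11}{2})d}$, which is bounded since $9\ln 2>\frac{11}{2}$. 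Taking $\eta$ to be the maximum of the two explicit constants so obtained (both independent of $\alpha$) completes the argument.

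The one genuinely delicate point is the region $y\to0^{+}$, i.e. $x\to(-1/c)^{+}$: the crude bound $\phi(y)\le\frac{11}{2}+3\ln y$ is only useful once $y$ is driven below $e^{-11/6}\approx0.158$, which is precisely why the cutoff $\tfrac18$ is chosen — on $(0,\tfrac18]$ the factor $y^{3d}\le e^{-9d\ln 2}$ overwhelms the spurious $e^{11d/2}$ coming from the value of the polynomial part of $\phi$ at the origin. This is also the place where the hypothesis $\alpha\ge1$ (equivalently $3d-1\ge1>0$) enters: it makes $y^{3d-1}$ increasing on $(0,\tfrac18]$ so its supremum sits at the cutoff, whereas for smaller $\alpha$ the exponent $3d-1$ could be negative and the quantity would diverge as $y\to0$.
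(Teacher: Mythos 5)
Your proof is correct, and it takes a genuinely different route from the paper's. The paper bounds $\abs{f'}$ by locating its extrema: it sets $f''=0$, reduces this to the quartic $c^2x^4-6cx-9=0$, shows there is exactly one critical point on each side of $0$, and then controls the extremal values $\eta_1(c),\eta_2(c)$ by a limit computation (change of variables $u=cx^2$, Taylor expansions) showing they tend to $0$ as $c\to 0^+$, finishing with a continuity/compactness argument over $c\in[c_0,\sqrt6]$ — so the resulting constant $\eta$ is not explicit. You instead rescale to $y=1+cx$, write $\abs{f'}=\sqrt{d}\,\frac{\abs{y-1}^3}{y}e^{d\phi(y)}$ with $\phi(y)=\frac92(y-1)^2+1-y^3+3\ln y$, and replace the critical-point analysis by two quantitative bounds: the quartic decay $\phi(y)\le-\frac34\frac{(y-1)^4}{\max\{1,y\}}$ (obtained by integrating $\phi'(y)=-\frac{3(y-1)^3}{y}$, which is exactly the paper's formula for $s'$ in the new variable), and the crude bound $\phi(y)\le\frac{11}{2}+3\ln y$ near $y=0$, combined with elementary single-variable optimizations in $W$ and in $d$. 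I checked the individual estimates (the identity $s=d\phi$, the inequality $\frac{\abs{y-1}^3}{y}\le 8W^{3/4}$ on $y\ge\frac18$, the suprema $\sup_W W^{3/4}e^{-\frac{3d}{4}W}=d^{-3/4}e^{-3/4}$ and $\sup_d\sqrt{d}e^{-\beta d}=\frac{1}{\sqrt{2\beta e}}$, and $9\ln 2>\frac{11}{2}$) and they all hold, with the hypothesis $\alpha\ge1$ entering exactly where you say, to make the exponent $3d-1\ge1$ positive near $y=0$. What your approach buys is a fully explicit, uniform constant (roughly $\eta\approx 4.2$) and no appeal to limits or compactness in the parameter $c$; what the paper's approach buys is the precise location of the maximizing points of $\abs{f'}$, at the cost of a non-constructive $\eta$. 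Either argument suffices for the way the lemma is used (via the mean value inequality in \Cref{lma:smp-beta}).
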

\begin{proof}
    Same as in \Cref{fun-lma:s-lesser-than-0}, We have
    $s'(x) = - \frac {c^2 x^3} {3 (1 + c x)}$.
    Then
    \begin{align}
        f'(x)
        &= e^{s(x)} s'(x) \nonumber\\
        &= e^{\frac 1 2 x^2 + d - d (1 + c x)^3} (1+cx)^{3d} \cdot - \frac {c^2 x^3} {3 (1 + c x)} \nonumber\\
        &= - \frac {c^2} 3  x^3 (1+cx)^{3d - 1} e^{\frac 1 2 x^2 + d - d (1 + c x)^3}  . \label{eq:f-prime}
    \end{align}
    By direct computation, we have
    \begin{align*}
        &\lim_{x \to \left(-\frac{1}{c}\right)^+} f'(x) = 0 , \\
        &\lim_{x \to +\infty} f'(x) = 0 ,
    \end{align*}
    and
    \begin{align*}
        \begin{cases}
            f'(x) > 0, & x \in (-\frac{1}{c}, 0) \\
            f'(x) = 0, & x = 0 \\
            f'(x) < 0, & x \in (0, +\infty) .
        \end{cases}
    \end{align*}

    Since $f'(x)$ is continuous and differentiable, we know that $f'(x)$ achieves its maximum value at some point in $(-\frac{1}{c}, 0)$ and its minimum value at some point in $(0, +\infty)$, both only when $f''(x) = 0$.  
    Since
    $$f''(x) = e^{s(x)} \paren{s'(x)^2 + s''(x)},$$
    $f''(x) = 0$ if and only if $s'(x)^2 + s''(x) = 0$.
    We have
    \begin{align*}
        &s'(x)^2 + s''(x) \\
        &=\frac {c^4 x^6} {9 (1 + c x)^2} - \frac {c^2 x^2 (2 c x + 3)} {3 (1 + c x)^2} \\
        &= \frac {c^2 x^2 (c^2 x^4 - 6 c x - 9)} {9(c x + 1)^2} .\\
    \end{align*}
    Thus, $f''(x) = 0$ if and only if $c^2 x^4 - 6 c x - 9 = 0$, for both $x \in (-\frac{1}{c}, 0)$ and $x \in (0, +\infty)$

    Let $w(x) = c^2 x^4 - 6 c x - 9$, then
    \[
        w'(x) = 4 c^2 x^3 - 6 c
    \]
    has only one real root $x_0 = \sqrt[3]{\frac 3 {2 c}}$, and we have
    \begin{align*}
        &w(-\frac 1 c) = \frac 1 {c^2} - 3 = 9 d -3 > 0,\\
        &w(0) = -9 < 0,\\
        &\lim_{x \to +\infty} w(x) = +\infty.
    \end{align*}
    These show that $w(x)$ has exactly one real root in $(-\frac 1 c, 0)$ and one real root in $(0, +\infty)$, denote them as $x_1$ and $x_2$ respectively. Then they are the points where $f'(x)$ achieves its maximum and minimum values, respectively, i.e. $f'(x_2) \leq f'(x) \leq f'(x_1)$.
    
    This already shows that for each fixed $c$, there exists $\eta_1(c) = f'(x_1), \eta_2(c) = f'(x_2)$ as the upper and lower bounds of $f'(x)$, respectively. Now we consider the bound over all possible $c$.
    Since $\alpha \geq 1$, we have $c = \frac {1} {\sqrt{9 \alpha - 3}} \in (0, \sqrt{6}]$.

    First, we consider the area $x \in (-\frac 1 c, 0)$.
    Let $u = c x^2$, then $x = - \sqrt{\frac u c}$. $c^2 x^4 - 6 c x - 9 = 0$ can be represented as 
    \begin{align}
        u^2 + 6 \sqrt{cu} - 9 = 0 \label{eq:u-},
    \end{align}
    which has a positive root $u_1$.
    When $c=0$ the root $u_1=3$, and $p(u, c) = u^2 + 6 \sqrt{cu} - 9$ is a continuous function, so we have
    \[
        \lim_{c \to 0^+} u_1 = 3.
    \]
    We also represent $f'(x)$ in terms of $u$ by putting $x = - \sqrt{\frac u c}$ into \cref{eq:f-prime}:
    \[
        h_1(u) = \frac 1 3  \sqrt{c u^3} (1 - \sqrt{c u})^{3d - 1} e^{\frac {u} {2c} + d - d (1 - \sqrt{c u})^3}.
    \]
    Thus, we have
    \begin{align*}
        \lim_{c \to 0^+} \eta_1(c)
        &= \lim_{c \to 0^+} h_1(u_1) \\
        &= \lim_{c \to 0^+} h_1(3) \\
        &= \lim_{c \to 0^+} \frac 1 3 \sqrt{3^3 c} (1 - \sqrt{3 c})^{3d - 1} e^{\frac 3 {2c} + d - d (1 - \sqrt{3 c})^3} \\
        &= \lim_{t \to 0^+} t (1-t)^{\frac 3 {t^4}-1} e^{\frac 9 {2t^2} + \frac 1 {t^4} - \frac 1 {t^4} (1-t)^3}
            \quad \quad \text{(Let $t = \sqrt{3 c} = \frac 1 {\sqrt[4]{d}}$)}\\
        &= \lim_{t \to 0^+} t (1-t)^{\frac 3 {t^4}-1} e^{\frac 3 {t^3} + \frac 3 {2t^2} + \frac 1 {t}} \\
        &= \lim_{t \to 0^+} t e^{\left(\frac 3 {t^4} -1\right) \ln (1-t) + \frac 3 {t^3} + \frac 3 {2t^2} + \frac 1 {t}} \\
        &= \lim_{t \to 0^+} t e^{- \left(\frac 3 {t^4} -1 \right) \left( \sum_{i=1}^4 \frac {t^i} {i} + O(t^5) \right)
            + \frac 3 {t^3} + \frac 3 {2t^2} + \frac 1 {t}}
            \quad \quad \text{(Taylor expansion)}\\
        &= \lim_{t \to 0^+} t e^{-\left( \frac 3 {t^3} + \frac 3 {2 t^2} + \frac 1 t + \frac 3 4 + O(t)  \right)
            + \frac 3 {t^3} + \frac 3 {2t^2} + \frac 1 {t}}\\
        &= \lim_{t \to 0^+} t e^{-\frac 3 4 + O(t)} \\
        &= 0.
    \end{align*}

    Then we consider the area $x \in (0, +\infty)$. Let $u = c x^2$, then $x = \sqrt{\frac u c}$. $c^2 x^4 - 6 c x - 9 = 0$ can be represented as
    \begin{align}
        u^2 - 6 \sqrt{cu} - 9 = 0 \label{eq:u+},
    \end{align}
    which has root $u_2$.
    Similarly, we can get
    \begin{align*}
        \lim_{c \to 0^+} u_2 = 3,
    \end{align*}
    represent $f'(x)$ as
    \begin{align*}
        h_2(u) = \frac 1 3 \sqrt{c u^3} (1 + \sqrt{c u})^{3d - 1} e^{\frac {u} {2c} + d - d (1 + \sqrt{c u})^3},
    \end{align*}
    and get
    \begin{align*}
        \lim_{c \to 0^+} \eta_2(c) = \lim_{c \to 0^+} h_2(3) = 0 .
    \end{align*}

    Since $\lim_{c \to 0^+} \eta_1(c) = \lim_{c \to 0^+} \eta_2(c) = 0$, there exists a constant $c_0$ such that for all $c \in (0, c_0)$, we have $\abs{\eta_1(c)}, \abs{\eta_2(c)} < 1$. Then let the constant
    \[
        \eta = \max\{1, \max_{c \in [c_0, \sqrt{6}]} \abs{\eta_1(c)}, \max_{c \in [c_0, \sqrt{6}]} \abs{\eta_2(c)} \}.
    \]
    We have $\abs{\eta_1(c)}, \abs{\eta_2(c)} \leq \eta$ for all $c \in (0, \sqrt{6}]$, which means $\abs{f'(x)} \leq \eta$ always holds.
\end{proof}

\end{document}